\newtheorem{proposition}{Proposition}
\newtheorem{theorem}{Theorem}
\newcommand{\neutralize}[1]{\expandafter\let\csname c@#1\endcsname\count@}
\newenvironment{theorembis}[1]
  {%
   \neutralize{theorem}\phantomsection
   \begin{theorem}}
  {\end{theorem}} 
\newtheorem{corollary}{Corollary}
\newtheorem{lemma}{Lemma}
\newtheorem{claim}{Claim}
\theoremstyle{remark}
\newtheorem{remark}{Remark} 
\newtheorem{definition}{Definition}
\newtheorem{notation}{Notation}
\newtheorem{example}{Example}
\newenvironment{examplebis}[1]
  {%
   \neutralize{example}\phantomsection
   \begin{example}}
 {\end{example}}
\newenvironment{examplerev}[1]
  {%
   \neutralize{example}\phantomsection
   \begin{example}}
 {\end{example}}
\newenvironment{examplebis2}[1]
  {%
   \neutralize{example}\phantomsection
   \begin{example}}
 {\end{example}}
\newtheorem*{problem*}{Open Problem}
\newcommand{\poly}{\mathrm{poly}}
\newcommand{\CK}{\mathrm{C}}
\title{Resource-Bounded Kolmogorov Complexity Provides an Obstacle to Soficness of Multidimensional Shifts}
\begin{document}

\title{Resource-Bounded Kolmogorov Complexity Provides an Obstacle to Soficness of Multidimensional Shifts}

\author{Julien Destombes \and 
Andrei Romashchenko}

\maketitle

\begin{abstract}
We suggest necessary  conditions for soficness of multidimensional shifts formulated in terms of resource-bounded Kolmogorov complexity.
Using this technique we provide examples of effective and  non-sofic  shifts  on $\mathbb{Z}^2$ with very low block complexity:
the number of globally admissible patterns of size $n\times n$ grows only as a polynomial in $n$. We also show that more conventional proofs of non-soficness for multi-dimensional effective shifts,
including the techniques of  
Pavlov [Proc.\,AMS, 141(3):987-996, 2013] and
Kass and Madden [Proc.\,AMS, 141(11):3803-3816, 2013],
can be  expressed in terms of Kolmogorov complexity with unbounded computational resources.
 \end{abstract}

\section{Introduction}

Symbolic dynamics originally appeared in mathematics as a branch of the theory of dynamical systems that studies  smooth or topological dynamical systems by discretizing the underlying space. Since the late 1930s, symbolic  dynamics became an independent field of research, see \cite{symbolic-dynamics-1938,symbolic-dynamics-1940}. A classical \emph{dynamical system}  is a space  (of states) $\cal S$ with a function $F$ acting on this space; this function represents the ``evolution rule,'' i.e., the time dependence of a configuration in the space. The central notion of the theory of dynamical systems is a trajectory --- a sequence of configurations obtained by iterating  the evolution rules, 
\[
x, F(x), F(F(x)),\ldots, F^{(n)}(x),\ldots
\]

In symbolic dynamics the  space of states reduces to a finite set (an \emph{alphabet}). The trajectories are represented by  infinite (or bi-infinite) sequences of letters over this alphabet, and the ``evolution rule'' is the shift operator acting on these sequences.
Symbolic dynamics focuses on the   \emph{shift spaces} --- the sets of bi-infinite sequences of letters (over a finite alphabet)  that are defined by a shift-invariant constraint on the factors of finite length. More precisely,  a \emph{shift space} (also called a \emph{shift}) over an alphabet 
$\Sigma$ is a non-empty subset of bi-infinite sequences over $\Sigma$ that is
 translation invariant and closed in the standard topology of the Cantor space.
Every shift can be defined in terms of \emph{forbidden finite patterns}: we fix a set of (finite) words $\cal F$ and say that a configuration (a bi-infinite sequence)
belongs to the corresponding shift ${\cal S}_{\cal F}$ if and only if it does not contain any factor from~$\cal F$. We bear in mind that  different sets of forbidden patterns can induce the same shift.

Obviously, the properties of shifts heavily depend on the corresponding set of forbidden patterns. The following three large classes of shifts
play an important role in symbolic dynamics and  computability theory:
\begin{itemize}
\item \emph{shifts of finite type} (SFT), which can be defined by a finite set of forbidden finite patterns;
\item \emph{sofic} shifts (introduced in \cite{weiss}), where the set of forbidden finite patterns is a regular language;
\item \emph{effective} (or \emph{effectively closed} shifts),  which can be defined by a computable\footnote{We can require that the set of forbidden finite patterns is \emph{decidable} or \emph{recursively enumerable}. Though formally speaking these requirements are different, they result in one and the same class of shifts. That is, if some shift is defined by a recursively enumerable set of forbidden finite patterns, then the same shift can be also defined by a decidable set of forbidden patterns.}
 set of forbidden finite patterns.
\end{itemize}
The class of effective shifts is often understood as a class of ``explicitly constructible'' shifts, so it is of  considerable interest beyond the scope of computability theory.
These three classes are distinct:
\[
[\text{the SFTs}] \subsetneqq [\text{the sofic shifts}] \subsetneqq  [\text{the effective shifts}].
\]

The  sofic shifts  can be equivalently defined as the coordinate-wise projections of configurations from an SFT:
\begin{definition}\label{d:sofic}
A shift $\cal S$ over an alphabet $\Sigma$ is \emph{sofic} if there is an SFT ${\cal S}'$ over an alphabet $\Sigma'$ and a mapping 
$\pi : \Sigma' \to \Sigma$, such that  $\cal S$ consists of the coordinate-wise projections
\[
(\ldots \pi( y_{-1}) \pi (y_0 ) \pi(y_1) \pi(y_2)\ldots)
\]
of all configurations $(\ldots  y_{-1} y_0  y_1 y_2\ldots)$ from ${\cal S}'$.
\end{definition}
There is a simple characterization of soficness. Let us say that two words $w_1$, $w_2$ are \emph{equivalent} in a shift $\cal S$, if they have the same \emph{follower sets}, i.e.,  if exactly the same half-infinite configurations occur in $\cal S$ immediately to the right of $w_1$ and to the right of $w_2$. A shift is sofic if and only if the finite patterns in this shift are subdivided in a finite number of equivalence classes (see \cite[Theorem~3.2.10]{sft}). Loosely speaking, this criterion guarantees that when we read a configuration  from the left to the right and verify that it belongs to a sofic shift,
we need to keep in mind only finite information.

The SFTs and even the sofic shifts are  rather restrictive classes of shifts with several very special properties. Not surprisingly, many important examples of
effective shifts are not sofic. Non-soficness of a shift is usually proved with some version of the pumping lemma from automata theory.

\bigskip
\noindent \textbf{Multidimensional shifts.} The formalism of shifts can be naturally extended to the grids $\mathbb{Z}^d$ for $d>1$. A shift on $\mathbb{Z}^d$
(over a finite alphabet $\Sigma$) is defined as a set of $d$-dimensional configurations $f \ :\ \mathbb{Z}^d\to \Sigma$ that are (i)~translation-invariant 
(under translations in all directions) and (ii)~closed in Cantor's topology. Similar to the one-dimensional case, the shifts can be defined in terms of forbidden finite patterns. 

The definitions of the \emph{effective shifts} (the set of forbidden patterns is computable) and  of the \emph{SFTs} (the set of forbidden patterns is finite) 
apply to the  multidimensional shift spaces directly, without any revision. The \emph{sofic shifts} on  $\mathbb{Z}^d$ are defined as in Definition~\ref{d:sofic} above (as the coordinate-wise projections of SFTs). We refer the reader to  \cite{hochman2009dynamics} for an extensive discussion of the relations between SFT, sofic, and effective shifts on  $\mathbb{Z}^d$.

For multidimensional shifts spaces, the classes of the effective shifts, the sofic shifts, and the SFTs remain distinct, though the difference between these classes is more elusive than in the one-dimensional case. 
In this paper we discuss the tools that help to reveal the reasons why one or another effective multidimensional shift is \emph{not} sofic.

The class of sofic shifts in dimension $d\ge 2$ is surprisingly wide. Besides many simple and natural examples,  there are  shifts whose
soficness follow from rather subtle considerations.
For instance,  S.~Mozes showed that the shifts  generated by (a natural class of) non deterministic  multidimensional substitutions systems
 are sofic \cite{mozes}.
L.\,B.~Westrick proved that  
the two-dimensional shift on the alphabet $\{0,1\}$ whose configurations consist of  squares of $1$s of pairwise different sizes on a background of $0$s,
is sofic; moreover, any effectively closed subshift of this shift is also sofic   \cite{westrick}.

\begin{figure}
\begin{center}
  \begin{tikzpicture}[scale=0.27,x=1cm,baseline=2.125cm,ultra thick/.style= {line width=3.0pt},]
      \foreach \x in {0,...,15} 
    {
       \path[fill=red!41,draw=black] (\x,8) rectangle ++ (1,1);

    }

 \pgfmathsetseed{4}
    \foreach \x in {0,...,15} \foreach \y in {0,...,7}
    {
        \pgfmathparse{mod(int(random*23),2) ? "black!66" : "black!7"}
        \edef\colour{\pgfmathresult}
        \path[fill=\colour,draw=black] (\x,\y) rectangle ++ (1,1);
        \path[fill=\colour,draw=black] (\x,8+8-\y) rectangle ++ (1,1);
    }
    
    \draw [draw=blue,ultra thick]  (3,2) rectangle (7,6);
    \draw [draw=blue,ultra thick]  (3,11) rectangle (7,15);
       
\end{tikzpicture}
\caption{A configuration with  mirror symmetry with respect to the horizontal red line.
 The blue squares select  two symmetric black-and-white patterns.} 
\label{f:mirror}
\end{center}
\end{figure}

On the other hand, there are several examples of effective multidimensional shifts that are known to be non-sofic. In what follows we briefly discuss two of them.

\begin{example}[the mirror shift]
\label{ex:mirror}
One of the standard  examples of a non-sofic shift is the shift of mirror-symmetric configurations on $\mathbb{Z}^2$. Let $\Sigma$ be the alphabet with
three letters (e.g., \emph{black}, \emph{white}, and \emph{red}), and the configurations of the shift are all black-and-white configurations
(without any red letter) and the configurations with an infinite horizontal line of red letters and symmetric black-and-white half-planes 
above and below this line, see Fig.~\ref{f:mirror}. We denote this shift by ${\cal S}_{\text{mirror}}$.

It is easy to see that this shift is effective (the forbidden patterns are those where the red letters are not aligned, and those where the areas above and below the horizontal red line are not symmetric). At the same time, this shift is not sofic. The intuitive explanation of this fact is as follows. Let us focus on a pair of symmetric patterns of size $n\times n$ in black and white, above and below the horizontal red line (see the blue squares in Fig.~\ref{f:mirror}).  To make sure that the configuration belongs to the shift, we must ``compare'' these  two patterns with each other.
To this end, we need to transmit the information about a pattern of size $n^2$ through its border line (of length $O(n)$). However, in
a sofic shift, the ``information flow''  across a contour of length $O(n)$ is bounded by $O(n)$ bits, and this contradiction implies non-soficness. 
For a more formal argument see, e.g.  \cite{mirror} and \cite{jeandel-hdr}, or a similar example \cite[Example~2.4]{kass-madden}.
We revisit this example in Section~\ref{s:plain-epitomes} and use it as a simple illustration of our technique.
\end{example}

\begin{example}[the high complexity shift]
\label{ex:complexity}
Let $\cal S$ be the set of all binary configurations on $\mathbb{Z}^2$ where for each $n\times n$ pattern $P$ its Kolmogorov complexity is
quadratic, $C(P)=\Omega(n^2)$. Technically, this means that no globally admissible pattern can be produced by a program of size below  $c n^2$,
for some factor $c>0$ (see the formal definition of Kolmogorov complexity below).

This shift is not sofic. This follows from two facts (proven in \cite{dls}):
\begin{itemize}
\item[(i)] For some $c<1$, the shift defined above is not empty.
\item[(ii)] In every non-empty sofic shift on $\mathbb{Z}^2$, there is a configuration where the Kolmogorov complexity of each  $n\times n$ pattern  is bounded by $O(n)$.
\end{itemize}
At the same time, this shift is obviously effective: we can algorithmically enumerate the patterns whose Kolmogorov complexity is below the specified threshold. 
In Section~\ref{s:main} we study a similar construction based on a less conventional version of Kolmogorov complexity. 
\end{example}

Note that the non-sofic shifts in the two examples above have positive entropy (the number of globally admissible patterns of size $n\times n$
grows as $2^{\Omega(n^2)}$). This is not surprising: 
the proofs of non-soficness of these shifts use the intuition about the information flows
(\emph{super-linear amount of information cannot flow through a linear contour}).  
This type of argument can be adapted for some shifts where 
the number of globally admissible patterns of size $n\times n$ grows slower than  $2^{\Omega(n^2)}$ but still  faster than $2^{O(n)}$
(see, e.g. \cite[Proposition~15]{salo}).
As it was noticed in \cite{westrick}, 
``\emph{all examples known to the author of effectively closed shifts which are not sofic were obtained by in some sense allowing elements to pack 
too much important information into a small area.}''

This type of argument was formalized as rather general sufficient conditions of non-soficness in \cite{pavlov} and  \cite{kass-madden}. 
The theorems by  Kass and Madden (\cite[Theorem~3.2.10]{kass-madden}) and Pavlov (\cite[Theorem~1.1]{pavlov}) 
apply only to the two-dimensional shifts where the number of globally admissible $n\times n$ patterns is greater than $2^{O(n)}$. 
However, there is no reason to think that this condition is necessary for non-soficness (see, e.g. the discussion in \cite[Section~1.2.2]{jeandel-hdr}).
It is instructive to observe that non-effective non-sofic shifts can have very low block complexity, see  \cite{kass-madden,ormes-pavolov}.

In this paper we propose a new approach to the proof of non-soficness. 
Our technique uses  algorithmic descriptive complexity (Kolmogorov complexity). Though the definition of the sofic shifts does not involve explicitly any resource bounded computations, we show that a suitable tool to deal with soficness is the time-bounded version of Kolmogorov complexity.
We argue that a shift on $\mathbb{Z}^2$ cannot be sofic if the essential information contained in
an $n\times n$ pattern cannot be compressed to the size $O(n)$ in \emph{bounded time}. 

The ideas behind our argument are similar to those used in   \cite{dls} and \cite{kass-madden}, and
there is no surprise that Kolmogorov complexity helps to formalize the intuition of ``information flows.''
The main new element of our approach 
is the idea of  information compression with \emph{bounded computational resources}. 
This technique applies to several shifts with very low block complexity:
we cannot ``communicate'' the essential information  about a pattern across its contour not because this information is too large,
but since we do not have enough time and space to compress it.
In particular, we provide examples of non-sofic effective shifts with  only polynomial block complexity (and thus zero entropy).

\begin{remark}\label{rem:1}
A more straightforward  approach to the measure of the ``\emph{information flow through the border line of a pattern}'' might use the notion of \emph{extender sets} introduced in  \cite{kass-madden}. Extender sets are a  natural generalization of follower sets to multi-dimensional shifts. Let $\cal S$ be a shift and $P$ be a globally admissible finite pattern in this shift. The extender set of $P$ in $\cal S$ is the set of all (co-finite) patterns $Q$ completing $P$ to a valid configuration of $\cal S$  (the support of $Q$ should be the complement of the support of $P$). 

Let $\cal S$ be a shift on $\mathbb{Z}^2$;  denote by $N_k$ the number of different extender sets   for the globally admissible patterns with a support of size $k\times k$.  (Several patterns can share one and the same extender set, so the number of extender sets might be much less than the number of all globally admissible patterns of this size). It seems natural to interpret $\log N_k$ as  the ``{information flow}'' crossing the border line of a $k\times k$ pattern and imposing the consistency  between the pattern and its complement.  As the length of the contour around a $k\times k$ pattern is $O(k)$, the naive  intuition suggests that ``a super-linear flow of information cannot cross a contour of linear length.'' This is indeed true for shifts of finite type: for every SFT on $\mathbb{Z}^2$, the number $N_k$ is not bigger than $2^{O(k)}$. However, in case of sofic shifts, this approach fails. For a sofic shift on $\mathbb{Z}^2$, the value of  $\log N_k$ can grow much faster than the length of the border line of the pattern (\cite{kass-madden} attributes this observation  to unpublished works of C.~Hoffman,  A.~Quas, and R.~Pavlov). In fact, for a  sofic shift on $\mathbb{Z}^2$,   the value of $\log N_k$ can grow even as $\Omega(k^2)$ (we mention such an example in Section~\ref{s:ordered-epitomes}). Therefore, we cannot use the asymptotic of $\log N_k$ to prove non-soficness of a multi-dimensional shift. 
This is why we need a subtler  refinement  of the  intuition of ``information flows'' in sofic shifts.

Though the \emph{cardinality} of the family of extender sets  is not enough to distinguish between sofic and non-sofic shifts, 
some finer properties of extender sets can be helpful.
Kass and Madden proposed in \cite{kass-madden} an  approach  to the proof of non-soficness based on the structure of these sets. 
Technically, they focus on the length of union-increasing chains of extender sets, which is a more appropriate parameter than the total number of these sets.
We discuss this method in Section~\ref{s:km} and show that it is can be interpreted as a special case of  our approach.
\end{remark}

The rest of the paper is organized as follows.
After recalling the  necessary definitions from symbolic dynamics and the theory of Kolmogorov complexity in Section~2, we prove in Section~3 our main technical results and construct a non-sofic effective shift with very low block complexity. 
In Section~4 we extend our technique to more general  settings; we discuss several examples and 
show that every proof of non-soficness  using the argument from  \cite{kass-madden} 
(the method of union-increasing chains of extender sets) 
can be reformulated in terms of our technique  with the resource-unbounded  Kolmogorov complexity.
For the sake of self-containedness, in Appendix we prove   Theorem~\ref{th:dls} (implicitly proven in \cite{dls}) and  Lemma~\ref{l:kass-madden} (formulated and proven in different terms in \cite{kass-madden}). We also prove in Appendix  Theorem~\ref{th:deep-shift}, which is a stronger version of Theorem~\ref{th:deep-shift-simple} (though the simpler Theorem~\ref{th:deep-shift-simple} proven in the main text of the paper is strong enough for all applications where we use it to construct effective non-sofic shifts).

\section{Preliminaries}

\paragraph{Shift spaces}
In the introduction  we have defined the notion of \emph{shift spaces} (also called \emph{shifts}). More specifically, we study \emph{shifts of finite type}, \emph{sofic shifts}, and \emph{effective shifts}. In this paper we focus on two-dimensional shifts, though all arguments can be extended to the shifts on  $\mathbb{Z}^d$ for all $d\ge 2$.

A (finite) \emph{pattern} on $\mathbb{Z}^2$ over a finite alphabet $\Sigma$ is a mapping from a (finite) subset of $\mathbb{Z}^2$ to $\Sigma$;
the domain of this mapping is the \emph{support} of the pattern. Sometimes a pattern $P$ with a support $\cal A$ is called a \emph{coloring}
of $\cal A$ (the ``colors'' are letter from $\Sigma$).

For a shift $\cal S$, we say that a pattern $P$ is \emph{globally admissible}, if $P$ is a restriction of a configuration from $\cal S$ onto some finite support.
For a shift of finite type determined by a set of forbidden patterns $\cal F$, we say that  a pattern is   \emph{locally admissible} if it contains no forbidden  patterns from $\cal F$. Similarly, two patterns with disjoint domains are \emph{locally compatible} if the union of these patterns is locally admissible.

The \emph{block complexity} of a shift is a function that gives for each integer $n>0$ the number of globally admissible patterns of size $n\times n$ 
(patterns with support $\{1,\ldots,n\}^2$) in this shift. 

If a sofic shift $\cal S$ is a coordinate-wise projection of configurations from $\hat {\cal S}$, we say that $\hat {\cal S}$ is a \emph{covering} of $\cal S$. Every sofic shift has a covering SFT that can be defined by a set of forbidden patterns $F$, where every pattern in $F$ is a pair of neighboring letters (see, e.g.  \cite{sft}).

\paragraph{Computability and Kolmogorov complexity}
In this section we recall the main  definitions of the theory of Kolmogorov complexity.
In the way that is usual in computability theory (see, e.g., \cite{nies2009computability,rogers1987theory}), we identify \emph{algorithms} with Turing machines. A partial function $f:\{0,1\}^*\to \{0,1\}^*$ is called  \emph{computable}, if there is an algorithm (a Turing machine) $\cal M$ that computes this function in the following sense:   given as the input a string $x$ from the domain of $f$,
machine $\cal M$ prints the value $f(x)$ and stops; given as the input a string $x$ such that $f(x)$ is not defined, machine $\cal M$ never stops. It is known that the class of computable functions does not change if we vary minor details of the definition of a Turing machine (we may admit machine with only one tape or  with any number of tapes, the tapes can be bi-infinite or only infinite to the right, and so on).
It is generally agreed that this definition is an adequate formalization of the intuitive idea of computability. In this paper we discuss several simple algorithms;  we give only an informal description of these algorithms and do not construct explicitly the corresponding Turing machines.

Let $U$ be a (partial) computable function. The  complexity of $x$ with respect to the description method $U$ is defined as
\[
\CK_U(x) :=  \min \{  \ |p| \ : \ U(p) = x\ \},
\]
where $|p|$ stands for the length of the binary string $p$. 
If there is no $p$ such that  $U(p) = x$, we assume that $\CK_U(x)=\infty$.
Here $U$ is understood as  a programming language; $p$ is a program that  prints  $x$; the complexity of $x$  is the length of (one of) the shortest programs $p$ that generate $x$.

The obvious problem with this definition is its dependence on $U$. The theory of Kolmogorov complexity becomes possible due to the
following invariance theorem.
\begin{theorem}[Kolmogorov, \cite{kolmogorov-three-approaches}]

\label{t:kolmogorov-invariance}
There exists a computable function  $U$ such that for any other computable function $V$ there is a constant $d$ such that
\[
\CK_U(x) \le \CK_V(x)+d
\]
for all $x$. 
\end{theorem}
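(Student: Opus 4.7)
The plan is to construct $U$ as a universal partial computable function and then read off the constant from its universality. First I would fix an effective enumeration $V_1, V_2, V_3, \ldots$ of all partial computable functions (equivalently, an effective enumeration of Turing machines). Such an enumeration is standard: every partial computable function appears as some $V_i$, and there is an algorithm that, given $i$ and $p$, simulates $V_i(p)$.

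Next I would define $U$ so that its input encodes a pair $(i, p)$ in a way that can be parsed unambiguously. Concretely, choose a prefix-free encoding $i \mapsto \bar{\imath}$ of natural numbers as binary strings (for instance, write each bit of the binary expansion of $i$ twice and append the separator $01$, or any other standard self-delimiting code). Define
\[
U(\bar{\imath}\,p) := V_i(p),
\]
leaving $U$ undefined on inputs that are not of this form. Because the code is prefix-free, $U$ can unambiguously recover $i$ and $p$ from the concatenation, and the simulation of $V_i(p)$ makes $U$ itself a partial computable function.

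Now fix any computable $V$. Since the enumeration is exhaustive, $V = V_i$ for some index $i$. Whenever $V(p) = x$ we have $U(\bar{\imath}\,p) = x$, so $\bar{\imath}\,p$ is a valid description of $x$ with respect to $U$, of length $|\bar{\imath}| + |p|$. Taking $p$ to be a shortest $V$-description of $x$ and setting $c := |\bar{\imath}|$ yields
\[
\CK_U(x) \le |\bar{\imath}| + \CK_V(x) = \CK_V(x) + c
\]
for every $x$, with $c$ depending only on $V$ (through its index $i$), not on $x$.

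The only step that requires care is the self-delimiting encoding: if one simply concatenated a plain binary representation of $i$ with $p$, the machine $U$ would not know where the index ends and the program begins, so the simulation would not be well defined. This is a standard but essential point, and once a prefix-free encoding (or, equivalently, a fixed syntactic convention for separating the index from the program) is in place, the rest of the argument is essentially a bookkeeping observation about the length of the combined description.
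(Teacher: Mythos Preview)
Your proof is correct and is the standard argument for the invariance theorem. Note, however, that the paper does not actually prove this statement: it is quoted as a classical result with a reference to Kolmogorov's original paper, so there is no proof in the paper to compare against. Your construction of a universal $U$ via a prefix-free encoding of machine indices followed by simulation is exactly the textbook proof one would expect here.
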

\begin{notation}
This $U$ is called \emph{an optimal description method}. We fix an optimal $U$ and in what follows omit the subscript in $\CK_U(x)$ and use the notation $\CK(x)$.
The value $\CK(x)$ is called the (plain)  Kolmogorov complexity of $x$.
\end{notation}

In a similar way, we  define Kolmogorov complexity in terms of programs with bounded resources (the time of computation). Let $U$ be a Turing machine;
we define the Kolmogorov complexity $\CK^{t}_{U}(x)$ as the
length of the shortest $p$ such that $U(p)$ produces $x$
in at most $t$ steps.  There exists an \emph{optimal description method} $U$ in the following sense:
for every Turing machine $V$ 
\[
\CK_{U}^{\poly(t)}(x)\le \CK_{V}^{t}(x)+O(1).
\]
For multi-tape Turing machines a slightly stronger statement can be proven:
\begin{theorem}[see~\cite{li-vitanyi-book}; the proof uses the simulation technique from  \cite{hennie-stearns}]

There exists an \emph{optimal description method} \textup(multi-tape Turing machine\textup) $U_m$ in the following sense:
for every multi-tape Turing machine $V$ \textup(with any number of tapes\textup) there exists a constant $d$ such that 
\[
\CK_{U_m}^{ct\log t}(x)\le \CK_{V}^{t}(x)+d
\]
 for all strings $x$.     
\end{theorem}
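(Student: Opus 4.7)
The plan is to construct $U$ as a fixed universal 2-tape Turing machine that simulates any given multi-tape machine $V$ with only a logarithmic-factor overhead. The overall architecture is the same as in the proof of Theorem~\ref{t:kolmogorov-invariance}, but with a careful bookkeeping of the computation time. The non-trivial ingredient I would invoke as a black box is the Hennie-Stearns simulation: there is a fixed 2-tape Turing machine that simulates $t$ steps of any given $k$-tape Turing machine in at most $O(t \log t)$ of its own steps, where the hidden constant depends only on the simulated machine.

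Concretely, I would first fix a computable prefix-free encoding $V \mapsto \langle V \rangle \in \{0,1\}^{*}$ of the multi-tape Turing machines, chosen so that $|\langle V \rangle|$ depends only on $V$. Then $U$ is defined to act on an input $p$ as follows: it reads $p$ from left to right until it has parsed a complete codeword $\langle V \rangle$, it treats the remaining suffix $q$ of $p$ as an input to $V$, and it simulates $V(q)$ by applying the Hennie-Stearns construction. Now suppose that $p^{*}$ is a shortest program witnessing $\CK_V^{t}(x)$, i.e.\ $V(p^{*}) = x$ within $t$ steps and $|p^{*}| = \CK_V^{t}(x)$. Then, for some constant $c_V$ depending only on $V$, the machine $U$ on input $\langle V \rangle p^{*}$ produces $x$ in at most $c_V \cdot t \log t$ steps, while the program has length $|\langle V \rangle| + \CK_V^{t}(x)$. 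Setting $c := \max(c_V,\, |\langle V \rangle|)$ gives $\CK_U^{c t \log t}(x) \le \CK_V^{t}(x) + c$, as required.

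The main obstacle — or rather, the key non-trivial input — is the Hennie-Stearns simulation itself: a naive universal single-tape simulation achieves only a quadratic overhead and would give the strictly weaker bound $\CK_U^{O(t^{2})}(x) \le \CK_V^{t}(x) + O(1)$. The block-hierarchy trick of Hennie and Stearns, which maintains the contents of $V$'s $k$ tapes in a layered structure on two tapes of $U$ so that each simulated step costs amortized $O(\log t)$ operations, is precisely what buys the $O(t \log t)$ bound stated in the theorem; the rest of the argument is a routine adaptation of the invariance proof and is insensitive to the concrete details of the universal simulator.
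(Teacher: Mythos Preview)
Your proposal is correct and matches the approach the paper points to: the paper does not actually prove this theorem but states it as a known result, citing \cite{li-vitanyi-book} and noting that the proof relies on the Hennie--Stearns simulation from \cite{hennie-stearns}. Your outline --- a universal machine that parses a prefix-free machine code and then applies the Hennie--Stearns two-tape simulation to achieve an $O(t\log t)$ overhead --- is exactly the standard argument those citations refer to, so there is nothing further to compare.
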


\begin{notation}
We fix such a machine $U_m$, and in the sequel use for the resource-bounded version of Kolmogorov complexity
the notation $\CK^{t}(x)$ instead of $\CK_{U_m}^{t}(x)$.
Without loss of generality we may assume that $\CK(x)\le \CK^{t}(x)$ for all $x$ and for all~$t$.
\end{notation}

It is easy to show that there exists a constant $d$ such that for all bit strings $x=(x_1\ldots x_n)\in\{0,1\}^n$ we have $\CK(x)\le n +d$
(roughly speaking, this is true because there always exists a trivial algorithm of type  ``\emph{print$(x_1\ldots x_n)$}'' where the bits $x_i$ are listed explicitly; such an algorithm can be expressed in any natural programming language, including Turing machines). At the same time, for every $n$ there exists a string $x \in\{0,1\}^n$ such that  $\CK(x)\ge n$,  because the total number of descriptions $p$ that are shorter than $n$ bits, is less than $2^n$. (Such a string $x$ is usually called \emph{incompressible}.)
More generally, in every set of $2^n$ strings there is at least one string $x$ such that $\CK(x)\ge n$.

We fix a computable enumeration of finite patterns (over a finite alphabet) that assigns a binary string (a \emph{code}) to each pattern in dimension two. 
In the sequel we take the liberty of talking about Kolmogorov complexity of finite patterns in dimension two 
(assuming the Kolmogorov complexity of the \emph{codes} of these patterns). Similarly, we can talk about Kolmogorov complexity of natural numbers (assuming the Kolmogorov complexity of the binary expansion of these numbers).

In computability theory it is common to study \emph{relativized computations}, i.e., computations with an oracle. Let $\cal O$ be an infinite binary sequence. A Turing machine  \emph{with oracle} $\cal O$ is a Turing machine with an extra read-only tape where the bits of $\cal O$ are written; the machine can read bits of the oracle when necessary and then use them in the computation on the usual read/write working tape. Though the access to an oracle can extend the power of a Turing machine, many fundamental properties of Turing machines remain valid with any oracle (in particular, the arguments based on diagonalization apply to Turing machines with an oracle). We refer the reader to the  textbooks \cite{nies2009computability,rogers1987theory} for a detailed discussion of relativized computations.

\begin{notation}
Having fixed an infinite sequence $\cal O$, we can define the relativized version of Kolmogorov complexity for machines with oracle $\cal O$ (in exactly the same way as above but with Turing machines having the access to the chosen oracle). We use the notation $\CK^{\cal O}(x)$ for Kolmogorov complexity with oracle $\cal O$; similarly, $\CK^{t,{\cal O}}(x)$ denotes Kolmogorov complexity with oracle $\cal O$ and time bound $t$.
\end{notation}

A systematic introduction to the theory of Kolmogorov can be found in \cite{li-vitanyi-book,shen-vereshchagin-book}.
In what follows we use only rather simple properties of Kolmogorov complexity. Throughout the paper,  the claims of the form $\CK^{t}(x)=m$ can be understood as an informal statement \emph{there is a program of length $m$ that prints $x$ in $t$ steps}; we avoid  statements where subtle details of the formal definition of Kolmogorov complexity become significant.

\section{High resource-bounded Kolmogorov complexity is compatible with low block complexity}
\label{s:main}

The following theorem was proven implicitly in \cite{dls}:
\begin{theorem}\label{th:dls}
In every non-empty sofic shift $\cal S$ on $\mathbb{Z}^2$ there exists a configuration $x$ such that for all $n\times n$-patterns $P$ in $x$, we have
\[
\CK^{T(n)}(P) = O(n)
\]
for a time threshold  $T(n)=2^{O(n^2)}$. 
\end{theorem}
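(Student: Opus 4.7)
By Definition~\ref{d:sofic}, $S$ is the coordinate-wise $\pi$-projection of an SFT $\hat S$ over some alphabet $\hat\Sigma$. If I can exhibit a configuration $\tilde y\in\hat S$ all of whose $n\times n$ windows $\tilde P$ satisfy $\CK^{T(n)}(\tilde P)=O(n)$, then $y=\pi\circ\tilde y\in S$ inherits the same bound: a program producing $\tilde P$ can be augmented by an $O(1)$-bit description of $\pi$ to produce the projected window within essentially the same time (applying $\pi$ cellwise costs linear overhead, which is absorbed in $T(n)=2^{O(n^2)}$). So from the outset I reduce to the case of a non-empty nearest-neighbor SFT $\hat S$.

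I would construct $\tilde y$ as the \emph{lexicographically first} configuration of $\hat S$. Fix a computable spiral enumeration $c_0,c_1,c_2,\ldots$ of $\mathbb Z^2$ starting at the origin and a total order on $\hat\Sigma$. Inductively, set $\tilde y(c_k)$ to be the least letter $s\in\hat\Sigma$ such that the partial assignment $c_0\mapsto\tilde y(c_0),\ldots,c_{k-1}\mapsto\tilde y(c_{k-1}),c_k\mapsto s$ still extends to some configuration of $\hat S$. By König's lemma applied to the tree of extendable partial assignments (which is non-empty because $\hat S\neq\emptyset$), every step admits at least one valid $s$, so $\tilde y\in\hat S$ is well-defined.

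To bound the complexity of an arbitrary $n\times n$ window $\tilde P$ of $\tilde y$, I would encode $\tilde P$ by its index in a uniformly enumerable list $\mathcal L_n$ of size $2^{O(n)}$ containing all $n\times n$ patterns that can appear in $\tilde y$. The decoder, given $n$ and an index of length $O(n)$, generates $\mathcal L_n$ in time $2^{O(n^2)}$ and outputs the indicated element. To generate $\mathcal L_n$, the decoder iterates through all locally admissible $n\times n$ patterns $P'$ (there are at most $|\hat\Sigma|^{n^2}=2^{O(n^2)}$ of them, and local admissibility is checked cellwise), and for each $P'$ tests whether $P'$ is compatible with being a window of the lex-first configuration: the test verifies that at every cell of $P'$, the letter chosen is the least one consistent with the SFT rules and with every available boundary context. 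Patterns that survive belong to $\mathcal L_n$, and the whole sweep fits in time $2^{O(n^2)}$.

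The main obstacle sits at two points. (a) The greedy construction relies on the $\Pi_1$ predicate ``the partial map extends to a configuration of $\hat S$'', so it is not computable on the nose; one must show that replacing it by a finite-horizon proxy --- ``extends to a locally admissible pattern of side $N$'' with $N$ polynomial in the window size --- is faithful for the purposes of reconstructing the window contents, and that the time $2^{O(n^2)}$ suffices for the corresponding finite checks (this is why the theorem permits an exponential in $n^2$ rather than a polynomial). (b) The crucial quantitative step is the bound $|\mathcal L_n|=2^{O(n)}$: one must argue that the lex-first rule leaves at most $O(n)$ bits of ``unforced'' freedom inside any $n\times n$ window, so that the possible window contents are parametrized by $O(n)$ bits of boundary/context information. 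Making this rigorous is the delicate point, and is exactly the place where the sofic (equivalently, SFT-covered) structure enters in an essential way --- an analogue of the one-dimensional fact that sofic shifts have a finite number of extender classes, transplanted into the two-dimensional setting via the greedy construction.
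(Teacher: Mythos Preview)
Your reduction to a nearest-neighbor SFT cover is fine and matches the paper. The gap is in the construction of the configuration and in step~(b), and it is a real gap, not just a missing detail.

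The spiral lex-first configuration is the natural 1D idea, but it does not transplant to $\mathbb{Z}^2$ in the way you need. To know the content of an $n\times n$ window centred at distance $M$ from the origin, the decoder would have to replay the spiral choices up to that window, i.e.\ resolve $\Theta(M^2)$ greedy decisions, each of which requires the $\Pi_1$ extendability test you flag in~(a). There is no visible mechanism by which the history of those decisions collapses to an $O(n)$-bit summary; the finite-extender-class fact you invoke is a 1D phenomenon and is simply false for 2D SFTs (already the full shift has $2^{\Theta(n)}$ extender classes for $n\times n$ boxes, but more to the point, the spiral ordering makes the relevant ``state'' the entire filled region, not the window boundary). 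Your proposed test ``at every cell of $P'$ the letter chosen is the least one consistent with the SFT rules and with every available boundary context'' is not a property of the window alone, so the list $\mathcal L_n$ is not well-defined without reference to position, and once it is, enumerating it in time $2^{O(n^2)}$ is hopeless.

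The paper's construction is engineered precisely so that an $O(n)$-bit boundary \emph{does} determine the interior. It builds, for each $k$, a locally admissible $(2^k{+}1)\times(2^k{+}1)$ pattern by a recursive bisection: fix the outer border, then take the lexicographically first colouring of the horizontal and vertical \emph{centerlines} that extends (together with the border) to a locally admissible filling of the whole square; this cuts the square into four half-size squares, and one recurses. Two features make this work where the spiral does not. First, every test is against \emph{local} admissibility on a finite region, hence decidable by brute force in time $2^{O(\text{area})}$; the $\Pi_1$ issue never arises. Second, by construction each ``standard'' square at every scale is a deterministic function of its own border, so any $n\times n$ window --- covered by at most four standard squares of side $\le 2n$ --- is determined by $O(n)$ bits (four borders plus the window's offset), and recovered in time $2^{O(n^2)}$ by running the recursion. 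Compactness then yields a full configuration with the same property. The hierarchical geometry, not a lex-first linear order, is the idea you are missing.
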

In \cite{dls} a weaker version of this theorem is stated:  it is claimed only that the \emph{plain} complexity of $n\times n$ patterns is $O(n)$.
However, the argument from \cite{dls} implies a bound for a \emph{resource-bounded} version of Kolmogorov complexity. 
For the sake of self-containedness, in what follows we provide a proof of this theorem in \ref{appendix-dls}.

Theorem~\ref{th:dls} guarantees that a shift is non-sofic, if in every configuration we can find a sequence of growing size patterns with super-linear Kolmogorov complexity. In the next theorem we prove the existence of such effective shifts.

\begin{theorem}\label{th:deep-shift}
For every $\epsilon>0$ and for every computable $T(n)$ there exists an effective shift on $\mathbb{Z}^2$ such that for every $n$ and 
for every globally admissible pattern $P$ of size $n\times n$, we have that
 \begin{itemize}
  \item[(i)] $\CK(P) = O(\log n)$, and
  \item[(ii)] $\CK^{T(n)}(P) = \Omega(n^{2-\epsilon})$.
 \end{itemize}
\end{theorem}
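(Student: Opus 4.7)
The plan is to construct $S$ as the orbit closure of a computable, quasi-periodic, non-periodic configuration $y\colon\mathbb{Z}^2\to\Sigma$ engineered so that every $n\times n$ sub-window of $y$ satisfies both complexity bounds, and then check that the resulting shift is effective.

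First, I would produce a computable family of hard patterns. For each $m$, let $P_m$ be the lexicographically first $m\times m$ pattern that is not produced within $T(Cm)$ steps by any program of length at most $m^{2-\epsilon}$, where $C$ is a universal constant to be fixed in the second step. Existence follows by counting, since fewer than $2^{m^{2-\epsilon}+1}$ patterns arise as such outputs, whereas there are $2^{m^2}$ candidates; the uniform description of this diagonalization gives $\CK(P_m)=O(\log m)$, while $\CK^{T(Cm)}(P_m)>m^{2-\epsilon}$ by construction.

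Second, I would embed the $P_m$'s into a single computable $y$ via a rigid hierarchical (Robinson-style) self-similar substitution with expansion factor $L$: the level-$k$ macro-tile (linear size $L^k$) is forced by local rules to carry an internal coordinate system together with the bits of $P_{L^k}$. Primitivity of the substitution makes $y$ quasi-periodic with a polynomial recurrence function $g(n)=\poly(n)$, and I would arrange that every $n\times n$ sub-window of $y$ contains at least one complete copy of some $P_m$ with $m\asymp n/C$. The shift $S=\overline{\mathrm{Orbit}(y)}$ is then effective, because a polynomial recurrence bound makes admissibility of finite patterns decidable: a pattern $P$ is admissible iff it appears in the $g(n)\times g(n)$ window of $y$ at the origin.

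Finally, I would verify the two complexity bounds. Quasi-periodicity identifies every globally admissible $n\times n$ pattern with a sub-window of $y$ inside a canonical $g(n)\times g(n)$ box, which one can specify in $O(\log n + \log g(n))=O(\log n)$ bits, giving (i). For (ii), any length-$\ell$ program that prints $P$ in time $T(n)$ can be extended with $O(\log n)$ extra bits to a program that locates and outputs the embedded copy of $P_m$, running in time $T(n)+O(n^2)\le T(Cm)$; the hardness of $P_m$ from the first step then forces $\ell\ge m^{2-\epsilon}-O(\log n)=\Omega(n^{2-\epsilon})$.

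The main obstacle is the dual requirement on the substitution skeleton: it must be rigid enough for local rules alone to force the hierarchical coordinate system (so that plain complexity stays $O(\log n)$), yet flexible enough to host a freshly diagonalized hard pattern $P_{L^k}$ at each level without creating any computable shortcut that would collapse the $T(n)$-bounded complexity. The quantitative crux is choosing $C$ and $L$ so that every $n\times n$ window provably contains some $P_m$ with $m\asymp n$ and so that the time $T(n)+O(n^2)$ needed to extract that $P_m$ from the window stays below $T(Cm)$; this is precisely what reduces the required hardness of the whole configuration $y$ to the pointwise, counting-based hardness of a single $P_m$.
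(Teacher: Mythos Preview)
Your proposal has a genuine gap at the embedding step. You ask each level-$k$ macro-tile to ``carry the bits of $P_{L^k}$'' while also being an $L\times L$ array of level-$(k-1)$ macro-tiles, each carrying $P_{L^{k-1}}$. Over a finite alphabet these requirements clash: if every level-$(k-1)$ sub-tile is (Robinson scaffolding plus) a copy of the single pattern $P_{L^{k-1}}$, then the level-$k$ tile is $L^2$ identical copies of that sub-tile, so its content is computable from $P_{L^{k-1}}$ in time $\poly(L^k)$, collapsing the time-bounded hardness you diagonalized for. If instead the $L^2$ sub-tiles are allowed to differ so that their union realizes the genuinely hard $P_{L^k}$, then there is no longer a single $P_{L^{k-1}}$, and your step~1 (one diagonalized pattern per scale) no longer matches step~2. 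Either way, the extraction argument in your last paragraph---``locate and output the embedded $P_m$''---has no well-defined $P_m$ to extract from an arbitrary unaligned window.

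The paper resolves exactly this tension by abandoning the single-pattern-per-level idea. At level $i$ it builds $\ell_i=n_{i+1}^2$ \emph{distinct} standard blocks of size $N_i\times N_i$; a level-$(i{+}1)$ block is an $n_{i+1}\times n_{i+1}$ array using each level-$i$ block exactly once, via a permutation chosen so that the full tuple of level-$(i{+}1)$ blocks has high time-bounded complexity. Hardness at each level is thus encoded in the \emph{arrangement} of lower-level blocks, not in fresh bits, so the finite-alphabet obstruction disappears. The technical engine you are missing is the paper's Lemma~\ref{l:1}: any $k$ distinct level-$i$ blocks jointly have complexity $\Omega(k\log(\ell_{i-1}!))$. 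This joint bound is what makes an arbitrary unaligned $n\times n$ window hard, through the many distinct sub-blocks it necessarily contains; a single-hard-object-per-window extraction cannot substitute for it.
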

We defer the proof of Theorem~\ref{th:deep-shift} to  \ref{appendix-deep-shift}. In what follows we prove a slightly weaker version of this theorem,
which is nevertheless strong enough for our main applications:
\begin{theorembis}{th:deep-shift}
\label{th:deep-shift-simple}
For every computable $T(n)$ there exists an effective shift on $\mathbb{Z}^2$ such that 
 \begin{itemize}
  \item[(i)]  for every $n$ and for every globally admissible pattern $P$ of size $n\times n$, we have 
    $\CK(P) = O(\log n),$ and 
    \item[(ii)] for infinitely many $n$ and  for every globally admissible pattern $P$ of size $n\times n$, we have that
    $\CK^{T(n)}(P) = \Omega(n^{1.5}).$
 \end{itemize}
\end{theorembis}

From Theorem~\ref{th:dls} and Theorem~\ref{th:deep-shift-simple} we deduce the following corollary:
\begin{corollary}
There exists an effective non-sofic shift on $\mathbb{Z}^2$ with block complexity $\poly(n)$,
i.e., with $\le\poly(n)$ globally admissible blocks of size $n\times n$.
\end{corollary}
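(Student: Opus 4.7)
The plan is to combine Theorem~\ref{th:dls} with Theorem~\ref{th:deep-shift-simple} instantiated at a sufficiently fast-growing computable time bound. Concretely, I would fix $T(n) := 2^{n^3}$ (any computable function that eventually dominates $2^{cn^2}$ for every constant $c$ would do) and let $S$ denote the effective shift produced by Theorem~\ref{th:deep-shift-simple} for this particular $T$. I claim that $S$ itself witnesses both clauses of the corollary.

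For the polynomial block complexity I invoke property~(i): every globally admissible $n\times n$ pattern $P$ admits a description of length at most $c\log n + O(1)$ under the fixed optimal description method. Since there are only $\poly(n)$ binary strings of this length, and each is a program that prints at most one pattern, the total number of globally admissible $n\times n$ blocks in $S$ is bounded by $\poly(n)$. This is essentially a counting argument and should go through in one line.

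For non-soficness I would argue by contradiction. Assume $S$ is sofic. Then Theorem~\ref{th:dls} supplies a configuration $x\in S$ and a constant $c_0$ (depending on the covering SFT of $S$) such that every $n\times n$ pattern $P$ occurring in $x$ satisfies $\CK^{2^{c_0 n^2}}(P) = O(n)$. Since $2^{c_0 n^2}\le 2^{n^3}=T(n)$ for every $n\ge c_0$, and granting more computation time can only decrease the resource-bounded complexity, this gives $\CK^{T(n)}(P)=O(n)$ for all sufficiently large $n$. On the other hand, property~(ii) of Theorem~\ref{th:deep-shift-simple} produces infinitely many $n$ for which \emph{every} globally admissible $n\times n$ pattern satisfies $\CK^{T(n)}(P)=\Omega(n^{1.5})$. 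Picking any such $n$ that is also large enough for the previous bound to apply, and evaluating both estimates on the $n\times n$ window of $x$, one obtains the impossible inequality $\Omega(n^{1.5})\le O(n)$.

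The only real subtlety, and hence the main potential obstacle, is the quantifier ordering: the constant $c_0$ hidden inside the $2^{O(n^2)}$ of Theorem~\ref{th:dls} depends on the unknown sofic covering of $S$, whereas $T$ must be fixed \emph{before} the shift $S$ is constructed by Theorem~\ref{th:deep-shift-simple}. Taking $T(n)=2^{n^3}$, instead of some tighter $2^{n^{2+\varepsilon}}$, neutralizes this issue since $2^{c_0 n^2}$ is eventually dominated by $2^{n^3}$ regardless of $c_0$. Everything else in the argument is routine bookkeeping.
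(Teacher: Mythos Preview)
Your proposal is correct and follows exactly the paper's own argument: instantiate Theorem~\ref{th:deep-shift-simple} with $T(n)=2^{n^3}$, use property~(i) to bound the block complexity by $2^{O(\log n)}=\poly(n)$, and derive non-soficness from the clash between property~(ii) and Theorem~\ref{th:dls}. Your explicit discussion of the quantifier issue (choosing $T$ before knowing the constant $c_0$ in $2^{O(n^2)}$) is a point the paper leaves implicit, but the resolution via $T(n)=2^{n^3}$ is identical.
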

\begin{proof}
We take the shift from Theorem~\ref{th:deep-shift-simple} assuming that the threshold $T(n)$ is much greater than $2^{\Omega(n^2)}$ (e.g., we can let $T(n)=2^{n^3}$). On the one hand, property~(ii) of Theorem~\ref{th:deep-shift-simple} and Theorem~\ref{th:dls}
guarantee that this shift is not sofic. On the other hand, property~(i) of Theorem~\ref{th:deep-shift-simple} implies that the number of globally admissible
blocks of size $n\times n$ is not greater than $2^{O(\log n)}$.
\end{proof}
\begin{remark}\label{r:improved-upper-bound}
Our proof of Theorem~\ref{th:deep-shift-simple} implies a stronger bound than property~(i).  In fact, instead of the bound
$\CK(P) = O(\log n)$
we can prove for  some (large enough) computable function $\hat T(n)$ that 
for every globally admissible $n\times n$ pattern $P$ in this shift we have
\begin{equation}
\CK^{\hat T(n)}(P) =O( \log n). \nonumber 
\end{equation}
\end{remark}

Now we introduce some notation.
\begin{notation}\label{d:paradoxical-shift}
For any function $T(n)$ and for any number $\lambda>0$, let  ${\cal S}_{\lambda,T}$ denote the following  shift over the alphabet $\{0,1\}$:  a configuration $x$ belongs  to ${\cal S}_{\lambda,T}$  if and only if $\CK^{T(n)}(P) \le \lambda \log n$ for every $n\times n$ pattern $P$ in $x$. 
\end{notation}

Obviously, for an integer $\lambda$ and a computable $T(n)$, the shift introduced in Notation~\ref{d:paradoxical-shift} is effective.

Remark~\ref{r:improved-upper-bound} can be rephrased as follows: there are
an integer number $\hat \lambda$ and a computable function $\hat T(n)$ such that
for the corresponding shift ${\cal S}_{\hat \lambda,\hat T}$
there exists a sequence of  globally admissible patterns $P_n$ of size $n\times n$ ($n=1,2,\ldots$) 
such that 
\[
\CK^{2^{n^3}}(P_n) = \Omega (n^{1.5});
\] 
the same time, for every globally admissible pattern of size $n\times n$
we have $\CK(P) \le \CK^{\hat T(n)}(P) =O( \log n)$.
\begin{notation}
\label{d:s_0}
Let ${\cal S}_{0}$ denote the shift defined above.
\end{notation}

The shift constructed in  Theorem~\ref{th:deep-shift-simple} is a proper subshift of ${\cal S}_{0}$.  Indeed, besides all configurations of the shift from Theorem~\ref{th:deep-shift-simple} (those configuration must have only patterns with high resource-bounded Kolmogorov complexity),  ${\cal S}_{0}$ admits  also  patterns with a very low time bounded complexity (e.g., the configuration with all $0$s and the configuration with all $1$s).
In the next section we use  ${\cal S}_{0}$  to construct some other examples of effective non-sofic shifts.

\begin{proof}[(Proof of Theorem~\ref{th:deep-shift-simple})]
In this proof we provide a construction of a required shift  assuming that we have fixed in advance a universal Turing machine in the definition of Kolmogorov complexity.
The  choice of a universal Turing machine  is not unique (and quite arbitrary), and  the final choice of parameters in our construction depends  on this arbitrariness.

Let us fix a sequence  $(n_i)$ where  $n_0$ is a large enough integer number\footnote{For every instance of an optimal Turing machine $U$ in the definition of Kolmogorov complexity one can compute the corresponding minimum value of  $n_0$ that makes the construction work.}, and 
\begin{equation}
\label{eq:th2-c}
 n_{i+1}: =(n_0\cdot \ldots \cdot  n_i)^c\ \mbox{ for } i=0,1,2,\ldots,
\end{equation}
where $c\ge 3$ is a constant. 
We set $N_i := n_0 \cdot \ldots \cdot n_i$.
In what follows we construct for each $i$ a pair of  \emph{standard} binary patterns $Q_i^0$  and $Q^1_i$ of size $N_i \times N_i$  such that
  \begin{itemize}
  \item the plain Kolmogorov complexities of the standard patterns $\CK(Q_i^0)$ and $\CK(Q_i^1)$ are not greater than $O(\log N_i)$, and
  \item the resource-bounded Kolmogorov complexities $\CK^{T(N_i)}(Q_i^0)$ and $\CK^{T(N_i)}(Q_i^1)$ are not less than $\Omega(N_i^{1.5})$.
  \end{itemize}
The construction is hierarchical:   both $Q_i^0$  and $Q^1_i$ are defined as $n_i\times n_i$ matrices composed of  patterns 
$Q_{i-1}^0$  and $Q^1_{i-1}$; for each $i$ the blocks $Q_i^0$  and $Q^1_i$  are bitwise inversions of each other.

When the standard patterns $Q_i^0$  and $Q^1_i$  are constructed for all $i$, 
we define the shift as the closure of these patterns: we say that a finite pattern  is globally admissible if and only if it appears 
in some standard pattern  $Q_i^j$  or at least in a $2\times 2$-block composed of $Q_i^0$  and $Q^1_i$  (for some $i$).
\begin{remark}\label{r:low-level}
Due to the hierarchical structure of the standard patterns, 
we can guarantee that every globally admissible  pattern $P$ of size $N_i\times N_i$ appears   
in a $2\times 2$-block composed of $Q_i^0$  and $Q^1_i$ 
(no need to try the blocks $Q_{s}^j$ for $s>i$).
\end{remark}   
Since the construction of   $Q_i^j$ is explicit, the resulting shift is effective. Properties~(i) and~(ii) of the theorem will follow from the properties of the standard patterns.
   
 In what follows we explain an inductive construction of $Q_i^0$  and $Q^1_i$. Let $Q_0^0$  and $Q^0_1$ be the squares composed of only  $0$s and only $1$s respectively. Further, for every $i$ we take the lexicographically first binary  matrix $R_i$   of size $n_i\times n_i$ such that 
\begin{equation}
\label{eq:incompressible}
  \CK^{t_i}(R_i) \ge n_i^2
\end{equation}
(the time bound $t_i$ is fixed in the sequel).
We claim  that such a matrix exists. Indeed,  there exists a matrix of size $n_i\times n_i$ that is incompressible in the sense of the plain
Kolmogorov complexity. The resource-bounded Kolmogorov complexity of a matrix can be only greater than the plain complexity.
Therefore,  there exists at least one matrix satisfying \eqref{eq:incompressible}.
If $t_i$ is a computable function of $i$, then given $i$ we can find $R_i$  algorithmically.

Now we substitute in $R_i$ instead of each zero and one entry the copies of   $Q_{i-1}^0$  and $Q^1_{i-1}$  respectively, e.g., 
\[
R_i=\left(
\begin{array}{cccccc}
0&0&0&0&1\\
0&1&0&0&1\\
1&1&1&1&0\\
0&1&1&0&0\\
0&1&0&1&0
\end{array}
\right) 
\ \Longrightarrow\ 
Q_i^0:=\left(
\begin{array}{c|c|c|c|c}
Q_{i-1}^0&Q_{i-1}^0&Q_{i-1}^0&Q_{i-1}^0&Q_{i-1}^1\\
\hline
Q_{i-1}^0&Q_{i-1}^1&Q_{i-1}^0&Q_{i-1}^0&Q_{i-1}^1\\
\hline
Q_{i-1}^1&Q_{i-1}^1&Q_{i-1}^1&Q_{i-1}^1&Q_{i-1}^0\\
\hline
Q_{i-1}^0&Q_{i-1}^1&Q_{i-1}^1&Q_{i-1}^0&Q_{i-1}^0\\
\hline
Q_{i-1}^0&Q_{i-1}^1&Q_{i-1}^0&Q_{i-1}^1&Q_{i-1}^0
\end{array}
\right) 
\]

The resulting matrix (of size $N_i\times N_i$) is denoted  $Q_i^0$. Matrix $Q_i^1$ is defined as the bitwise inversion of $Q_i^0$.

\smallskip
\noindent
\emph{Claim 1}. Assuming that $t'_i\ll t_i$  (in what follows we discuss the choice of  $t_i'$ in more detail) we have
\[
  \CK^{t'_i}(Q_i^0)  = \Omega(N_i^{1.5}) \mbox{ and }   \CK^{t'_i}(Q_i^1)  = \Omega(N_i^{1.5}). 
\]

\begin{proof}[Proof of Claim~1]
Given $Q_i^j $   (for $j=0,1$) we can  retrieve the matrix $R_i$ (this retrieval can be implemented in polynomial time). Therefore, for every time bound $t$ 
\[
\CK^{t+ \poly(N_i)} (R_i) \le \CK^{t}(Q_i^j) + O(1).
\]

Therefore, if $t_i > t_i' + \poly(N_i)$ then
\[
 n_i^2 \le \CK^{t_i}(R_i) \le  \CK^{t_i'}(Q_i^j). 
\]
It remains to observe that  our choice of parameters in \eqref{eq:th2-c} with $c\ge 3$ implies
$ n_i^{1/2} \geq \left(n_0 \cdot \ldots \cdot n_{i-1} \right)^{3/2}$, and therefore
\[
n_i^2 \geq \left(n_0 \cdot \ldots \cdot n_i\right)^{1.5} = (N_i)^{1.5}.
\]
Thus, we obtain $\CK^{t'_i}(Q_i^j)  \ge  (N_i)^{1.5} - O(1)$, and the claim is proven.
\end{proof}
\begin{remark}
By choosing a larger constant $c$ in \eqref{eq:th2-c}, we can achieve a lower bound $  \CK^{t'_i}(Q_i^j) = \Omega(n^{2-\epsilon})$ for any $\epsilon>0$.
\end{remark}
\smallskip
\noindent
\emph{Claim 2}. For every globally admissible pattern $P$ of size $N_i\times N_i$ 
(and not only for the standard patterns, as it was in Claim 1)
 its time-bounded Kolmogorov complexity $\CK^{T(N_i)}(P) $  is $\Omega(n^{1.5})$ (assuming that  $T(N_i) \ll t_i'$).

\smallskip

\begin{figure}
\begin{center}

  \begin{tikzpicture}[scale=0.27,x=1cm,baseline=2.125cm]

  
    \draw [fill=black!30, thick, dashed]  (3,6) rectangle (11,14);
 
    \foreach \x in {0,...,15} \foreach \y in {0,...,15}
    {
        \path[draw=black] (\x,\y) rectangle ++ (1,1);
    }

      \draw [ultra thick]  (0,0) rectangle (8,8);
      \draw [ultra thick]  (0,8) rectangle (8,16);
      \draw [ultra thick]  (8,0) rectangle (16,8);
      \draw [ultra thick]  (8,8) rectangle (16,16);


 \begin{scope}[shift={(20,0)}]
  
    \draw [fill=black!30, thick, dashed]  (3,6) rectangle (11,14);

    \draw [fill=red!30]  (3,6) rectangle (8,8);
    \draw [pattern=crosshatch, pattern color=red!91]  (11,6) rectangle (16,8);

    \draw [fill=blue!30]  (3,8) rectangle (8,14);
    \draw [pattern=north east lines, pattern color=blue!90]  (11,0) rectangle (16,6);

    \draw [fill=orange!30]  (8,8) rectangle (11,14);
    \draw [pattern=north west lines, pattern color=orange!90]  (8,0) rectangle (11,6);

    \foreach \x in {0,...,15} \foreach \y in {0,...,15}
    {
        \path[draw=black] (\x,\y) rectangle ++ (1,1);
    }

      \draw [ultra thick]  (0,0) rectangle (8,8);
      \draw [ultra thick]  (0,8) rectangle (8,16);
      \draw [ultra thick]  (8,0) rectangle (16,8);
      \draw [ultra thick]  (8,8) rectangle (16,16);

   \draw[-latex, ultra thick,black,dashed](6.5,12.5)node{ }  to[out=0,in=90] (14.5,2.5);
   \draw[-latex, ultra thick,black,dashed](9.1,9.5)node{ }  to[out=-135,in=135] (9.1,1.5);
   \draw[-latex, ultra thick,black,dashed](4.5,7.0)node{ }  to[out=-90,in=-135] (12.5,6.5);

\end{scope}
  
   \node at (8.0,-2.0) {(a)};
   \node at (28.0,-2.0) {(b)};

\end{tikzpicture}

\caption{A pattern of size $N_k\times N_k$ (shown in gray in fig.~(a)) covered by a quadruple of standard blocks of the same size
contains enough information to reconstruct a standard pattern (fig.~(b)).} \label{f:non-standard-block}
\end{center}
\end{figure}

\begin{proof}[Proof of Claim~2]
If a pattern $P$ of size $N_i\times N_i$ is globally admissible then it is covered by a quadruple of standard patterns of rank $i$, see Remark~\ref{r:low-level} above.
Then $P$ can be divided into four rectangles which are ``corners'' of standard patterns of rank $i$, see Fig.~\ref{f:non-standard-block}~(a). 
Since the standard blocks  $Q_i^0$   and $Q_i^1$  are the inversions of each other, these four ``corners''  (with a bitwise inversion if necessary)  form together the entire standard pattern, as shown in Fig.~\ref{f:non-standard-block}~(b).
Therefore, we can reconstruct $Q_i^j$ from $P$ given (a)~the position of $P$ with respect to the grid of standard blocks (this involves $O(\log N_i)$ bits) 
and (b)~the four bits identifying the standard blocks covering $P$ (we need to know which of them is a copy of $Q_i^0$   and which one is a copy of $Q_i^1$).

The retrieval of  $Q_i^j$ from $P$  requires only $\poly(N_i)$ steps of computation (in addition to the time we need to produce $P$).
Now the claim follows from the bound for the resource-bounded Kolmogorov complexity of the standard patterns $Q_i^0$   and $Q_i^1$.
\end{proof}

\smallskip
\noindent
\emph{Claim 3}. For every $k\times k$-pattern in $Q_i^0$ or $Q_i^1$, its plain Kolmogorov complexity is at most $O(\log k)$.
\begin{proof}[Proof of Claim~3]
First of all, we observe that the standard patterns $Q_i^0$ or $Q_i^1$ can be computed given $i$. Therefore, 
  $
  \CK(Q_i^0) = O(\log i)\mbox{ and } \CK(Q_i^1) = O(\log i).
  $

Every globally admissible $k\times k$-pattern is covered by at most four  standard patterns $Q_i^0$ or $Q_i^1$ with
\[
N_{i-1} <k \le N_{i},
\]
see Remark~\ref{r:low-level}. Therefore, to obtain a globally admissible pattern $P$ of size $k\times k$ we need to produce a quadruple of standard patterns
of size $N_{i}\times N_i$ and then to specify the position of $P$ with respect to the grid of standard blocks. This description consists of only
 $
 O(\log N_i)
 $
bits, and we conclude that 
 $
 \CK(P) = O(\log k).
 $
\end{proof}
It remains to fix the time bounds $t_i$.
Given a computable threshold $T(N_i)$, we choose a suitable $t_i'\gg T(N_i)$ and then a suitable $t_i\gg t_i'$.
The theorem follows from Claim~2 and Claim~3.
\end{proof}

\begin{remark}
For all large enough $i$, the incompressible pattern $R_i$ constructed in the proof of Theorem~\ref{th:deep-shift-simple} 
contains copies of all $2^4$ binary patterns of size $2\times 2$.
Therefore, we can guarantee that every standard block $Q_i^j$ contains all globally admissible patterns of size $N_{i-1}\times N_{i-1}$. 
\end{remark}

\label{s:no-go}

There exists a non-empty effective shift on $\mathbb{Z}^2$ where the Kolmogorov complexity of
all $n\times n$ patterns is $\Omega(n^2)$ (see \cite{dls} and \cite{rumyantsev-ushakov}). So a natural question arises: 
can we improve  Theorem~\ref{th:deep-shift-simple} and strengthen condition~(ii) to
$\CK^{T(n)}(P) = \Omega(n^{2})$? The answer is negative: we cannot achieve  the resource bounded complexity $\Omega(n^{2})$,
even with a much weaker version of property~(i) for the plain complexity:
\begin{proposition}\label{p:non-quadratic}
For all large enough time bounds $T(n)$, there is no shift on $\mathbb{Z}^2$ such that 
 \begin{itemize}
  \item[(i)]  
  for every globally admissible pattern $P$ of size $n\times n$, we have that $\CK(P) = o(n^2)$, and 
    \item[(ii)] 
   for infinitely many $n$ and  for every globally admissible pattern $P$ of size $n\times n$,  we have that  $\CK^{T(n)}(P) = \Omega(n^{2}).$
 \end{itemize}
\end{proposition}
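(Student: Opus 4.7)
The plan is to exploit the upper bound in (i) directly and show that a sufficiently generous time budget $T(n)$ forces every globally admissible pattern to be compressible to essentially its plain Kolmogorov complexity, which is $o(n^2)$, thereby contradicting (ii).

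I would begin by fixing a shift $S$ satisfying (i). Since $\CK(P) = o(n^2)$ for every globally admissible $n\times n$ pattern, there is a function $f(n) = o(n^2)$ such that $\CK(P) \le f(n)$ for all globally admissible $P$ of size $n\times n$ and all sufficiently large $n$. For every such $P$ pick a shortest-program witness $\pi_P$ of length $|\pi_P| \le f(n)$ with $U(\pi_P) = P$, and let $t_P$ denote the running time of $\pi_P$ on the reference machine. Since the set of globally admissible $n\times n$ patterns is finite, the quantity
\[
T^*(n) := \max_{P}\, t_P,
\]
where the maximum is taken over all globally admissible $P$ of size $n\times n$, is well defined and finite for every $n$. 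Whenever the time budget $T(n)$ eventually majorizes $T^*(n)$, the program $\pi_P$ is itself a valid witness for the time-bounded complexity of $P$, so $\CK^{T(n)}(P) \le |\pi_P| \le f(n) = o(n^2)$ for every globally admissible $P$ of size $n\times n$. This contradicts the lower bound $\CK^{T(n)}(P) = \Omega(n^2)$ in (ii) at any sufficiently large $n$ from the infinite set appearing there.

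The main conceptual obstacle is that $T^*(n)$ is intrinsically shift-dependent: the running times of the Kolmogorov-optimal programs for patterns in $B(n)$ cannot be bounded by any computable function uniformly in $S$, so $T^*$ behaves like a Busy-Beaver-type function of $f$. Accordingly, the phrase ``for all large enough time bounds $T(n)$'' should be read in a per-shift sense --- for each $S$ satisfying (i) there is a (possibly non-computable) threshold $T^*(n)$ such that any $T(n)$ pointwise dominating $T^*(n)$ rules out (ii). This asymmetry with Theorem~\ref{th:deep-shift-simple} also clarifies why the $\Omega(n^{1.5})$ lower bound on resource-bounded complexity produced there cannot be strengthened to $\Omega(n^2)$ while keeping a plain-complexity guarantee of $o(n^2)$: once $T$ outruns the (shift-specific) worst-case running time of the optimal programs, time-bounded complexity collapses onto plain complexity.
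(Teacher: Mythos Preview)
Your argument is essentially correct but proves less than the paper does, and the quantifier order needs a small repair. As written, your threshold $T^*(n)$ depends on the shift $S$, while the proposition asserts a single threshold after which \emph{no} shift can satisfy (i) and (ii). This is easy to fix: instead of maximizing $t_P$ over the admissible $P$, maximize over \emph{all} $|\Sigma|^{n^2}$ patterns of size $n\times n$. The resulting $T^{**}(n)$ is uniform, and for $T\ge T^{**}$ one has $\CK^{T(n)}(P)=\CK(P)$ for every $P$, so (i) immediately kills (ii). With this correction your proof is valid.

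The paper, however, takes a genuinely different route and obtains a much sharper threshold. Rather than waiting for the Kolmogorov-optimal programs to halt (which forces a Busy-Beaver-type, non-computable $T^{**}$), the paper observes that condition~(i) bounds the \emph{number} of admissible $k\times k$ blocks by $L_k\le 2^{o(k^2)}$. Any admissible $(Nk)\times(Nk)$ pattern can then be described by (a)~the literal list of the $\le L_k$ admissible $k\times k$ blocks and (b)~an $N\times N$ array of indices into that list, for a total of $L_k k^2 + N^2\log L_k$ bits; the decoder is a table lookup running in time $\poly(Nk)$. Choosing $k$ so that $\log L_k\ll k^2$ and then $N\gg 2^{o(k^2)}$ makes this description $o((Nk)^2)$, contradicting (ii) already for $T(n)=\poly(n)$. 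So the paper's threshold is a fixed polynomial rather than your non-computable $T^{**}$. Your approach buys simplicity; theirs buys the quantitative content that matters for the surrounding discussion (namely, that the obstruction appears already at modest, computable time bounds, not merely ``eventually'').
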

\begin{proof}
Assume for the sake of contradiction that such a shift exists. Let us fix a real number $\varepsilon>0$.
Condition~(i) means that there exists an integer  $k=k(\varepsilon)$ such that
the number of globally admissible $k\times k$ patterns in this shift  is 
$
 L_k \le 2^{\varepsilon k^2}. 
$

Let us estimate Kolmogorov complexity of globally admissible patterns of size $(Nk) \times (Nk)$.
For any $N$, every globally admissible  pattern $P$ of size $(Nk)\times (Nk)$ can be specified by
 \begin{itemize}
 \item the list of all globally admissible patterns of size $k\times k$, which requires $L_k \cdot k^2$ bits (with  $O(L_k \cdot k^2$) bits we can even provide a self-delimiting description of this list), and
 \item by an array of size $N\times N$ with indices of $k\times k$ blocks  that constitute $P$ (which requires $N^2 \cdot \log L_k$ bits).
 \end{itemize}
We can combine these two parts of the description in one program. Clearly, $P$ can be reconstructed from such a description in polynomial time.  Thus, it follows that for some polynomial $\poly_1(n)$
\begin{eqnarray}
\CK^{\poly_1(Nk)} (P) \le  O(L_k \cdot k^2) + N^2 \cdot \epsilon k^2. \label{eq:Nk}
\end{eqnarray}

Now we estimate Kolmogorov complexity of a globally admissible pattern $P$ of size $n\times n$ in case when $n$ is not divisible by $k$. 
Let $N$ be the integer number such that 
\[
(N-1)k  < n \le N k.
\]
Pattern $P$ can be embedded in a bigger globally admissible pattern $P'$ of size $ (Nk)\times  (Nk)$.  To describe $P$, we need to describe $P'$ and then specify the position of $P$ with respect to the borders of $P'$.
The gaps between $P$ and the borders of $P'$ cannot be bigger than $k$. Therefore,
to obtain an upper bound for Kolmogorov complexity of $P$, we only need to add $O(\log k)$ bits to \eqref{eq:Nk}.
It follows that  for some polynomial $\poly_2(n)$
\begin{eqnarray}
\CK^{\poly_2(n)} (P)& \le&
O(L_k \cdot k^2) + N^2 \cdot \varepsilon k^2 + O(\log k) \nonumber\\
&\le& \epsilon (n+k)^2 + O(L_k \cdot k^2) \nonumber\\
&\le& \epsilon n^2 + 2 \varepsilon nk+ \varepsilon k^2  + O(L_k \cdot k^2). \label{eq:n^2}
\end{eqnarray}
 For  a fixed $k$ and large enough $n$, the first term in \eqref{eq:n^2} dominates the remaining terms, and we get 
\[
\CK^{\poly_2(n)} (P) \le   2\epsilon n^2. 
\]
 Let us recall that we can prove this bound for every $\varepsilon>0$ and for all large enough $n$.
 This contradicts condition~(ii) of the proposition.
\end{proof}

\section{Epitomes}
\label{s:epitomes}
The technique from Section~\ref{s:main} does not apply to the shifts that contain very simple configurations (with low resource-bounded
Kolmogorov complexity of all patterns). In particular, it does not apply to Example~\ref{ex:mirror} from the introduction.
In this section we propose a  different technique (also based on Kolmogorov complexity) that helps to handle such shifts.
The intuitive idea behind this technique is as follows: we try to capture the ``essential'' information in each pattern (discarding irrelevant data) and then measure the plain or resource-bounded Kolmogorov complexity of an ``epitome'' extracted from this essential information.
We show that  in a sofic shift  the complexity of properly defined ``epitomes of the essential information''   must be bounded by the length of the contour around the pattern. Thus, if the this condition is violated, then the shift is not sofic.

Let us fix some notation.
We denote by $B_n$ the set $\{0,\ldots, n-1\}^2\subset \mathbb{Z}^2$ and  by $F_n$ its complement, $F_n:=\mathbb{Z}^2\setminus B_n$. We say that
two patterns with disjoint supports are \emph{compatible} (for a shift $\cal S$) if the union of these patterns is globally admissible in $\cal S$. 
In particular, a finite pattern $P$ with support $B_n$ and an  infinite pattern $R$ with support $F_n$ are compatible, if the union of these patterns is a valid configuration of the shift. 

In Section~\ref{s:plain-epitomes} we start with a simple and more restrictive definition of \emph{plain epitomes}. Then, in Section~\ref{s:ordered-epitomes} we proceed with a more general definition of \emph{ordered epitomes} and discuss several examples of proofs of non-soficness based on this technique. In Section~\ref{s:km} we compare the technique of epitomes with the technique of \emph{union-increasing chains of extender sets} proposed by Kass and Madden in \cite{kass-madden}; we show that the method  of Kass and Madden is equivalent to a special case of the method of ordered epitomes.

\subsection{Plain epitomes}\label{s:plain-epitomes}
In this section we propose a general approach to the proof of non-soficness that captures the intuition behind the standard proof from Example~\ref{ex:mirror}. We begin with a definition of ``plain'' epitome functions.
\begin{definition}\label{d:epitome}
Let $\cal P$ be the set of all patterns on $\mathbb{Z}^2$ over some finite alphabet. Let
\[
{\cal E} \ : \ {\cal P} \mapsto \{0,1\}^*
\]
be a {partial} function mapping patterns to binary strings.  We say that $\cal E$ is an \emph{epitome function} for the shift $\cal S$ if for every pattern $P$ with support $B$ that is globally admissible in $\cal S$ (in brief, $P$ is a \emph{valid pattern}) such that ${\cal E}(P)$ is defined, there exists an \emph{epitome witness} pattern $R$ on $\mathbb{Z}^2 \setminus B$ (the complement of $B$) for which

\begin{itemize}
\item[(i)] $R$ is compatible with $P$, i.e., the union of $P$ and $R$ is a valid configuration in $\cal S$, and
\item[(ii)]  for every pattern $P'$ with support $B$ that is compatible with $R$ and for which ${\cal E}(P')$ is defined, ${\cal E}(P) = {\cal E}(P')$. In other words, the epitome witness $R$ uniquely determines the value of $\cal E$ for the valid patterns on the complement of its support.
\end{itemize}
Given an epitome function $\cal E$ and a pattern $P$, we refer to ${\cal E}(P)$ as the epitome of $P$, or as the $\cal E$-epitome of $P$ if the epitome function is not clear from the context.

In general, an epitome function can be defined for patterns of any shape.
However, in this paper we only use  epitome functions $\cal E$  defined on patterns with support $B_n$ (a square if size $n\times n$) for different $n>0$.
In certain situations it can be helpful to focus on the restriction of an epitome function onto the patterns of a fixed size.
Given an epitome function $\cal E$, we write ${\cal E}_n$ for the restriction of $\cal E$ onto the patterns with support $B_n$. We  refer to the sequence ${\cal E}_n$  ($n  = 1,2,\ldots$) as a \emph{family of restricted epitome functions for} $\cal S$.
If an epitome function is defined only for patterns with supports $B_n$ (which is always the case in this paper),
a family of restricted epitome functions ${\cal E}_n$ for all $n>0$ and an epitome function ${\cal E}$ are two equivalent languages to describe one and the same object.

We say that $\cal E$ is \emph{computable} if there is an algorithm that computes $\cal E$. As is customary in computability theory, we assume that an algorithm computing a partial function returns no result for inputs outside its domain. 
If, moreover,  there is an algorithm computing the value ${\cal E}(P)$ in time $2^{O(n^2)}$ for each input pattern $P$ of size $n\times n$ ($n=1,2,\ldots$) from the domain of $\cal E$, then we say that this  epitome function is exp-time computable.

Similarly, we say that an epitome function is \emph{computable with oracle $\cal O$} if 
the domain of  ${\cal E}$ is decidable with oracle $\cal O$ and
there is an algorithm  that computes  ${\cal E}$ given the access to oracle $\cal O$. We can also talk about  epitome functions that are \emph{exp-time computable with an oracle}.
\end{definition}

\begin{remark}
If an epitome function is exp-time computable, then the domain of ${\cal E}$ can be decided in exponential time:
it is enough to run the algorithm computing  ${\cal E}$  and abort it if the computation does not converge in due time. 
\end{remark}

\begin{remark}\label{r:abuse}
Formally speaking,  ${\cal E}$-epitomes are binary strings --- objects  for which we have defined Kolmogorov complexity. In  applications, it can be convenient to assign the values of an epitome function to integer numbers, to vectors of integer numbers, or to multi-dimensional finite patterns over a finite alphabet. All these constructive objects can be represented by finite binary strings (with a naturally chosen encoding); so in some cases we  abuse  notation and talk about epitomes that are  integer numbers (or vectors of integer numbers, or finite patterns, and so on) and  their Kolmogorov complexities.
\end{remark}

\begin{proposition}\label{p:epitomes} 
(a) For every sofic  shift with a computable epitome function ${\cal E}$,  for every globally admissible pattern $P$ of size $n\times n$ 
such that ${\cal E}(P)$ is defined, we have
 $
 \CK({\cal E}(P)) = O(n).
 $

(b) For every sofic shift with an exp-time computable epitome function ${\cal E}$, for every globally admissible pattern $P$ of size $n\times n$ such that ${\cal E}(P)$ is defined,  we have
 $
 \CK^{T(n)}({\cal E}(P)) = O(n)
 $
for a time threshold $T(n)=2^{O(n^2)}$.

(c) The statements (a) and (b) relativize. That is, for every sofic  shift with an epitome function ${\cal E}$ that is \emph{computable with oracle} $\cal O$ (or \emph{exp-time computable with oracle}  $\cal O$),  for every globally admissible pattern $P$ of size $n\times n$ such that ${\cal E}(P)$ is defined,  we have
 $
 \CK^{\cal O}({\cal E}(P)) = O(n)
 $
 (or, respectively,   $\CK^{T(n), \cal O}({\cal E}(P)) = O(n)$ for a time threshold $T(n)=2^{O(n^2)}$).

\end{proposition}

\begin{proof}
(a) Assume  that $\cal S$ is a sofic shift with a covering SFT $\hat {\cal S}$ (i.e., $\cal S$ is a coordinate-wise $\pi$-projection of $\hat {\cal S}$). 
Let $P$ be a pattern with support $B_n$  in $\cal S$ and $R$ be the pattern on the complement of $B_n$ that enforces the value 
of  the ${\cal E}$-epitome of $P$ (as specified in Definition~\ref{d:epitome}). Denote by $Y$ a configuration in $\hat {\cal S}$ whose $\pi$-projection gives the union of $P$ and $R$.
Let $Q$ be a pattern of size $n\times n$ in $Y$ such that $P$ is a coordinate-wise projection of $Q$.  
Denote by $\partial Q$ the ``border zone,'' i.e., the nearest neighborhood of $Q$ (the border zone of $Q$ is not included in $Q$; if $Q$ is an $n\times n$ square, then border zone $\partial Q$ around $Q$ consists of $4(n+1)$ letters, see Fig.~\ref{f:sft2sofic}).

We assume without loss of generality that the local constraints in $\hat {\cal S}$ involve only pairs of neighboring nodes in $\mathbb{Z}^2$. Then, every locally admissible pattern $Q'$ of size $n\times n$ that is locally compatible with border zone $\partial Q$, must be compatible with the rest of  configuration $Y$. Therefore, the $\pi$-projections of these $Q'$ are compatible with $R$. 
We  apply the definition of an epitome function: if  ${\cal E}$ is defined on  the $\pi$-projection of such $Q'$,  then  the epitome  of  $Q'$ must be equal to the epitome of  $P$.

\begin{figure}
\begin{tikzpicture}[scale=.6,on grid]

   \begin{scope}[
           yshift=-95,every node/.append style={
           yslant=0.5,xslant=-1},yslant=0.5,xslant=-1
           ]
       \draw[step=4mm, black] (0,0) grid (5,5);
       \draw[black,thick] (0,0) rectangle (5,5);
              \foreach \x in {-1,...,4} \foreach \y in {-1,...,4} 
           \fill[black!50] (1.65+0.4*\x,1.95+0.4*\y) rectangle (1.95+0.4*\x,1.65+0.4*\y); 
   \end{scope}

   \begin{scope}[
           xshift=0.0,yshift=0.5,every node/.append style={
           yslant=0.5,xslant=-1},yslant=0.5,xslant=-1
           ]
       \fill[white,fill opacity=0.80] (0,0) rectangle (5,5);
       \draw[step=4mm, black] (0,0) grid (5,5); 
       \draw[black,thick] (0,0) rectangle (5,5);

       \foreach \x in {-1,...,4} \foreach \y in {-1,...,4} 
           \fill[blue!70] (1.65+0.4*\x,1.95+0.4*\y) rectangle (1.95+0.4*\x,1.65+0.4*\y);

    \foreach \x in {-1,...,6} 
       \fill[red!41] (1.25+0.4*\x,1.55-0.4) rectangle (1.55+0.4*\x,1.25-0.4);
    \foreach \x in {-1,...,6} 
       \fill[red!41] (1.25+0.4*\x,1.55+6*0.4) rectangle (1.55+0.4*\x,1.25+6*0.4);
    \foreach \x in {-1,...,6} 
       \fill[red!41] (1.25-0.4+0*\x,1.55+\x*0.4) rectangle (1.55-0.4+0*\x,1.25+\x*0.4);
    \foreach \x in {-1,...,6} 
       \fill[red!41] (1.25+2+0.4,1.55+\x*0.4) rectangle (1.55+2+0.4,1.25+\x*0.4);
   \end{scope}

   \draw[-latex,ultra thick,black](-3,5)node[left]{$n\times n$ pattern $Q$}
       to[out=0,in=90] (0,2.6);
   \draw[-latex,ultra thick,black](-4,4)node[left]{border zone  $\partial Q$ }
       to[out=0,in=90] (-.8,1.4);

   \draw[thick,black!30,dashed] (2.0,2.4)  -- (2.0,1.0);
   \draw[-latex,thick,black,dashed](2.0,1.0)  to (2.0,-1.0);
   \node[color=black] at (2.4,0.85) {$\pi$};
   \node [draw,circle,inner sep=0.5pt,fill] at (2.0,2.4) {};

   \draw[-latex,ultra thick,black](-4,-3)node[left]{$n\times n$ pattern $P$}
       to[out=0,in=200] (-0.5,-1.3);
   \draw[thick,black](6,4.5) node {a configuration in an SFT};
   \draw[thick,black](6.3,-2.8) node {a configuration in a sofic shift};

\end{tikzpicture}
\caption{Projection of an $n\times n$ pattern from an SFT onto a sofic shift.} \label{f:sft2sofic}
\end{figure}

It follows that ${\cal E}(P)$ can be computed  given only the coloring of   $\partial Q$:  we use the brute-force search to find  all patterns  $Q'$ that are locally compatible with $\partial Q$, apply to them projection $\pi$, and run in parallel the computation of  their epitomes. As soon as we find at least one $Q'$ compatible with  $\partial Q$ such that the epitome function is defined on its $\pi$-projection, we are done:
though the  projection $\pi(Q')$ can be different from $P$, the epitome of $\pi(Q')$  must coincide with the epitome of $P$. Since the size of $\partial Q$ is linear in $n$, we conclude that 
$
\CK({\cal E}(P)) = O(n).
$

To prove claim~(b), we observe that the brute force search over all patterns of size $n\times n$ runs in time $2^{O(n^2)}$, so we get
$
\CK^{2^{O(n^2)}}({\cal E}(P)) = O(n).
$
It is easy to verify that the proof relativizes (each step of the argument works with any fixed oracle $\cal O$), so we get claim~(c).
\end{proof}

Proposition~\ref{p:epitomes} gives a necessary condition for  soficness. To prove that a shift is not sofic, we need to provide a computable
(or an exp-time computable) epitome function with high Kolmogorov complexities (or, respectively, resource-bounded Kolmogorov complexities).
In what follows we discuss simple examples of application of this technique. We start with the standard example of mirror symmetric shift.

\begin{examplerev}{ex:mirror}
\label{ex:mirror-revisited}
 Let ${\cal S}_{\text{mirror}}$ be the shift from Example~\ref{ex:mirror} in the introduction (the mirror-symmetric configurations).
For this example we can define  an epitome function ${\cal E}$ as follows:
\begin{itemize}
\item if an $n\times n$ pattern $P$ contains only black and white letters, then ${\cal E}(P)$ is a binary string of length $n^2$
that identifies $P$ uniquely (roughly speaking, ${\cal E}$ does not compress the patterns in black and white);
\item  ${\cal E}$ is undefined for all patterns containing at least one red letter.
\end{itemize}
The defined  ${\cal E}$ satisfies the definition of a computable epitome function, since the part of a configuration above the red line determines uniquely all black-and-white patterns below this line. 
More specifically, for every $n\times n$ pattern $P$  that contains only black and white letters,  
let $R$ be the infinite pattern defined as follows:
\begin{itemize}
\item the domain of $R$ is the complement of the $n\times n$ domain of $P$,
\item the horizontal line in $R$ that is one row above $P$ must consist of only red letters,
\item the positions that are symmetric (with respect to the red line) to the positions with  black letters in $P$ must be  black,
\item all other letters in $R$ must be white,
\end{itemize}
as shown in Fig.~\ref{f:p-and-r}. It is clear that $R$ is compatible with $P$ and not compatible with other $n\times n$ patterns,
as required in the definition of an epitome function.

Since there are $2^{n^2}$ patterns of size $n\times n$ with black and white letters, for some patterns of size $n\times n$ we have
 $
 \CK({\cal E}(P)) \ge n^2.
 $
Therefore, we can apply Proposition~\ref{p:epitomes}~(a) and conclude that the shift is not sofic.

\begin{figure}[H]

  \centering
  \begin{tikzpicture}[scale=0.25,x=1cm,baseline=2.125cm]

  \pgfmathsetseed{1}
    \foreach \x in {1,...,12} \foreach \y in {1,...,7}
    {
        \pgfmathparse{mod(int(random*23),2) ? "black!10" : "black!66"}
        \edef\colour{\pgfmathresult}
        \path[fill=black!10,draw=black] (\x,7-\y) rectangle ++ (1,1);
        \path[fill=black!10,draw=black] (\x,7+\y) rectangle ++ (1,1);
    }

    \foreach \x in {6,...,10} \foreach \y in {1,...,5}
    {
        \pgfmathparse{mod(int(random*23),2) ? "black!10" : "black!66"}
        \edef\colour{\pgfmathresult}
        \path[fill=\colour,draw=black] (\x-20,7-\y) rectangle ++ (1,1);
        \path[fill=\colour,draw=black] (\x,7-\y) rectangle ++ (1,1);
        \path[fill=\colour,draw=black] (\x,7+\y) rectangle ++ (1,1);
    }

     \foreach \x in {1,...,12} 
    {
        \path[fill=red!41,draw=black] (\x,7) rectangle ++ (1,1);
    }

\path[fill=white,draw=blue, ultra thick] (6,5-3) rectangle ++ (5,5);
\path[draw=orange, ultra thick, dashed] (6,4-3+7+0.1) rectangle ++ (5,5);
\path[draw=blue, ultra thick] (6-20,5-3) rectangle ++ (5,5);

    \draw[-latex,ultra thick,blue!95](-18,10.5)node[left]{$P$}    to[out=0,in=90] (-11.5,7.3);    
    \draw[-latex,ultra thick,blue!95](20,3.5)node[below]{$R$}    to[out=90,in=0] (13.2,7.5); 

    \draw[-latex,ultra thick,blue!95, dashed](-11.5,3.8)node[below]{}    to[out=-30,in=-135] (8.0,4.2);

\end{tikzpicture}

\caption{An  $n\times n$ pattern $P$ with black and white letters and the corresponding pattern $R$ with the complementary domain.
The orange (dashed) frame contains the pattern that is mirror symmetric to $P$.} 
\label{f:p-and-r}

\end{figure}
\end{examplerev}

\begin{example}[Mirror shift with  low plain Kolmogorov complexity]
\label{ex:mirror-low-complexity}
 Let us consider a subshift of ${\cal S}_{\text{mirror}}$ from Example~\ref{ex:mirror}:
we still admit only symmetric configurations, but we now allow only those $n\times n$ patterns $P$ 
in black and white that are globally admissible for the shift ${\cal S}_{0}$  introduced in Notation~\ref{d:s_0},
p.~\pageref{d:s_0}.
 A non-degenerate configuration of the new shift looks as follows:
there is an infinite horizontal line composed of red letters, and  two symmetric half-planes (filled with black and white letters) above and below this line,  with $n\times n$ patterns $P$ such that
\[
 \CK^{\hat T(n)}(P) \le \hat \lambda \log n
\]
(the choice of $\hat \lambda$ and $ \hat T(n)$ is explained before Notation~\ref{d:s_0}).
 The new shift is effective, and the number of globally admissible patterns  is at most
$
2^{O(\log n)}=\poly(n).
$
We know also  that \emph{some} globally admissible $n\times n$ patterns in this shift satisfy
\begin{equation}\label{eq:super-linear-lower-bound}
\CK^{2^{n^3}}(P) =\Omega(n^{1.5}). 
\end{equation}

We cannot apply Theorem~\ref{th:dls}  directly and conclude that the new shift is non-sofic. 
The obstacle  is that this shift  admits some patterns with very
low time-bounded complexity; for example, the shift admits the configuration with an infinite horizontal line in red and only white letters above and below this line.  Proposition~\ref{p:epitomes}~(a) also does not apply since the plain Kolmogorov complexity of all admissible patterns (and, therefore, of the epitomes of these patterns) is logarithmic. However, in this case we can use exp-time computable epitome functions.

The functions ${\cal E}$ defined for Example~\ref{ex:mirror} (see p.~\pageref{ex:mirror-revisited}) provide for this shift an \emph{exp-time computable}  epitome function. 
For some (though not for all) $n\times n$ patterns $P$ we have \eqref{eq:super-linear-lower-bound},
so it follows from Proposition~\ref{p:epitomes}~(b) that the shift is not sofic.
\end{example}

\begin{remark}
In the definition of epitome functions we permit that ${\cal E}$ can be defined for non-admissible patterns. 
In the next example we show that this possibility  can be useful.
\end{remark}
\begin{figure}
\centering

  \begin{tikzpicture}[scale=0.27,x=1cm,baseline=2.125cm]
  
  \pgfmathsetseed{5}
    \foreach \x in {1,...,8} \foreach \y in {1,...,8}
    {
        \pgfmathparse{mod(int(random*23),2) ? "black!10" : "black!66"}
        \edef\colour{\pgfmathresult}
        \path[fill=\colour,draw=black] (\x,\y) rectangle ++ (1,1);
    }
    
     \foreach \x in {1,...,8}
    {
    \path[fill=red!41,draw=black] (\x,0) rectangle ++ (1,1);
    \path[fill=red!41,draw=black] (\x,9) rectangle ++ (1,1);    
    }    

     \foreach \y in {0,...,9}
    {
    \path[fill=red!41,draw=black] (0,\y) rectangle ++ (1,1);
    \path[fill=red!41,draw=black] (9,\y) rectangle ++ (1,1);    
    }

\end{tikzpicture}
  \caption{An example of a special pattern of size $n\times n$ from Example~\ref{ex:busy-beaver}:  the border line (included in the pattern) consists of red letters, and the remaining $(n-2)\times (n-2)$ letters can be only black or white.}
  \label{fig:special}
\end{figure}  
\begin{example}
\label{ex:busy-beaver}
In this example we  use again the alphabet of three letters (\emph{red}, \emph{black}, and \emph{white}). Before we define a shift on this alphabet, we need to introduce some terms.
We say that a pattern  $P$ of size $n\times n$ is \emph{special} if it contains only red letters on the border line,  and only black and white letters in the internal points, as shown in Fig.~\ref{fig:special}. It is convenient to require that a special pattern contains at least one black letter, so there are $2^{(n-2)^2}-1$ special patterns of size $n\times n$.
For every $n$ we fix a one-to-one correspondence between the special patterns of this size and the binary strings of length smaller than $(n-2)^2$. We denote  by $x_P$ the binary string corresponding to pattern $P$.

Let $U$ be the computable function from the definition of Kolmogorov complexity (see Theorem~\ref{t:kolmogorov-invariance}). Fix an algorithm computing $U$. It is known that  function $U$ is partial, so the algorithm computing $U$ diverges on some inputs. 
For every integer number $m> 0$ we denote by $\hat p_m$ the string of length smaller than $m$ such that  the computation of $U(\hat p_m)$ converges but takes longer than the computation of $U(p)$ for any other $p$ with $|p|<m$. In case of a tie (if there are several $p$ requiring the same number of steps to complete the computation) we chose $\hat p_m$ that is lexicographically biggest.

Now we define a shift on $\mathbb{Z}^2$  by the following rule: for every $n$, a special pattern $P$ of size $n\times n$ is globally admissible if   $U(x_P)$ is not defined  or if $x_P = \hat p_{(n-2)^2}$ (i.e., if the pattern corresponds to the longest converging computation of $U$ among all inputs of length less than $(n-2)^2$). There is no other constraints:  a configuration belongs to the shift if and only if it does not contain any  special pattern forbidden by this rule.

This shift is effective:  the set of inputs $p$ where the computation of $U(p)$ converges is recursively enumerable and, therefore, so is  the  set of forbidden special patterns. Now we prove that this shift is not sofic.
To this end, we define an epitome function ${\cal E}$ as follows: 
\begin{itemize}
\item for every special pattern $P$ of size $n$ we let  ${\cal E}(P)= P$ if  $U(x_P)$  is defined;
\item  ${\cal E}$ is not defined for other patterns.
\end{itemize}
This mapping ${\cal E}$ is obviously computable.
Observe that  ${\cal E}$ is defined for many inadmissible patterns and  for one and only one  globally admissible pattern of size $n\times n$ (namely, for the pattern representing the longest-running computation of $U$).  In this situation, the definition of an epitome function is trivially satisfied.

Given $\hat p_{(n-2)^2}$, we can find the maximal number of steps required for a computation of $U(p)$ to converge  for $|p| <(n-2)^2$.
This information permits to  find all converging computations of $U(p)$ for $|p| <(n-2)^2$,  to reveal all strings $x$ with Kolmogorov complexity strictly smaller than $(n-2)^2$, and then to take the first incompressible string of length $(n-2)^2$. 
In other words, given $\hat p_{(n-2)^2}$ we can   find an object with Kolmogorov complexity at least $(n-2)^2$. Hence, 
Kolmogorov complexity of $\hat p_{(n-2)^2}$ itself is  close to $(n-2)^2$.
Thus, for the only $n\times n$ pattern $P$ that is globally admissible and where  ${\cal E}(P)$ is defined, we have
\[
\CK({\cal E}(P)) = \Omega(n^2).
\]
It follows from Proposition~\ref{p:epitomes}~(a) that the shift is not sofic. 

Let us note in conclusion that we used in this example epitome functions defined on many non-admissible patterns, and that the domain of ${\cal E}$  is recursively enumerable but not decidable. These features seem to be essential for this argument.
\end{example}

\subsection{Ordered epitomes}\label{s:ordered-epitomes}
The argument based on Definition~\ref{d:epitome} does not apply to \cite[Example~2.5]{kass-madden} and similar shifts. To handle this class of non-sofic shifts, we introduce a slightly more general definition of epitome functions:
\begin{definition}
\label{d:ordered-epitome}
Let $\cal P$ be the set of all patterns on $\mathbb{Z}^2$ over some finite alphabet, 
 $\preccurlyeq$ be a partial order on $\{0,1\}^*$,  and
\[
{\cal E} \ : \ {\cal P} \mapsto \{0,1\}^*
\]
be a \emph{partial} function. We say that ${\cal E}$ (together with the  assigned partial order $\preccurlyeq$)  is an \emph{o-epitome function} (a function computing \emph{ordered} epitomes) for a shift $\cal S$, 
if for every globally admissible pattern $P$ with support $B$  such that  ${\cal E}(P)$ is defined, there exists a pattern $R$ on $F = \mathbb{Z}^2\setminus B$
such that
\begin{itemize}
\item[(i)] $R$ is compatible with $P$, i.e., the union of $P$ and $R$ provides a valid configuration in $\cal S$, and
\item[(ii)]  for every pattern $P'$  on support $B$ compatible with $R$, if ${\cal E}(P')$ is defined  then
\[
{\cal E}(P')  \preccurlyeq  {\cal E}(P) 
\]
\end{itemize}
\textup(i.e., this configuration $R$ on the complement of $B$ determines the  maximum of the ${\cal E}$-epitomes over all valid $P'$\textup).

To denote  an o-epitome function (a mapping considered together with a partial order) we write $({\cal E}, \preccurlyeq)$. 
We say that values of $\cal E$  are \emph{ordered  epitomes} or \emph{o-epitomes}.

Similarly to the case of plain epitomes,  in general, ordered epitomes can be defined for patterns of any shape. However, in this paper we use only o-epitome functions defined on patterns with support $B_n$ (square of size $n\times n$) for $n>0$. It can be helpful to consider the restriction of an o-epitome function onto the patterns of a fixed size.
Given an o-epitome function $(\cal E, \preccurlyeq)$, we write ${\cal E}_n$ and and $ \preccurlyeq_n$ for the restrictions of $\cal E$ and $ \preccurlyeq$ respectively onto the patterns with support $B_n$. We  refer to the sequence $({\cal E}_n,\preccurlyeq_n)$  ($n  = 1,2,\ldots$) as a \emph{family of restricted o-epitome functions}\footnote{%
If an  o-epitome function $({\cal E},  \preccurlyeq)$ is defined only for patterns with supports $B_n$, then it is determined uniquely by the corresponding family of restricted o-epitome functions $({\cal E}_n,\preccurlyeq_n)$.}.

We say that an o-epitome function $(\cal E,  \preccurlyeq)$  is \emph{computable} 
if there are algorithms  that compute the mapping ${\cal E}$ and the relations $\preccurlyeq$. 
If, moreover,  ${\cal E}$ and $\preccurlyeq$ are computable in time $2^{O(n^2)}$ (for patterns of size $n\times n$), we say that this  o-epitome function is \emph{exp-time computable}. 
As usual, an algorithm computing  a partial function must not give any result for inputs outside its domain.

Similarly, we say that an o-epitome function is \emph{computable with oracle $\cal O$} if there are algorithms  that 
can compute the mappings ${\cal E}$ and the relations $\preccurlyeq$ given the access to oracle $\cal O$. 
We can also talk about o-epitome functions that are \emph{exp-time computable} with an oracle.
\end{definition}

Plain epitome functions from Definition~\ref{d:epitome}  can be viewed as a special case of Definition~\ref{d:ordered-epitome}.
If ${\cal E}$ is a computable (or exp-time computable) epitome in the sense of  Definition~\ref{d:epitome}
and if $\preccurlyeq$ is an arbitrary effectively  computable (exp-time computable) order on the ${\cal E}$-epitomes, then $({\cal E},\preccurlyeq)$ is a computable (or, respectively, an exp-time computable)  o-epitome function in the sense of Definition~\ref{d:ordered-epitome}.
In Definition~\ref{d:epitome} the neighborhood $R$ enforces the exact value of  ${\cal E}(P')$  over all  $P'$ compatible with $R$, 
while in  Definition~\ref{d:ordered-epitome} $R$ enforces only the maximum of ${\cal E}(P')$.

Remark~\ref{r:abuse} applies to the o-epitome functions: we  may abuse  notation and talk about ordered epitomes that are  integer numbers (or vectors of integer numbers, or finite patterns, and so on) and  their Kolmogorov complexities.

\begin{proposition}\label{pbis:epitomes} 
(a) For every sofic  shift with a computable o-epitome function $({\cal E},\preccurlyeq)$,  for every globally admissible pattern $P$ of size $n\times n$ such that ${\cal E}(P)$ is defined, we have
 $
 \CK({\cal E}(P)) = O(n).
 $
 
(b) For every sofic shift with an exp-time computable  o-epitome function $({\cal E},\preccurlyeq)$, for every globally admissible pattern $P$ of size $n\times n$ such that ${\cal E}(P)$ is defined, we have
 $
 \CK^{T(n)}({\cal E}(P)) = O(n)
 $
for a time threshold $T(n)=2^{O(n^2)}$.

(c) The statements (a) and (b) relativize. That is, for every sofic  shift with an o-epitome function $({\cal E},\preccurlyeq)$ that is \emph{computable with oracle} $\cal O$ (or \emph{exp-time computable with oracle}  $\cal O$),  for every globally admissible pattern $P$ of size $n\times n$ such that ${\cal E}(P)$ is defined,  we have
 $
 \CK^{\cal O}({\cal E}(P)) = O(n)
 $
 (or, respectively,   $\CK^{T(n), \cal O}({\cal E}(P)) = O(n)$ for a time threshold $T(n)=2^{O(n^2)}$).

\end{proposition}
\begin{proof}
We reuse  the notation from the proof of Proposition~\ref{p:epitomes}: $\cal S$ denotes a sofic subshift, $\hat {\cal S}$ denotes an SFT covering $\cal S$, and $\pi$ is a letter-to-letter projection mapping $\hat {\cal S}$ onto $\cal S$. We may assume that the local constraints in $\hat {\cal S}$ involve only pairs of neighboring nodes in $\mathbb{Z}^2$.
For an $n\times n$ pattern $P$ in $\cal S$, we denote by $Q$ the $n\times n$ pattern in $\hat {\cal S}$ projected on $P$, and $\partial Q$ denotes the \emph{border zone} (nearest neighborhood) around $Q$ consisting of $4(n+1)$ letters, as shown in Fig.~\ref{f:sft2sofic}. We call such patterns $\partial Q$   $n$-\emph{border zones}.

\smallskip

We start with a proof of (b), which is very similar to the proof of Proposition~\ref{p:epitomes}. In the previous proof, it was enough to find \emph{at least one} pattern $Q'$ compatible with the given border zone $\partial Q$ such that the epitome of $\pi(Q')$ is defined; then computing the epitome of $\pi(Q')$ we obtain the epitome of $P$. Now we should find \emph{all} patterns $Q'$ compatible with $\partial Q$, apply to each of them the projection $\pi$, compute their epitomes (for those patterns where ${\cal E}$ is defined), and then take the maximum of the obtained results. 
The fact that ${\cal E}$ are partial functions provides no obstacle. Indeed, we assumed that the o-epitome function is computable in exponential time; so we can safely abort the computations  that do not converge in due time.
It only remains to observe that the exhaustive search over all patterns of size $n\times n$  requires to examine $2^{O(n^2)}$ possibilities, which can be performed in exponential time.

\smallskip

We cannot prove~(a) in the same way because  the computation of epitomes can take arbitrary long time, and we cannot find algorithmically the finalized list of ${\cal E}(P')$ for all $P'$ compatible with $\partial Q$.  In this case we need a more involved argument.

\begin{claim}\label{claim:1}
The number of $n$-border zones $\partial Q$ that are globally admissible in $\hat {\cal S}$ is not greater than $2^{O(n)}$.
\end{claim}
\begin{proof}
This claim is pretty trivial: the domain of such a border zone consists of $4(n+1)$ nodes, and the number of all  patterns
of this size (admissible or not) is at most $|\Sigma|^{4(n+1)}$, where $\Sigma$ is the alphabet of $\hat {\cal S}$.
\end{proof}

\begin{claim}\label{claim:2}
The number of values ${\cal E}(P)$ for globally admissible in $\cal S$ patterns $P $ is not bigger than the number 
of all  $n$-border zones $\partial Q$  that are globally admissible in $\hat {\cal S}$.
\end{claim}
\begin{proof}
By the definition of an o-epitome function, for every value ${\cal E}(P)$ there is a pattern $R$ on $F_n$ that determines implicitly   ${\cal E}(P)$: the maximum of ${\cal E}(P')$ for all $P'$ compatible with $R$ is equal to  ${\cal E}(P)$. Though $R$ is infinite, the ``interaction'' between $P$ and $R$ goes through a contour of linear length. 
As we observed in the proof of Proposition~\ref{p:epitomes}, we can choose an $n$-border zone $\partial Q$ (globally admissible in $\hat {\cal S}$) 
with the following property:
if we take all $n\times n$ patterns $Q$ locally compatible with $\partial Q$ in $\hat {\cal S}$, and then take $\pi$-projections of these patterns, we obtain the required pattern $P$ and possibly several other 
(globally admissible in $\cal S$) patterns $P'$ such that ${\cal E}(P')$,  if defined,  is inferior or equal to ${\cal E}(P)$ in the sense of $\preccurlyeq$.
Therefore, the chosen $n$-border zone $\partial Q$ determines uniquely the value of ${\cal E}(P)$. 

Observe that  $\partial Q$ determines  ${\cal E}(P)$ in an abstract sense; 
we cannot compute  ${\cal E}(P)$ from $\partial Q$ algorithmically. Indeed,   since ${\cal E}$ is a partial function (and the algorithm computing ${\cal E}$  does not stop on inputs where the function is not defined),  we cannot find algorithmically the maximum value of ${\cal E}(P')$ for all $P'$ obtained from $\partial Q$. However, we can claim that the number of values ${\cal E}(P)$ for globally admissible (in shift $\cal S$) patterns $P$ is not greater than the number of globally admissible  (in shift $\hat {\cal S}$) $\partial Q$.
\end{proof}

The set of  $n$-border zones  that are globally admissible  in any SFT (actually, in any effective shift) is co-enumerable (the complement is recursively enumerable): we can examine progressively all patterns of bigger and bigger size, detect those  that are forbidden, and reveal one by one the border zones that are inadmissible (because all their extensions contain forbidden patterns).  
Using this procedure we can eventually reveal each  non-admissible pattern;
but \emph{a priori} we do not know when the  last inadmissible $n$-border zone is detected.

Denote by $N_n$ the number of $n$-border zones that are globally admissible in $\hat {\cal S}$. Given this number (its binary expansion) we can determine the moment when the described algorithmic process finalizes the list of globally admissible  $n$-border zones  (i.e., when all inadmissible $n$-border zones are revealed). Then, given the list of all non-eliminated (globally admissible) $n$-border zones, we can find all $n\times n$ patterns in $\hat S $ that are locally compatible with at least one globally admissible  $n$-border zone (so these patterns are also globally admissible). Then we apply to each of these patterns the mapping $\pi$ and obtain the $n\times n$ patterns that are globally admissible in $\cal S$.  Finally, we apply the algorithm computing ${\cal E}$ to each  obtained pattern and run these computations in parallel. Every value of ${\cal E}(P)$ for globally admissible $P$ will be eventually found by this procedure.

Thus, to describe one specific value of ${\cal E}$, we need to describe the  enumeration procedure explained above and  to specify the ordinal number of the sought value of ${\cal E}$ in this enumeration (in the order of appearance). To run the enumeration process, we need to know the numbers $n$  and $N_n$. Due to Claim~\ref{claim:1}, the binary expansion of $N_n$ consists of only $O(\log N_n) = O(n)$ bits.
The number of different values of ${\cal E}$ for globally admissible patterns is bounded by $N_n$ (Claim~\ref{claim:2}); so the ordinal number of an element in this enumeration can be specified by $\log N_n = O(n)$ bits (again, due to Claim~\ref{claim:1}). Thus, every value of  ${\cal E}$ can be described by an algorithm of size $O(n)$, which is the claim~(a) of the theorem.

\smallskip

The arguments used in the proofs of (a) and (b) obviously relativizes, and we get statement~(c) for computations with an oracle.
\end{proof}

Proposition~\ref{pbis:epitomes} provided us with another necessary condition for soficness. To prove that a shift is not sofic, it is enough to construct a computable (or exp-time computable) o-epitome function with super-linear resource-bounded Kolmogorov complexities.
\begin{remark}
When we provide an algorithm with an access to an oracle, its computational power can increase dramatically. Thus, when we say that 
 $\CK^{ \cal O}(w) = O(n)$ with some (unspecified) oracle $\cal O$, we do not say much about string $w$. However, for every oracle we still can say that there are it most $2^{O(n)}$ objects of complexity $O(n)$. So when we claim that $\CK^{ \cal O}({\cal E}(P)) = O(n)$  for all $P$, we say essentially that there are $2^{O(n)}$  values of epitomes ${\cal E}$. In other words, the bounds on Kolmogorov complexity with an oracle (and without a time bound) is only a different language for the usual counting argument.
On the other hand, the meaning of  Kolmogorov complexity  becomes much subtler  when we constrain  the computational resources of  algorithms in use; the claims $\CK^{T(n)}(w) = O(n)$  or $\CK^{T(n),{\cal O}}(w) = O(n)$ cannot be reduced to a simple counting.
\end{remark}

\begin{example}[semi-mirror shift]
\label{ex:semi-mirror}
In this example we define a shift space extending the one defined in Example~\ref{ex:mirror}.
Let $\Sigma$ be the alphabet with three letters (e.g., \emph{black}, \emph{white}, and \emph{red});
the admissible configurations of the shift are all black-and-white configurations (without any red letter) and 
the configurations with an infinite horizontal line of red letters and  black-and-white half-planes 
above and below this line that are \emph{semi-symmetric} in the following sense:
if the letter in the $i$th column at the distance $k$ \emph{below} the horizontal red line is black, then 
the letter  in the same $i$th column at the distance $k$ \emph{above} the red line must be also black.
In other words, we may take symmetric half-planes above and below the red line, and then convert some
of the black letters in the lower half-plane into white ones,  see Fig.~\ref{f:semi-mirror}. 
We denote this shift by  ${\cal S}_{\text{semi-mirror}}$.

\begin{figure}[H]
\begin{center}

  \centering
  \begin{tikzpicture}[scale=0.25,x=1cm,baseline=2.125cm]

  \pgfmathsetseed{2}
    \foreach \x in {1,...,12} \foreach \y in {1,...,7}
    {
        \pgfmathparse{mod(int(random*23),2) ? "black!10" : "black!66"}
        \edef\colour{\pgfmathresult}
        \path[fill=\colour,draw=black] (\x,7-\y) rectangle ++ (1,1);
        \path[fill=\colour,draw=black] (\x,7+\y) rectangle ++ (1,1);
    }
    
     \foreach \x in {1,...,12} 
    {
        \path[fill=red!41,draw=black] (\x,7) rectangle ++ (1,1);
    }

        \path[fill=black!10,draw=blue,ultra thick] (1,7-4) rectangle ++ (1,1);
        \path[fill=black!10,draw=blue,ultra thick] (3,7-3) rectangle ++ (1,1);
        \path[fill=black!10,draw=blue,ultra thick] (8,7-2) rectangle ++ (1,1);
        \path[fill=black!10,draw=blue,ultra thick] (9,7-6) rectangle ++ (1,1);

        \path[fill=black!66,draw=blue,ultra thick] (1,7+4) rectangle ++ (1,1);
        \path[fill=black!66,draw=blue,ultra thick] (3,7+3) rectangle ++ (1,1);
        \path[fill=black!66,draw=blue,ultra thick] (8,7+2) rectangle ++ (1,1);
        \path[fill=black!66,draw=blue,ultra thick] (9,7+6) rectangle ++ (1,1);

\end{tikzpicture}

\caption{A configuration of the shift ${\cal S}_{\text{semi-mirror}}$ (we show in blue the positions where the upper and lower half-planes mismatch).} \label{f:semi-mirror}

\end{center}
\end{figure}

It is easy to see  that this shift is effective: the forbidden patterns are those where the red letters are not horizontally aligned, 
and those where the areas above and below the horizontal red line do not satisfy the property of semi-symmetry.
Below  we prove that this shift is not sofic. To this end we use the technique of ordered epitomes.

We define  the epitome function  ${\cal E}$ trivially for the patterns of size $n\times n$ that contain only black and white letters.
For these patterns, we let  ${\cal E}(P)$  be $P$ (or its binary encoding). 
For  patterns with at least one red letter,  ${\cal E}$ is not defined. With this definition, for each $n>0$
there are $2^{n^2}$ possible values of ${\cal E}(P)$ for $n\times n$ patterns.

We define the partial order $\preccurlyeq$ as follows: ${\cal E}(P_1) \preccurlyeq {\cal E}(P_2)$ if $P_1$ and $P_2$ are patterns of the same size $n\times n$, and  for every position with a black letter in  $P_1$
the corresponding position in $P_2$ also contains a black letter, 
see  Fig.~\ref{f:p1-le-p2}.

\begin{figure}[H]
\vspace{5pt}
  \centering
  \begin{tikzpicture}[scale=0.27,x=1cm,baseline=2.125cm]
  
  \pgfmathsetseed{1}
    \foreach \x in {1,...,6} \foreach \y in {1,...,6}
    {
        \pgfmathparse{mod(int(random*23),2) ? "black!10" : "black!66"}
        \edef\colour{\pgfmathresult}
        \path[fill=\colour,draw=black] (\x,7-\y) rectangle ++ (1,1);
        \path[fill=\colour,draw=black] (\x+15,7-\y) rectangle ++ (1,1);
    }

        \path[fill=black!10,draw=black] (3,5) rectangle ++ (1,1);
        \path[fill=black!66,draw=black] (3+15,5) rectangle ++ (1,1);

        \path[fill=black!10,draw=black] (4,6) rectangle ++ (1,1);
        \path[fill=black!66,draw=black] (4+15,6) rectangle ++ (1,1);

        \path[fill=black!10,draw=black] (1,4) rectangle ++ (1,1);
        \path[fill=black!66,draw=black] (1+15,4) rectangle ++ (1,1);

        \path[fill=black!10,draw=black] (6,2) rectangle ++ (1,1);
        \path[fill=black!66,draw=black] (6+15,2) rectangle ++ (1,1);

           \draw[thick,black](12,3) node {\Large $\preccurlyeq$};

           \draw[thick,black](4.5,-0.5) node {\Large $P_1$};
           \draw[thick,black](19.5,-0.5) node {\Large $P_2$};

\end{tikzpicture}

\caption{A pair of patterns $P_1$ and $P_2$: the relation ${\cal E}(P_1)\preccurlyeq {\cal E}(P_2)$ means that the set of black position in $P_1$ is included in the set of black positions in $P_2$.} \label{f:p1-le-p2}
\vspace{-5pt}

\end{figure}

Let us verify that $({\cal E}, \preccurlyeq)$ satisfies the definition of an o-epitome function. For every $n\times n$ pattern $P$ with black and white letters  we define the corresponding pattern $R$ in exactly the same way as in Example~\ref{ex:mirror} on p.~\pageref{ex:mirror-revisited}, see Fig.~\ref{f:p-and-r}.
It is clear that when we join $P$ with the corresponding $R$, we obtain an admissible configuration of  the shift.  Further,  if we join the same $R$  with some other  pattern $P'$ (as in Fig.~\ref{f:p-and-p'}), we obtain an admissible configuration if and only if $P'$ together with $R$ form a semi-symmetric configuration, i.e., a letter in $P'$ can be black \emph{only if} the letter at the symmetric position in $R$ is black. Hence, $R$ is compatible with $P'$ if and only if  ${\cal E}(P') \preccurlyeq {\cal E}(P)$, which is the property required in the definition of an o-epitome function.

\begin{figure}

  \centering
  \begin{tikzpicture}[scale=0.27,x=1cm,baseline=2.125cm]
  
 \pgfmathsetseed{2}

\foreach \x in {-3,...,10} \foreach \y in {-9,...,9}
{
  \path[fill=black!10,draw=black] (\x,7-\y) rectangle ++ (1,1);
  \path[fill=black!10,draw=black] (\x+25,7-\y) rectangle ++ (1,1);
}
     \foreach \x in {-3,...,10} 
    {
        \path[fill=red!41,draw=black] (\x,7) rectangle ++ (1,1);
        \path[fill=red!41,draw=black] (\x+25,7) rectangle ++ (1,1);
    }

    \foreach \x in {1,...,6} \foreach \y in {1,...,6}
    {
        \pgfmathparse{mod(int(random*23),2) ? "black!10" : "black!66"}
        \edef\colour{\pgfmathresult}
        \path[fill=\colour,draw=black] (\x,7-\y) rectangle ++ (1,1);
        \path[fill=\colour,draw=black] (\x,7+\y) rectangle ++ (1,1);

        \path[fill=\colour,draw=black] (\x+25,7-\y) rectangle ++ (1,1);
        \path[fill=\colour,draw=black] (\x+25,7+\y) rectangle ++ (1,1);

    }

 \path[fill=black!10,draw=green,ultra thick] (3,7-4) rectangle ++ (1,1);
 \path[fill=black!10,draw=green,ultra thick] (5,7-2) rectangle ++ (1,1);
 \path[fill=black!66,draw=green,ultra thick] (3,7+4) rectangle ++ (1,1);
 \path[fill=black!66,draw=green,ultra thick] (5,7+2) rectangle ++ (1,1);

 \path[fill=black!66,draw=green,ultra thick] (3+25,7-4) rectangle ++ (1,1);
 \path[fill=black!66,draw=green,ultra thick] (5+25,7-2) rectangle ++ (1,1);
 \path[fill=black!66,draw=green,ultra thick] (3+25,7+4) rectangle ++ (1,1);
 \path[fill=black!66,draw=green,ultra thick] (5+25,7+2) rectangle ++ (1,1);

 \path[draw=blue,ultra thick] (1,1) rectangle ++ (6,6); %
 \path[draw=blue,ultra thick] (1+25,1) rectangle ++ (6,6);

                                                      
    \draw[-latex,ultra thick,orange!95](15,19)node[right]{$R$}    to[out=170,in=75] (4.5,15.0);                                               
    \draw[-latex,ultra thick,orange!95](17,19)node[left]{}    to[out=10,in=+105] (27.5,15.0);     

    \draw[-latex,ultra thick,blue!95](1,-4.5)node[left]{$P'$}    to[out=0,in=-90] (3.5,1.0);    
    \draw[-latex,ultra thick,blue!95](32,-4.5)node[right]{$P$}    to[out=180,in=-90] (29.5,1.0);

\end{tikzpicture}
\caption{On the right: a black-and-white pattern $P$ (in the blue frame) joined with the corresponding pattern $R$ with the complementary domain (outside of the blue frame). On the left: another black-and-white pattern $P'$ (such that $P'\preccurlyeq P$) joined with the same $R$.  We show in green the positions that distinguish $P$ and $P'$, and the corresponding positions in $R$ (symmetric with respect to the red line).} \label{f:p-and-p'}

\end{figure}

We have observed that for every $n$ we have $2^{n^2}$ possible values of ${\cal E}$. From the counting argument it follows that for every $n$ there exists at least one  pattern $P$ of size $n\times n$ such that $\CK({\cal E}(P)) \ge n^2 \gg n$. Thus, the property $\CK({\cal E}(P)) =O(n)$ cannot be true for this shift. From Proposition~\ref{pbis:epitomes}~(a) it follows that the shift is not sofic.
\end{example}

A non-degenerate configuration of shift  ${\cal S}_{\text{semi-mirror}}$ defined in Example~\ref{ex:semi-mirror}
 consists of two ``semi-sym\-met\-ric'' half-planes separated by an infinite horizontal line of red letters.  
 The property of semi-symmetry means  that the lower half-plane can be obtained from the upper 
 half-plane one by the composition of the usual mirror symmetry and  repainting of some black letters 
 into white ones. In that definition we do not restrict  the number of black letters that are converted in white ones:
 the two half-planes can be exactly symmetric or, conversely, the lower half-plane can consist of only white letters. 
It is instructive to investigate subshifts of ${\cal S}_{\text{semi-mirror}}$ where the mismatch between the upper and lower half-planes
 is under control. In the following two examples we briefly discuss two extremal
cases: a subshift in ${\cal S}_{\text{semi-mirror}}$  where the symmetry can be broken for at most one pair of letters, 
and  another subshift where the symmetry is broken for all black letters except (possibly) one.

\begin{examplebis}{ex:semi-mirror}[semi-mirror shift with very small discrepancy between half-planes]
\label{ex:semi-mirror-bis}
  Let us define a new shift  ${\cal S}_{\text{semi-mirror}}' $ (such that ${\cal S}_{\text{semi-mirror}}' \subset {\cal S}_{\text{semi-mirror}}$) 
 by  adding to the definition of  ${\cal S}_{\text{semi-mirror}}$  another condition saying that the symmetry between 
the two half-planes (above and below the line of red letters) is violated  \emph{for at most one pair of letter}, as shown in Fig~\ref{f:semi-mirror-prim}. 
It is easy to verify that this shift is effective. Exactly the same  argument as in Example~\ref{ex:semi-mirror} implies that ${\cal S}_{\text{semi-mirror}}' $ is also non-sofic.
\end{examplebis}

\begin{figure}

  \centering
  \begin{tikzpicture}[scale=0.25,x=1cm,baseline=2.125cm]

  \pgfmathsetseed{5}
    \foreach \x in {1,...,12} \foreach \y in {1,...,7}
    {
        \pgfmathparse{mod(int(random*23),2) ? "black!10" : "black!66"}
        \edef\colour{\pgfmathresult}
        \path[fill=\colour,draw=black] (\x,7-\y) rectangle ++ (1,1);
        \path[fill=\colour,draw=black] (\x,7+\y) rectangle ++ (1,1);
    }
    
     \foreach \x in {1,...,12} 
    {
        \path[fill=red!41,draw=black] (\x,7) rectangle ++ (1,1);
    }

         \path[fill=black!10,draw=blue,ultra thick] (8,7-4) rectangle ++ (1,1);
 
         \path[fill=black!66,draw=blue,ultra thick] (8,7+4) rectangle ++ (1,1);

\end{tikzpicture}

\caption{A configuration of the shift ${\cal S}_{\text{semi-mirror}}'$ (we show in blue the unique pair of symmetric positions where the upper and lower half-planes mismatch).} \label{f:semi-mirror-prim}
\vspace{-5pt}

\end{figure}

\begin{examplebis2}{ex:semi-mirror}[semi-mirror shift with very large discrepancy between half-planes]
\label{ex:semi-mirror-bis2}
Let us define one more shift  ${\cal S}_{\text{semi-mirror}}'' $ (again ${\cal S}_{\text{semi-mirror}}'' \subset {\cal S}_{\text{semi-mirror}}$) 
by  the condition that the  half-plane  under the horizontal  line of red letters contains \emph{at most one black letter},
see Fig.~\ref{f:semi-mirror-prim2}.
This shift is still effective.
With the new restriction, the argument from  Example~\ref{ex:semi-mirror} does not apply.
Indeed, as we have at most one black letter in the lower half-plane, Kolmogorov complexity of the epitomes ${\cal E}(P)$ defined Example~\ref{ex:semi-mirror} sinks to $O(\log n)$, and we cannot get a contradiction with Proposition~\ref{pbis:epitomes}~(a). 
The fact that our proof of non-soficness fails is not misleading:
it is not hard to verify that ${\cal S}_{\text{semi-mirror}}'' $  is actually sofic. 
We  discuss this example below  in Section~\ref{s:km} in the context of extender sets.
\end{examplebis2}

\begin{figure}[H]

  \centering
  \begin{tikzpicture}[scale=0.25,x=1cm,baseline=2.125cm]

  \pgfmathsetseed{5}
    \foreach \x in {1,...,12} \foreach \y in {1,...,7}
    {
        \pgfmathparse{mod(int(random*23),2) ? "black!10" : "black!66"}
        \edef\colour{\pgfmathresult}
        \path[fill=black!10,draw=black] (\x,7-\y) rectangle ++ (1,1);
        \path[fill=\colour,draw=black] (\x,7+\y) rectangle ++ (1,1);
    }
    
     \foreach \x in {1,...,12} 
    {
        \path[fill=red!41,draw=black] (\x,7) rectangle ++ (1,1);
    }

         \path[fill=black!66,draw=black] (8,7-4) rectangle ++ (1,1);
 
         \path[fill=black!66,draw=blue,ultra thick] (8,7+4) rectangle ++ (1,1);
                                                      
\end{tikzpicture}

\caption{A configuration of the shift ${\cal S}_{\text{semi-mirror}}''$. We show in blue the position in the upper half-plane that is symmetric to the unique black letter in the lower half-plane.} \label{f:semi-mirror-prim2}
\vspace{-5pt}

\end{figure}

In the next example we discuss a shift with very low block complexity, so its non-soficness can be proven only with help of time-bounded Kolmogorov complexity of ordered epitomes.

\begin{example}[semi-mirror shift with low block complexity] 
\label{ex:semi-mirror-sofic}
In this example we use again  subshift ${\cal S}_{0}$  introduced in Notation~\ref{d:s_0},
p.~\pageref{d:s_0}.
We define a new shift ${\cal S}_{\text{semi-mirror}}^{0}$ (such that ${\cal S}_{\text{semi-mirror}}^{0} \subset {\cal S}_{\text{semi-mirror}}'$) by adding to the definition of ${\cal S}_{\text{semi-mirror}}'$ the following requirement: every black-and-white pattern below the horizontal line of red letters must be globally admissible in ${\cal S}_{0}$.

It is easy to verify that ${\cal S}_{\text{semi-mirror}}^{0}$ is also effective. By the construction, for every globally admissible pattern $P$ of size $n\times n$ that can appear below the horizontal line of red letters, we have $\CK(P)  = O(\log n)$. The same property holds for  the patterns that  appear above the horizontal line of red letters, since the mirror symmetry and adding one black letter increases Kolmogorov complexity by at most $O(\log n)$. The same is true for globally admissible patterns involving red letters, since such a pattern can be described by its  parts above and below the horizontal red line, and  both these parts have Kolmogorov complexity $O(\log n)$. Thus, for \emph{every} globally admissible pattern $P$ of size $n\times n$ we have $\CK(P)  = O(\log n)$. Hence,  there are at most $2^{O(\log n)} = \poly(n)$ globally admissible patterns, i.e., the shift has only polynomial block complexity.

The technique of ordered epitomes with the plain Kolmogorov complexity does not apply to this shift since   for every globally admissible pattern $P$ the value of $\CK(P) $ is very low, so  $\CK({\cal E}(P)) $ is also sub-linear, and we cannot obtain a contradiction with Proposition~\ref{pbis:epitomes}~(a).
However, we can prove non-soficness of this shift with help of time bounded Kolmogorov complexity. Indeed, the shift is defined in such a way that for some $n\times n$ patterns that appear below the horizontal line of red  we have  $\CK^{2^{n^3}}({\cal E}(P)) = \Omega(n^{1.5})$. So 
we can repeat the argument from Example~\ref{ex:semi-mirror} with the same o-epitome function ${\cal E}$ (it is  exp-time computable) but this time using Kolmogorov complexity with a time bound $T=2^{\omega(n^2)}$.
By applying Proposition~\ref{pbis:epitomes}~(b), we conclude that the shift is non-sofic.

\end{example}

In the next example we discuss an interesting instance of a shift proposed by Kass and Madden in \cite[Example 2.5]{kass-madden}. We reformulate the proof of non-soficness given in  \cite{kass-madden} in the language of Kolmogorov complexity, in terms of ordered epitomes. In this case, the definition of an epitome function is less straightforward than in the previous examples.

\begin{example}[the shift with no hidden red-black squares]\label{ex:km}
Let $\Sigma$ be the alphabet with three letters (e.g., \emph{black}, \emph{white}, and \emph{red}), and the forbidden patterns be 
all squares  (of all sizes) where the top side consists of red letters, and the bottom one consists of black letters (\emph{hidden red-black squares}), 
as shown in Fig.~\ref{f:012-forbidden}.

\begin{proposition}[\cite{kass-madden}]\label{prop-kass-maden}
The shift on $\mathbb{Z}^2$ defined by the set of forbidden patterns specified above is not sofic.
\end{proposition}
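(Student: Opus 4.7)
The plan is to apply Proposition~\ref{pbis:epitomes}: I will exhibit an exp-time computable ordered family of epitomes $({\cal E}_n,\le_n)$ for this shift, together with a globally admissible pattern whose epitome has super-linear plain Kolmogorov complexity, which contradicts soficness. Define ${\cal E}_n$ on \emph{staircase} patterns---those patterns on $B_n$ that contain no red cell and in which every row consists of a contiguous prefix of black cells followed by white cells---by setting ${\cal E}_n(P)=(c_0,\ldots,c_{n-1})\in\{0,\ldots,n\}^n$, where $c_y$ is the prefix length of row~$y$; leave ${\cal E}_n(P)$ undefined otherwise. Take $\le_n$ to be the componentwise order on $\{0,\ldots,n\}^n$. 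Both ${\cal E}_n$ and $\le_n$ are polynomial-time computable, so the only non-trivial step is verifying the epitome property from Definition~\ref{d:ordered-epitome}.

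For each staircase $P$ with prefix vector $\vec c$ I would construct an extension $R$ on $F_n$ as follows. Fix parameters $H\ge n+1$ and $L\ge H$. Place in $R$ a single horizontal strip of red cells at height $H$ spanning the columns $[-L,L]$, and for each window row $y\in\{0,\ldots,n-1\}$ a horizontal strip of black cells at height~$y$ occupying the columns $[c_y-(H-y),\,-1]$; set every remaining cell of $F_n$ to white. The intuition is that, for each row~$y$, the $(H-y+1)\times(H-y+1)$ square whose top row lies inside the red strip and whose bottom row extends from $c_y-(H-y)$ to $c_y$ at height~$y$ is on the verge of being a forbidden red-top/black-bottom pattern, and is averted in $P$ precisely because the cell $(c_y,y)$ is white; any staircase $P'$ that pushes its prefix past column $c_y$ in row~$y$ would trigger the forbidden pattern.

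The main technical obstacle is to verify that $P\cup R$ is locally admissible, while every staircase $P'$ compatible with $R$ satisfies $c'_y\le c_y$ for every~$y$. Since red cells occur only in the strip at height~$H$, every candidate forbidden square has its top row there and its bottom row at some height $y<H$ of width $k=H-y+1$. For $y\notin\{0,\ldots,n-1\}$ no black cells exist at height~$y$, so the bottom row cannot be all black. For $y\in\{0,\ldots,n-1\}$, tracking the colors of a candidate $k$-wide bottom row $[a,a+k-1]$ in the three natural ranges (columns ${<}0$ come from $R$, window columns $[0,n-1]$ come from the interior pattern, columns ${\ge}n$ are white in $R$) shows that the bottom row is all black iff $a\in[c_y-k+1,\,c'_y-k]$, an interval that is non-empty precisely when $c'_y>c_y$. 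Hence $P\cup R$ contains no forbidden square (since there $c'_y=c_y$), whereas any $P'$ whose prefix vector strictly exceeds $\vec c$ in some coordinate produces a forbidden square and is therefore incompatible with $R$.

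A final counting argument completes the proof. Since every $\vec c\in\{0,\ldots,n\}^n$ is realized by an admissible staircase (globally admissible because of the extension $R$ just constructed), the image of ${\cal E}_n$ on globally admissible $n\times n$ patterns contains all $(n+1)^n$ possible vectors. A standard counting bound then produces, for every~$n$, an admissible pattern $P$ with $\CK({\cal E}_n(P))\ge n\log_2(n+1)-O(1)=\omega(n)$; since $\CK^{T(n)}(x)\ge \CK(x)$, the same bound holds for the resource-bounded complexity under any computable~$T(n)$. But Proposition~\ref{pbis:epitomes} would force $\CK^{T(n)}({\cal E}_n(P))=O(n)$ for some $T(n)=2^{O(n^2)}$ if the shift were sofic, contradicting the $\omega(n)$ lower bound. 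Therefore the shift is not sofic.
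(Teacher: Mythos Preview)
Your argument follows the paper's strategy exactly: you define the same ordered epitomes (profiles of black/white ``staircase'' patterns under the componentwise order), verify Definition~\ref{d:ordered-epitome} by exhibiting a suitable extension $R$, and close with the counting bound plus Proposition~\ref{pbis:epitomes}. Your $R$ is in fact a pleasant simplification of the one in the paper (and in \cite{kass-madden}): instead of a staircase of red strips at heights $2n+1,\ldots,3n$ with carefully tuned lengths, you use a single long red strip at height $H$, letting the size $k=H-y+1$ of the testing square vary with the row being tested.

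There is, however, a concrete gap in your parameter choice. The key step --- ``the bottom row is all black iff $a\in[c_y-k+1,\,c'_y-k]$'' --- tacitly assumes that the black interval $[c_y-(H-y),-1]$ you place in $R$ is non-empty and abuts the window, i.e.\ that $c_y\le H-y-1$. Your hypothesis $H\ge n+1$ does not guarantee this. Concretely, take $n\ge 3$, $H=n+1$, and let $P$ be the all-black staircase ($c_y=n$ for every $y$). At row $y=n-1$ your interval $[c_y-(H-y),-1]=[n-2,-1]$ is empty, so the black run at that row is just $[0,n-1]$ coming from $P$; since $k=H-y+1=3\le n$, the $3\times 3$ square with bottom row $[0,2]$ at height $n-1$ and top row $[0,2]$ inside your red strip at height $H$ is a forbidden red-top/black-bottom square. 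Hence $P\cup R$ is \emph{not} a valid configuration and condition~(i) of Definition~\ref{d:ordered-epitome} fails. The fix is immediate: require $H\ge 2n$. Then $H-y\ge n+1>c_y$ for every window row, the interval $[c_y-(H-y),-1]$ is always non-empty, your ``iff'' holds as stated, and the rest of your proof goes through and matches the paper's.
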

In \cite{kass-madden} this proposition was proven with the technique of \emph{union-increasing chains of extender sets}. In what follows we propose essentially the same  argument, but explain it in terms of ordered epitomes.


\begin{proof}[Proof of Proposition~\ref{prop-kass-maden}] We define for this shift an o-epitome function. 
First of all, we define a class of \emph{simple patterns}: the simple patterns are all square patterns that (i)~consist of only black and white letters (with no red letters), where (ii)~every row  starts with a few successive black letters followed by a sequence of white letters, as show in Fig.~\ref{f:012-standard}.
Every simple pattern of size $n\times n$ can be specified by its \emph{profile} --- a tuple of integers $(k_1,\ldots, k_n)$, where $k_i$ is the number of black letters in the $i$-th row of the pattern. (Thus, a simple pattern with the profile $(k_1,\ldots, k_n)$ is an $n\times n$ square where each $i$-th row starts with $k_i$ black letters followed by $(n-k_i)$ white letters.)

Let epitome ${\cal E}$ assign to each simple pattern its profile, and be undefined for all other patterns. For example, for the pattern $P$ show in Fig.~\ref{f:012-standard} we have ${\cal E}(P)=(4,3,8,5,4,2,4,6)$.

We introduce the natural order $\preccurlyeq$ on the profiles of simple patterns of the same size $n\times n$;  we say that the profile of $P_1$ is \emph{not greater} than the profile of $P_2$, if the first profile is coordinate-wise not greater than the second one. For example, the profiles of the two patterns shown in Fig.~\ref{f:012-patterns} are not greater than the profile of the pattern in Fig.~\ref{f:012-standard} (and incomparable with each other).

The introduced ${\cal E}$ and $\preccurlyeq$ are obviously computable, even in polynomial time. Some work is required to show that $({\cal E}, \preccurlyeq)$  satisfies Definition~\ref{d:ordered-epitome}:
\begin{lemma}\label{l:kass-madden}
The defined above $({\cal E}, \preccurlyeq)$ provides an exp-time  computable o-epitome function for the shift under consideration.
\end{lemma}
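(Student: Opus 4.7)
The plan is to verify the three requirements of Definition~\ref{d:ordered-epitome}. Exp-time computability of ${\cal E}_n$ and $\leq_n$ is immediate: simplicity of a pattern, reading off its profile, and coordinate-wise comparison of profiles are all doable in polynomial time. The substantive part is to construct, for each simple globally admissible $P$ with profile $(k_1,\ldots,k_n)$, an exterior pattern $R$ on $F_n$ that is compatible with $P$ but incompatible with every simple $P'$ whose profile is strictly larger than $(k_1,\ldots,k_n)$ in the order $\leq_n$.

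I propose the following explicit $R$. Fix a height $H := 3n$. For each $i \in \{0,\ldots,n-1\}$ extend row $y=i$ of $R$ with black cells in the columns $-1, -2, \ldots, -(H-i-k_i)$, so that row $i$ of $P \cup R$ becomes a single contiguous black run occupying columns $[-(H-i-k_i),\,k_i-1]$, of length exactly $H - i$, followed and preceded by whites. At row $y = H$ place a long red bar in columns $[-H,H]$. All remaining cells of $F_n$ are set to white. The design is calibrated so that for each $i$ the ``dangerous'' forbidden square testing row $i$ (top at height $H$, bottom at height $i$) has side $s = H - i + 1$, exactly one more than the black run available in row $i$ of $P \cup R$.

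To verify that $P \cup R$ is in the shift, note that the only red cells of $R$ sit in row $H$ and $P$ itself contains no red cells, so any hidden red-black square must have its top in row $H$. If the bottom row $H - s + 1$ falls outside $\{0,\ldots,n-1\}$, it is entirely white by construction; otherwise $s = H - i + 1$ for some $i \in [0,n-1]$, and an $s$-long all-black bottom cannot fit into the run of length $H - i$ in row $i$. Conversely, suppose a simple $P'$ with profile $(k'_1,\ldots,k'_n)$ satisfying $k'_i \ge k_i+1$ is compatible with $R$. Then row $i$ of $P' \cup R$ contains the all-black interval $[-(H-i-k_i),\,k_i]$ of length exactly $H - i + 1$, and combining this with the red bar at row $H$ in columns $[-(H-i-k_i),\,k_i] \subseteq [-H,H]$ yields a forbidden square of side $H - i + 1$, a contradiction. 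Hence $k'_i \le k_i$ for all $i$, which is exactly ${\cal E}_n(P') \leq_n {\cal E}_n(P)$.

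The principal technical point, and the place where I expect the most care to be required, is the claim that the varying left-extensions of the different rows do not interact with the single red bar to produce an unintended forbidden square in $P \cup R$. The construction sidesteps this cleanly: since every red cell lives in the single row $H$, only squares with top at height $H$ need to be checked, and each row $i$ is calibrated so that its black run is exactly one cell short of the critical length $H - i + 1$. The single extra black cell that appears in row $i$ whenever $k'_i > k_i$ is precisely what pushes the run over this threshold, which is both what triggers the forbidden pattern for illegal $P'$ and what keeps $P \cup R$ itself admissible.
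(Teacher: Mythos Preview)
Your proof is correct and follows the same high-level strategy as the paper (and Kass--Madden): build an exterior $R$ that extends each black row of $P$ to the left and places red cells above, so that any extra black cell in some row of a candidate $P'$ completes a forbidden red-black square. Your implementation, however, is a genuine simplification. The paper reproduces Kass--Madden's construction with $n$ separate red stripes at heights $2n+1,\ldots,3n$, each of a carefully chosen length and left-aligned with a particular black stripe; verifying admissibility of $P\cup R$ then requires checking that the black stripe in row $i$ does not accidentally interact with the red stripe paired to some other row $j$, which is why the lengths are tuned to decrease by two. You replace all of this with a single long red bar at height $H=3n$, and calibrate the black run in row $i$ to have length exactly $H-i$, one short of the side $H-i+1$ of the unique square that could link row $i$ to row $H$. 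Because there is only one red row, the admissibility check collapses to a single length comparison per row, and the incompatibility check for $P'$ with $k'_i>k_i$ is equally direct. One tiny imprecision: when you say the bottom row is ``entirely white'' whenever it falls outside $\{0,\ldots,n-1\}$, the case $s=1$ would place the bottom at row $H$, which carries red cells; but since a forbidden square needs its bottom black and its top red in the \emph{same} row when $s=1$, this case is vacuous and your conclusion stands.
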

\noindent
This lemma is proven implicitly in \cite{kass-madden}. We sketch this proof in \ref{appendix-kass-madden}.

\smallskip

It remains to observe that for every $n$ there are $(n+1)^n$ simple patterns of size $n\times n$ (in each row of a simple pattern the frontier between black and white areas varies between $0$ and $n$). Therefore, for some simple patterns $P$ of size $n\times n$ the Kolmogorov complexity of their profile is greater than $n\log (n+1)$, i.e.,  even the plain Kolmogorov complexity $\CK({\cal E}(P))$ is super-linear. We apply Proposition~\ref{pbis:epitomes}~(a) and conclude that the shift is not sofic.
\end{proof}

\end{example}

\begin{figure}[H]
\centering
\begin{subfigure}[t]{0.23\linewidth}
\begin{center}
  \begin{tikzpicture}[scale=0.27,x=1cm,baseline=2.125cm]
  
  \pgfmathsetseed{1}
    \foreach \x in {0,...,7} \foreach \y in {0,...,7}
    {
        \pgfmathparse{mod(int(random*23),2) ? "red!41" : "black!66"}
        \edef\colour{\pgfmathresult}
        \path[fill=\colour,draw=black] (\x,\y) rectangle ++ (1,1);
    }
    
     \foreach \x in {0,...,7} 
    {
        \path[fill=red!41,draw=black] (\x,7) rectangle ++ (1,1);
    }
     \foreach \x in {0,...,7} 
    {
        \path[fill=black!66,draw=black] (\x,0) rectangle ++ (1,1);
    }

        \path[fill=black!7,draw=black] (3,4) rectangle ++ (1,1);
        \path[fill=black!7,draw=black] (3,2) rectangle ++ (1,1);
        \path[fill=black!7,draw=black] (6,5) rectangle ++ (1,1);
        \path[fill=black!7,draw=black] (1,1) rectangle ++ (1,1);
        \path[fill=black!7,draw=black] (5,3) rectangle ++ (1,1);
        \path[fill=black!7,draw=black] (6,3) rectangle ++ (1,1);
        \path[fill=black!7,draw=black] (5,2) rectangle ++ (1,1);
        \path[fill=black!7,draw=black] (1,6) rectangle ++ (1,1);
        \path[fill=black!7,draw=black] (0,5) rectangle ++ (1,1);
        \path[fill=black!7,draw=black] (2,3) rectangle ++ (1,1);
        \path[fill=black!7,draw=black] (2,1) rectangle ++ (1,1);
        \path[fill=black!7,draw=black] (4,1) rectangle ++ (1,1);
        \path[fill=black!7,draw=black] (5,1) rectangle ++ (1,1);                        
        \path[fill=black!7,draw=black] (0,2) rectangle ++ (1,1);
        \path[fill=black!7,draw=black] (3,6) rectangle ++ (1,1); 
        \path[fill=black!7,draw=black] (4,6) rectangle ++ (1,1); 
        \path[fill=black!7,draw=black] (5,6) rectangle ++ (1,1);       
        \path[fill=black!7,draw=black] (6,6) rectangle ++ (1,1);       
        \path[fill=black!7,draw=black] (7,6) rectangle ++ (1,1);       
        \path[fill=black!7,draw=black] (0,6) rectangle ++ (1,1);       
        \path[fill=black!7,draw=black] (7,1) rectangle ++ (1,1);  
        \path[fill=black!7,draw=black] (6,2) rectangle ++ (1,1);  
        \path[fill=black!7,draw=black] (7,4) rectangle ++ (1,1);  
        \path[fill=black!7,draw=black] (2,5) rectangle ++ (1,1); 
        \path[fill=black!7,draw=black] (0,4) rectangle ++ (1,1); 
        \path[fill=black!7,draw=black] (1,4) rectangle ++ (1,1); 
        \path[fill=black!7,draw=black] (3,1) rectangle ++ (1,1); 
        \path[fill=black!7,draw=black] (5,5) rectangle ++ (1,1); 
        \path[fill=black!7,draw=black] (1,2) rectangle ++ (1,1); 
        \path[fill=black!7,draw=black] (4,4) rectangle ++ (1,1); 
        \path[fill=black!7,draw=black] (5,3) rectangle ++ (1,1); 

        \path[fill=black!66,draw=black] (0,1) rectangle ++ (1,1); 
        \path[fill=black!66,draw=black] (1,5) rectangle ++ (1,1); 
        \path[fill=black!66,draw=black] (6,3) rectangle ++ (1,1); 
        \path[fill=black!66,draw=black] (6,1) rectangle ++ (1,1);
        \path[fill=red!41,draw=black] (2,6) rectangle ++ (1,1);  
        \path[fill=red!41,draw=black] (6,3) rectangle ++ (1,1); 
        \path[fill=red!41,draw=black] (4,2) rectangle ++ (1,1);
   
                                      
\end{tikzpicture}
\end{center}
  \caption{Forbidden pattern: a square with a red top and a black bottom.}
  \label{f:012-forbidden}
\end{subfigure}
\quad
\begin{subfigure}[t]{0.23\linewidth}
\begin{center}
  \begin{tikzpicture}[scale=0.27,x=1cm,baseline=2.125cm]
 
     
    \foreach \x in {0,...,7} \foreach \y in {0,...,7}
    {
          \path[fill=black!7,draw=black] (\x,\y) rectangle ++ (1,1);
    }

 \edef\y{0}
 \edef\xmax{3}
 \foreach \x in {0,...,\xmax}
 \path[fill=black!66,draw=black] (\x,\y) rectangle ++ (1,1);

 \edef\y{1}
 \edef\xmax{2}
 \foreach \x in {0,...,\xmax}
 \path[fill=black!66,draw=black] (\x,\y) rectangle ++ (1,1);

 \edef\y{2}
 \edef\xmax{7}
 \foreach \x in {0,...,\xmax}
 \path[fill=black!66,draw=black] (\x,\y) rectangle ++ (1,1);

 \edef\y{3}
 \edef\xmax{4}
 \foreach \x in {0,...,\xmax}
 \path[fill=black!66,draw=black] (\x,\y) rectangle ++ (1,1);

 \edef\y{4}
 \edef\xmax{3}
 \foreach \x in {0,...,\xmax}
 \path[fill=black!66,draw=black] (\x,\y) rectangle ++ (1,1);

 \edef\y{5}
 \edef\xmax{1}
 \foreach \x in {0,...,\xmax}
 \path[fill=black!66,draw=black] (\x,\y) rectangle ++ (1,1);

 \edef\y{6}
 \edef\xmax{3}
 \foreach \x in {0,...,\xmax}
 \path[fill=black!66,draw=black] (\x,\y) rectangle ++ (1,1);

 \edef\y{7}
 \edef\xmax{5}
 \foreach \x in {0,...,\xmax}
 \path[fill=black!66,draw=black] (\x,\y) rectangle ++ (1,1);

                                         
\end{tikzpicture}
\end{center}
	\caption{A pattern for which the epitome ${\cal E}$ is defined: each row starts with a few black letters on the left followed by white letters on the right.}
 \label{f:012-standard}
\end{subfigure}
\quad
\begin{subfigure}[t]{0.46\linewidth}
\begin{center}
  \begin{tikzpicture}[scale=0.27,x=1cm,baseline=2.125cm]
 
     
    \foreach \x in {0,...,7} \foreach \y in {0,...,7}
    {
          \path[fill=black!7,draw=black] (\x,\y) rectangle ++ (1,1);
    }

 \edef\y{0}
 \edef\xmax{3}
 \foreach \x in {0,...,\xmax}
 \path[fill=black!66,draw=black] (\x,\y) rectangle ++ (1,1);

 \edef\y{1}
 \edef\xmax{1}
 \foreach \x in {0,...,\xmax}
 \path[fill=black!66,draw=black] (\x,\y) rectangle ++ (1,1);
  \path[draw=blue,line width=0.5mm,dashed] (0,\y) rectangle ++ (3,1);

 \edef\y{2}
 \edef\xmax{7}
 \foreach \x in {0,...,\xmax}
 \path[fill=black!66,draw=black] (\x,\y) rectangle ++ (1,1);

 \edef\y{3}
 \edef\xmax{4}
 \foreach \x in {0,...,\xmax}
 \path[fill=black!66,draw=black] (\x,\y) rectangle ++ (1,1);

 \edef\y{4}
 \edef\xmax{2}
 \foreach \x in {0,...,\xmax}
 \path[fill=black!66,draw=black] (\x,\y) rectangle ++ (1,1);
  \path[draw=blue,line width=0.5mm,dashed] (0,\y) rectangle ++ (4,1);

 \edef\y{5}
 \edef\xmax{1}
 \foreach \x in {0,...,\xmax}
 \path[fill=black!66,draw=black] (\x,\y) rectangle ++ (1,1);

 \edef\y{6}
 \edef\xmax{3}
 \foreach \x in {0,...,\xmax}
 \path[fill=black!66,draw=black] (\x,\y) rectangle ++ (1,1);

 \edef\y{7}
 \edef\xmax{3}
 \foreach \x in {0,...,\xmax}
 \path[fill=black!66,draw=black] (\x,\y) rectangle ++ (1,1);
  \path[draw=blue,line width=0.5mm,dashed] (0,\y) rectangle ++ (6,1);


 \end{tikzpicture}
 \quad \begin{tikzpicture}[scale=0.27,x=1cm,baseline=2.125cm]
 
     
    \foreach \x in {0,...,7} \foreach \y in {0,...,7}
    {
          \path[fill=black!7,draw=black] (\x,\y) rectangle ++ (1,1);
    }

 \edef\y{0}
 \edef\xmax{3}
 \foreach \x in {0,...,\xmax}
 \path[fill=black!66,draw=black] (\x,\y) rectangle ++ (1,1);

 \edef\y{1}
 \edef\xmax{2}
 \foreach \x in {0,...,\xmax}
 \path[fill=black!66,draw=black] (\x,\y) rectangle ++ (1,1);

 \edef\y{2}
 \edef\xmax{3}
 \foreach \x in {0,...,\xmax}
 \path[fill=black!66,draw=black] (\x,\y) rectangle ++ (1,1);
  \path[draw=blue,line width=0.5mm,dashed] (0,\y) rectangle ++ (8,1);

 \edef\y{3}
 \edef\xmax{0}
 \foreach \x in {0,...,\xmax}
 \path[fill=black!66,draw=black] (\x,\y) rectangle ++ (1,1);
  \path[draw=blue,line width=0.5mm,dashed] (0,\y) rectangle ++ (5,1);

 \edef\y{4}
 \edef\xmax{3}
 \foreach \x in {0,...,\xmax}
 \path[fill=black!66,draw=black] (\x,\y) rectangle ++ (1,1);

 \edef\y{5}
 \edef\xmax{1}
 \foreach \x in {0,...,\xmax}
 \path[fill=black!66,draw=black] (\x,\y) rectangle ++ (1,1);

 \edef\y{6}
 \edef\xmax{3}
 \foreach \x in {0,...,\xmax}
 \path[fill=black!66,draw=black] (\x,\y) rectangle ++ (1,1);

 \edef\y{7}
 \edef\xmax{5}
 \foreach \x in {0,...,\xmax}
 \path[fill=black!66,draw=black] (\x,\y) rectangle ++ (1,1);

\end{tikzpicture}
\end{center}
\caption{A pair of incomparable patterns.} \label{f:012-patterns}
\end{subfigure}
\vspace{5pt}
\caption{}
\end{figure}

\subsection{The technique of ordered epitomes and union-increasing chains of extender sets}
\label{s:km}

In Example~\ref{ex:km} we translated in the language of epitomes a proof proposed by S.~Kass and K.~Madden in \cite{kass-madden}.
In this section we show that this example is a special case of the general situation:  
the  technique of Kass and Madden is equivalent to a special case of the technique of ordered epitomes.
In this special case we require that the epitome functions have decidable domains, and we use  the plain Kolmogorov complexity (i.e., with no  bounds for computational resources).
We begin with two definitions from \cite{kass-madden}.

\begin{definition} 
Let $\cal S$ in $\mathbb{Z}^2$ be a shift and let $P$ be a pattern  with support $B_n$.
The extender set of $P$, denoted $E_{\cal S}(P)$, consists of all patterns $Q$ on $F_n$ such that the union of $P$ and $Q$ gives
a configuration in $\cal S$.
The family of extender sets for all globally admissible patterns $P$ with support $B_n$  is denoted 
\[
Ext_{\cal S}^n  := \{ E_{\cal S}(P)\ |\ P \text{ is a globally admissible pattern on }B_n\}.
\]
\end{definition}
In some sense, the cardinality of $Ext_{\cal S}^n$ corresponds to the ``information flow'' between an $n\times n$ pattern and the rest of the configuration on $\mathbb{Z}^2$. Intuitively, this ``information flow'' crosses the borderline around an $n\times n$ pattern, while the length of this borderline is only $O(n)$. This intuition can be made more precise for SFTs: for every shift of finite type $\cal S$ on $\mathbb{Z}^2$ the size of $Ext_{\cal S}^n$ cannot be bigger than $2^{O(n)}$. 
However, this property is not necessary true for multi-dimensional sofic shifts. In fact, it is known that for some sofic shifts on $\mathbb{Z}^2$ the size of $Ext_{\cal S}^n$ can grow faster than $2^{O(n)}$ (see Remark~\ref{rem:1}). For instance, for the sofic shift in Example~\ref{ex:semi-mirror-bis2} the size of $Ext_{\cal S}^n$ grows as $2^{\Omega(n^2)}$.

Thus, to show that a shift is not sofic, it is not enough to study the \emph{size} of  the  family of extender sets, we need to reveal subtler properties of these objects. More specifically, we are going to use the structure of the class of extender sets imposed by the relation of inclusion. Technically, we use  union-increasing chains of extender sets.
\begin{definition}
Let ${ S}_1,\ldots,{S}_m$  be a finite sequence of non-empty sets. This sequence is called 
\emph{union-increasing} chain if the partial unions of the ${S}_i$ strictly increase, i.e., 
if for each $1\le i\le m$
\[
 {S}_i  \nsubseteq \bigcup\limits_{j=1}^{i-1} {S}_j .
\]
\end{definition}
Now we can formulate the theorem of  Kass and Madden. 
\begin{theorem}[\cite{kass-madden}]\label{th:km}
Let  $\cal S$ in $\mathbb{Z}^2$ be a shift space. Suppose given any $M > 0$, there exists an $n > 0$, an $m > M^{n}$, 
and globally admissible patterns $P_1,\ldots,P_m$ with support $B_n$ for which 
\[
Ext_{\cal S}(P_1), \ldots,Ext_{\cal S}(P_m)
\]
 is a union-increasing chain. Then $\cal S$ is non-sofic.
\end{theorem}
Loosely speaking, this theorem claims that the ``essential information'' that must cross the borderline of an $n\times n$
pattern (in a sofic shift)
corresponds not just to the number of extender sets but to the length of union-increasing chains of extender sets.
This theorem can be reformulated as follows.
\begin{corollary}\label{cor:km}
Let  $\cal S$ in $\mathbb{Z}^2$ be a shift space. For every $n$ we choose a sequence of globally admissible patterns $P_1^n,\ldots,P_{m(n)}^n$ with support $B_n$ for which the sequence of extender sets
\[
Ext_{\cal S}(P_1^n), \ldots,Ext_{\cal S}(P_{m(n)}^n)
\]
is a union-increasing chain. If $\log m(n) = \omega(n)$, then  $\cal S$ is non-sofic.
\end{corollary}
This approach is pretty general. 
Kass and Madden have shown in particular  that Theorem~\ref{th:km} allows to express any argument based on the technique of 
 Pavlov (\cite[Theorem~1.1]{pavlov}). 

We show that every proof of non-soficness based on Theorem~\ref{th:km} can be rephrased in terms of o-epitome functions from Definition~\ref{d:ordered-epitome} with decidable domains and with the  plain (resource-unbounded) Kolmogorov complexity, and vice-versa. We prove this equivalence in two steps: first we show that an argument in the language of union-increasing chains of extender sets can be translated in the language of o-epitome functions (Proposition~\ref{p:km2e}); then we prove the opposite implication and show that an argument explained in the language of o-epitome functions can be reformulated in terms of union-increasing chains of extender sets  (Proposition~\ref{p:e2km} and Proposition~\ref{p:counting}).

Next proposition is formulated and proven in terms  of families of restricted o-epitomes (let us remind that $({\cal E}_n,\preccurlyeq_n)$ is the restriction of $({\cal E},\preccurlyeq)$ onto patterns of size $n\times n$).
\begin{proposition}\label{p:km2e}
Let  $\cal S$ in $\mathbb{Z}^2$ be a shift space. Assume that for every $n$ there is a sequence of globally admissible patterns $P_1,\ldots,P_{m(n)}$ with support $B_n$ such that  the set of extender sets
\[
Ext_{\cal S}(P_1), \ldots,Ext_{\cal S}(P_{m(n)})
\]
is union-increasing. Then  for every  $n$ there exists an onto mapping
\[
 {\cal E}_n :[\text{patterns with domain } B_n] \to \{1,\ldots, m\}
 \]
defined on a subset of globally admissible patterns in $\cal S$ with support $B_n$ and a partial order $\preccurlyeq_n$ on  $\{1,\ldots, m(n)\}$ such that 
$( {\cal E}_n , \preccurlyeq_n)$ satisfies the definition of a family of restricted o-epitome functions for $\cal S$.
\end{proposition}
\begin{proof}
The construction of a suitable family of epitome functions is straightforward: we let ${\cal E}_n(P_i) = i$ for $i=1,\ldots,m(n)$, and assume that
for all other patterns the function ${\cal E}_n$ is undefined. We introduce  on the values of ${\cal E}_n$ the linear order that is inverse to the natural order on the integer numbers,
\[
m(n)\preccurlyeq_n   \ldots  \preccurlyeq_n 2  \preccurlyeq_n 1.
\]
Let us verify that the obtained $({\cal E}_n,\preccurlyeq_n)$ satisfies the definition of a family of o-epitome functions. 
Since the sequence of extender sets $Ext_{\cal S}(P_i)$ is union-increasing, for every $i$ we have
\[
Ext_{\cal S}(P_i)  \not\subset  \bigcup\limits_{j=1}^{i-1} Ext_{\cal S}(P_i).
\]
This means that there exists a pattern $R$ with support $F_n$ that is compatible with $P_i$ (i.e., belongs to $Ext_{\cal S}(P_i)$) but not with $P_1,\ldots, P_{i-1}$ (i.e., does not belong to $ \bigcup\limits_{j=1}^{i-1} Ext_{\cal S}(P_i)$).
Thus, for every $n\times n$ pattern $P'$  compatible $R$, if ${\cal E}_n(P')$ is defined then
${\cal E}_n(P')\preccurlyeq_n i$, which is exactly the definition of restricted o-epitome function.
\end{proof}

Assume that for every positive integer number $n$ we have a sequence of globally admissible patterns  $P^n_1,\ldots,P^n_{m(n)}$ with support $B_n$ such that  the corresponding sequence  of extender sets
\[
Ext_{\cal S}(P^n_1), \ldots,Ext_{\cal S}(P^n_{m(n)})
\]
is a union-increasing chain. By applying Proposition~\ref{p:km2e} we obtain a family of restricted o-epitome functions ${\cal E}_n$ that have  $m(n)$ values for patterns of size $n\times n$. In general, this  epitome function can be non computable. However, due to a trivial reason it is computable with the oracle $\cal O$ that contains a description of the sequences $P^n_1,\ldots,P^n_{m(n)}$.

If the number $\log m(n)$ grows faster than a linear function (i.e., $m(n) =2^{\omega(n)}$, which is necessary to apply Corollary~\ref{cor:km}), then from the counting argument it follows that Kolmogorov complexity of ${\cal E}_n(P^n_i)$ (even with oracle $\cal O$) is not bounded by a linear function. More precisely, for every $\lambda>0$ and for a large enough $n$ there is a pattern $P^n_i$ such that $\CK^{\cal O}({\cal E}_n(P^n_i))  > \lambda n$. So we can apply Proposition~\ref{pbis:epitomes}(c) and conclude that $\cal S$ is non-sofic.

Thus, if we can prove that a shift $\cal S$ is non-sofic with the technique of Kass and Madden (Corollary~\ref{cor:km}), then due to  Proposition~\ref{p:km2e} we can apply Proposition~\ref{pbis:epitomes}~(c)  (with Kolmogorov complexity relativized conditional on the chosen above oracle $\cal O$) 
 and prove non-soficness of $\cal S$ with the technique of epitomes.
 
 \medskip
 
 In the next two propositions we show that the reverse translation (from the language of ordered epitomes to the language of chains of extender sets) is also possible as long as we use o-epitome functions with decidable domains and resource-unbounded Kolmogorov complexity.   
 
\begin{proposition}\label{p:e2km}
Let  $\cal S$ be a shift space,  $({\cal E}_n,\preccurlyeq_n)$ be a family of restricted o-epitome functions for $\cal S$, 
and  $m=m(n)$ be the number of images of ${\cal E}_n$ computed for globally admissible patterns. 
Then there exists a family of globally admissible  patterns $P_1,\ldots, P_m$  such that the chain of extender sets
\[
Ext_{\cal S}(P_1), \ldots,Ext_{\cal S}(P_m)
\]
is  union-increasing.
\end{proposition}
\begin{proof}
Denote by $e_1,\ldots,e_m$ the images of ${\cal E}_n$. Without loss of generality we may assume that for every $i$
\begin{equation} \label{eq:km}
e_j \not\preccurlyeq_n e_i \text{ for all } j<i 
\end{equation}
(in every finite  partially ordered set we can arrange the indices of $e_j$ so that \eqref{eq:km} holds true: $e_1$ should be one of the maximal elements, $e_2$ must be maximal among the remaining elements, and so on). Then, we fix  globally admissible patterns 
$P_i$ with support $B_n$ such that ${\cal E}_n(P_i) = e_i$. By the definition of an epitome function, for each $i$ there exists a pattern
$R_i$ with support $F_n$ that is compatible with $P_i$ but not compatible with $P_1,\ldots, P_{i-1}$. 
Therefore, $Ext_{\cal S}(P_i)$ is not contained in the union $\bigcup\limits_{j=1}^{i-1} Ext_{\cal S}(P_i)$. Thus, the sequence 
of extender sets $Ext_{\cal S}(P_i)$ is union-increasing, and we are done.
\end{proof}

\begin{proposition}\label{p:counting}
Let $({\cal E},\preccurlyeq)$ be a computable o-epitome function for an effective shift $\cal S$. 

(a) Assume that for  an admissible pattern $P$ of size $n\times n$
\begin{equation}
\label{eq:ck}
\CK({\cal E}(P)) = m,
\end{equation}
and $m\gg \log n$. Then there exist $2^{\Omega(m)}$ different values of ${\cal E}$ corresponding to  patterns of size $n\times n$
(globally admissible or non-admissible).

(b) Assume the domains of ${\cal E}$  are decidable (there is an algorithm that can decide for every $n$ and every pattern $P$ of size $n\times n$ whether  ${\cal E}(P)$ is defined or not)
and that \eqref{eq:ck} with  $m\gg \log n$ holds for at least one globally admissible pattern $P$ of size $n\times n$.
Then there exist $2^{\Omega(m)}$ values of ${\cal E}$ corresponding to globally admissible patterns of this size.

(c) Both propositions (a) and (b) relativize: they remain true for Kolmogorov complexity relativized with any oracle $\cal O$.
\end{proposition}
\begin{proof}
(a)   For a given $n$, we can apply the algorithm computing  ${\cal E}$  to each pattern of size $n\times n$ and run these computations in parallel.  As some of these computations converge, we can write down the revealed values of ${\cal E}$ one by one.
(If the domain of ${\cal E}$ is undecidable, we never know whether the last terminating computation is over,  and some of these computations last infinitely long time without any result.)
Let $N_n$ be the number of all values that eventually appear. 
Each value of ${\cal E}$ can be specified by its index in this list (in the order of appearance)  together with the binary expansions of $n$. The index (position in the list) can be specified by $\lceil \log N_n \rceil$ bits.
Therefore,   for every $n\times n$ pattern $P$ we have
 \[
 \CK({\cal E}(P))  = O(\log N_n) +O (\log n) , 
 \]
and the statement~(a) follows.

(b) Let $N_n$ be the number of values of ${\cal E}(P)$ for all patterns $P$ of size $n\times n$,
and let $N'_n$ be the number of values of ${\cal E}(P)$ for \emph{globally admissible} patterns $P$ of this size.
Denote $\ell = \lceil \log N_n'\rceil$.

If $N'_n > N_n/2$, the statement~(b) follows from~(a). Otherwise, we compute ${\cal E}(P)$ for all patterns of size $n\times n$ (this is possible since the domain of an epitome function is decidable)
and  start enumerating non-admissible patterns of the shift. As non-admissible patterns are progressively revealed, 
we eliminate one by one the values of ${\cal E}(P)$ that do not correspond to any non-discarded  (and potentially globally admissible) pattern $P$. We stop this process
when there remain exactly $2^\ell$ non-eliminated values of epitomes (that can \emph{a priori} represent globally admissible patterns). 
Now the  epitome of  every globally admissible $P$ can be specified by the numbers $n$ and $\ell$ and by the ordinal number  of the value ${\cal E}(P)$ in the list of  $2^\ell$ non-eliminated candidates. Such an ordinal number can be represented by a string of  $\ell_n$ bits. Thus,
\[
 \CK({\cal E}(P))  = O(\ell)+ O (\log n), 
\]
and we are done.

(c) It is easy to see that the argument remains valid for algorithms accessing an oracle.
\end{proof}

Assume that  we can prove non-soficness of a shift $\cal S$  with the technique of ordered epitomes from Proposition~\ref{pbis:epitomes} using Kolmogorov complexity without time bounds. This is possible if we have for $\cal S$ an o-epitome function $({\cal E},\preccurlyeq)$ that is computable (possibly with some oracle $\cal O$) and such that $\CK({\cal E}(P))$ is not bounded by a linear function. Assume now that the union of domains of ${\cal E}$ is a decidable set.
Then it follows from Proposition~\ref{p:counting}~(b)  that  the number of values of ${\cal E}$ for globally admissible patterns grows as $2^{\omega(n)}$. 
Now we apply Proposition~\ref{p:e2km} and obtain a list of globally admissible $n \times n$ patterns $P^n_1,\ldots, P^n_{m(n)}$  
with  $m(n) = 2^{\omega(n)}$ such that the chain of extender sets
$
Ext_{\cal S}(P^n_1), \ldots,Ext_{\cal S}(P^n_{m(n)})
$
is union-increasing.
Hence, we can apply Corollary~\ref{cor:km} and prove non-soficness of $\cal S$  with the argument of Kass and Madden.
 
\medskip

Proposition~\ref{p:km2e} and Proposition~\ref{p:e2km}  show that the technique of union-increasing chains of extender sets is  closely connected with  the version of the technique of ordered epitomes with a decidable domain and the plain Kolmogorov complexity.
On the other hand, it seems that an argument using plain epitome functions with undecidable domains 
(see Example~\ref{ex:busy-beaver}) cannot be translated directly in the language of union-increasing chains of extender sets.
We also recall that the method of epitomes is getting stronger when we use  resource-bounded version of Kolmogorov complexity.
In particular, it helps to handle shifts with sub-exponential block complexity, which is unachievable with Theorem~\ref{th:km} and Corollary~\ref{cor:km}.

\section{Conclusion}

To the best of our knowledge, all previously known proofs of non-soficness for multi-dimension effective shifts (including the one proposed in \cite{pavlov}) can be reformulated in terms of union-increasing chains of extender sets, as proposed in \cite{kass-madden}. Therefore, these proofs can be rephrased in terms of ordered epitomes with the plain Kolmogorov complexity, as we explained in Section~\ref{s:km}.
The technique of epitomes becomes more powerful if we use resource-bounded Kolmogorov complexity; in particular, it allows to handle shifts with sub-exponential block complexity. So it seems that Kolmogorov complexity helps to capture substantial properties that are \emph{necessary} for soficness. It would be interesting to adapt the technique of resource-bounded Kolmogorov complexity to reveal properties that are \emph{sufficient} for a shift to be sofic.

\begin{problem*}
Find {sufficient} conditions of soficness  that can be formulated in terms of resource-bounded Kolmogorov
complexity.
\end{problem*}


\paragraph{Acknowledgments} We thank  Bruno Durand, Alexander Shen, and Ilkka T\"orm\"a for fruitful  discussions.
We are grateful to  Pierre Guillon  and Emmanuel Jeandel  for motivating comments. We also thank the anonymous referees
of STACS~2019 and JCSS  for many insightful comments.

In the  source of  Fig.~\ref{f:sft2sofic} we use a Tikz code based on examples by Marco Miani on  \texttt{texample.net} and Martin Wegmann on \texttt{en.wikipedia.org/wiki/PGF/TikZ}.

\bibliography{nonsofic.bib}

\appendix

\section{Proof of Theorem~\ref{th:dls}}\label{appendix-dls}

It is enough to prove the theorem for shifts of finite type (since a configuration in a sofic shift is a coordinate-wise projection  of a configuration from a shift of finite type). 

We fix an SFT (which can be defined by a set of forbidden patterns $F$ where every pattern in $F$ is a pair of neighboring letters) and show that for every $k$ there exists a locally admissible $(2^{k}+1) \times (2^{k}+1)$-pattern $P_k$ with the required property:
every $n\times n$ square pattern inside $P_k$ has Kolmogorov complexity $O(n)$.

To this end we choose an arbitrary coloring of the border line of a square of size $(2^{k}+1) \times (2^{k}+1)$ so that this coloring can be extended 
in at least one way to a locally admissible coloring of the entire  square, see Fig.~\ref{f:recursuve-construction}~(a). 
Since the set of all colorings of the square is finite,  we can find algorithmically
(by a naive brute-force search) the lexicographically first coloring
of the horizontal and vertical centerlines of the square that are compatible with the fixed coloring of the border line (i.e., the coloring of the border line together
with coloring of the centerlines can be extended to a locally admissible coloring of the whole square),  see Fig.~\ref{f:recursuve-construction}~(b). Observe that the centerlines split the square into four  squares of size $(2^{k-1}+1) \times (2^{k-1}+1)$.

\begin{figure}[H]
\vspace{5pt}
\begin{center}
  \begin{tikzpicture}[scale=0.14,x=1cm,baseline=2.125cm]
 
 \begin{scope}[shift={(0,0)}]
    \foreach \x in {1,...,17} \foreach \y in {1,...,17}
    {
        \path[draw=black] (\x,\y) rectangle ++ (1,1);
    }

   \foreach \x in {1,...,17} \foreach \y in {1,17}
    {
        \path[fill=blue!60,draw=black] (\x,\y) rectangle ++ (1,1);
        \path[fill=blue!60,draw=black] (\y,\x) rectangle ++ (1,1);
    }
                      
\end{scope}

 \begin{scope}[shift={(27,0)}]
 
    \foreach \x in {1,...,17} \foreach \y in {1,...,17}
    {
        \path[draw=black] (\x,\y) rectangle ++ (1,1);
    }

   \foreach \x in {1,...,17} \foreach \y in {1,9,17}
    {
        \path[fill=blue!60,draw=black] (\x,\y) rectangle ++ (1,1);
        \path[fill=blue!60,draw=black] (\y,\x) rectangle ++ (1,1);
    }

\end{scope}
 \begin{scope}[shift={(54,0)}]
 
    \foreach \x in {1,...,17} \foreach \y in {1,...,17}
    {
        \path[draw=black] (\x,\y) rectangle ++ (1,1);
    }

   \foreach \x in {1,...,17} \foreach \y in {1,5,9,13,17}
    {
        \path[fill=blue!60,draw=black] (\x,\y) rectangle ++ (1,1);
        \path[fill=blue!60,draw=black] (\y,\x) rectangle ++ (1,1);
    }

  \end{scope}

  \node at (9.5,-2) {(a)};
  \node at (36.5,-2) {(b)};
  \node at (63.5,-2) {(c)};  
 
   \node at (23,9)  [thick, font=\fontsize{20}{20}\selectfont, thick]  {$\Rightarrow$ };
   \node at (50,9)  [thick, font=\fontsize{20}{20}\selectfont, thick]  {$\Rightarrow$ };
   \node at (77,9)  [thick, font=\fontsize{20}{20}\selectfont, thick]  {$\Rightarrow$ };
   \node at (83,9)  [thick, font=\fontsize{20}{20}\selectfont, thick]  {...}; 
              
\end{tikzpicture}

\caption{Three steps of the recursive coloring procedures for a pattern of size $(2^n+1)\times (2^n+1)$.} \label{f:recursuve-construction}
\end{center}
\end{figure}

Then, we repeat the same procedure recursively: in each of the four squares of size  $(2^{k-1}+1) \times (2^{k-1}+1)$ we find the lexicographically first coloring
of their  horizontal and vertical centerlines that are compatible with the coloring of the border around each of these squares (see Fig.~\ref{f:recursuve-construction}~(c)), and so on. 
On the last step we end up with the lexicographically first valid coloring of isolated patterns of size $1\times 1$ (that must be coherent with the chosen above coloring of the neighborhood). This concludes the construction of $P_k$.

In the procedure  explained above, the initial square of size  $(2^{k}+1) \times (2^{k}+1)$ is split into four  squares of size  $(2^{k-1}+1) \times (2^{k-1}+1)$, 
then into sixteen   squares of size  $(2^{k-2}+1) \times (2^{k-2}+1)$, etc. This recursive procedure  finds the coloring of each of these
squares in ``standard'' positions given (as a kind of boundary condition) the coloring of the border around this square. 

Thus, to find the assigned coloring of a square of size    $(2^{m}+1) \times (2^{m}+1)$ in a ``standard'' position,  the recursive procedure needs to know 
only the coloring of the border around this square (which requires $O(2^{m})$ letters and therefore $O(2^{m})$ bits of information);
it is not hard to see that   the recursive call runs in time

\[
 2^{O\big((2^{m}+1) \times (2^{m}+1)\big)}+   4\times 2^{O\big((2^{m-1}+1) \times (2^{m-1}+1)\big)}+ 16\times 2^{O\big((2^{m-2}+1) \times (2^{m-2}+1)\big)}+\ldots
=  2^{O(2^{2m})}
\]
(this is a recursive computation of depth $m$,  with a brute-force search and four recursive calls on each level of the hierarchy).

An arbitrary square of size $n\times n$ (possibly in a non-standard position) is covered by at most four ``standard'' squares (of size at most twice bigger than
$n$), see Fig.~\ref{f:recursuve-construction-covering}.

\begin{figure}[H]
\begin{center}
  \begin{tikzpicture}[scale=0.20,x=1cm,baseline=2.125cm]

 \begin{scope}[shift={(27,0)}]
 
     \foreach \x in {1,...,17} \foreach \y in {1,...,17}
    {
        \path[draw=black] (\x,\y) rectangle ++ (1,1);
    }

   \foreach \x in {1,...,17} \foreach \y in {1,9,17}
    {
        \path[fill=blue!60,draw=black] (\x,\y) rectangle ++ (1,1);
        \path[fill=blue!60,draw=black] (\y,\x) rectangle ++ (1,1);
    }

    \draw [pattern=crosshatch, pattern color=red!90, thick, dashed]  (7,6) rectangle (14,13) ;
\end{scope}

\end{tikzpicture}

\caption{A pattern of size $7\times 7$ (hatched in red) covered by a quadruple of ``standard'' blocks of size $9\times 9$.} \label{f:recursuve-construction-covering}
\end{center}
	\vspace{-10pt}
\end{figure}

Thus, to identify an $n\times n$-pattern inside $P_k$,  it is enough to describe the quadruple of standard squares 
covering this pattern and, in addition, the position (the coordinates of the corners) of this pattern with respect to the covering 
standard squares.  In turn, each standard square is, by construction, determined by its border line.
Hence, every $n\times n$ square in $P_k$ can be specified with only $O(n)$ bits of information;
moreover, it can be recovered from this description in time $2^{O(n^2)}$  (at first we reconstruct
the four standard squares from their border lines, and then cut out the pattern with the given coordinates 
of the corners).

To conclude the proof of the theorem, we observe that $P_k$ are defined for arbitrarily large $k$, and the compactness argument gives a valid coloring
of the entire $\mathbb{Z}^2$ with the required property.

\section{Proof of Theorem~\ref{th:deep-shift}}\label{appendix-deep-shift}

The high level scheme of the proof of Theorem~\ref{th:deep-shift} is similar to the proof of Theorem~\ref{th:deep-shift-simple}. 
At first we construct ``standard'' building blocks of growing size so that all patterns in these blocks have a large time-bounded
Kolmogorov complexity. Then we define a shift as the closure of these standard patterns.

\underline{Stage 1: building standard blocks}.
We fix parameters $n_i$, $i=0,1,2,\ldots$
\begin{equation}
\label{eq:def-n_0}
   n_0 \gg 1,\  n_{i+1} = (n_0\cdot \ldots \cdot n_i)^c
\end{equation}
(the constant $c$ to be specified later)
 and 
\[
 N_ i :=  n_0\cdot \ldots \cdot n_i.
\]

 In what follows we define inductively  \emph{standard blocks} $Q_i^j$ of size $N_i\times N_i$
 for $i=0,1,2,\ldots$ The major difference between this proof and  the proof of Theorem~\ref{th:deep-shift-simple} is that for each $i$ we define  a set  of $\ell_i = n_{i+1}^2$ standard blocks of level $i$
\[
   Q_i^1, \ldots, Q_i^{\ell_i}
\]
(instead of only two of standard patterns used in the previous proof).  
The construction is hierarchical: each standard block of rank $i$ is defined as an $n_i\times n_i$ array of standard blocks of level $(i-1)$.
 In this inductive construction, we maintain for each $i$ the following property:
 
  \smallskip
  
 \noindent
 \emph{Main Property}:   For every $i>0$,  for the 
 list of standard blocks $Q_i^{1}, \ldots, Q_i^{\ell_i}$, we have that
\begin{equation}
\label{eq:main-p}
  \CK^{t_i} (Q_i^{1}, \ldots, Q_i^{\ell_i} ) \ge  \ell_i \log (\ell_{i-1}!)  = (n_{i+1})^2  \cdot \log \big((n_i^2)!\big)
\end{equation}
 (the time threshold $t_i$ to be specified later).
 
 \smallskip
 
 Our construction is explicit, and the standard blocks $Q_i^{j}$ are computable (given $i$ and $j$) though the time required for this computation
 can be pretty large.

 \begin{remark}
 Let us explain the intuition behind this construction. 
 We inductively construct standard blocks that have the following two properties.
 On the one hand, 
 all large enough patterns in each standard $N_i\times N_i$ blocks should have a high  time-bounded Kolmogorov complexity
 due to the ``local structure'': each of the standard blocks of rank $(i-1)$ (of size  $N_{i-1}\times N_{i-1}$) is complex,
 and these blocks  are independent of each other.
On the other hand, the whole standard block of size  $N_i\times N_i$ should have a high time bounded Kolmogorov complexity (with even a bigger time bound) due to the information embedded in its ``global structure'' (how the  blocks of level $(i-1)$ are arranged inside of a block of level $i$).  
To carry out the inductive argument, we construct for each $i$ a large  family of  standard blocks of size  $N_i\times N_i$, and
these blocks must be in some sense ``independent.''
 \end{remark}

\emph{Base of induction:}
We  defined blocks $Q_0^{1}, \ldots, Q_0^{\ell_0} $  as arbitrary distinct binary matrices of size $n_0\times n_0$
(we can do it assuming that $2^{n_0^2}> \ell_0 = n_1^2$). 
\smallskip

\emph{Inductive step:} Given a set of standard patterns $Q_i^{1}, \ldots, Q_i^{\ell_i} $ of size $N_i\times N_i$, we construct the
standard patterns of the next level, $Q_{i+1}^{1}, \ldots, Q_{i+1}^{\ell_i} $ of size $N_{i+1}\times N_{i+1}$. Each new pattern $Q_{i+1}^{j}$
is defined as an $n_{i+1}\times n_{i+1}$ array composed of  blocks $Q_i^{1}, \ldots, Q_i^{\ell_i}$. 

Since $\ell_i = n_{i+1}^2$, we may require that each block   $Q_{i}^{k}$ is used  exactly once in  every $Q_{i+1}^{j}$. In other words,
every standard block  $Q_{i+1}^{j}$  can be represented by  a permutation 
 \[
   \pi_ j \ : \ \{1,\ldots, \ell_i\} \to \{1,\ldots, \ell_i\} 
 \]
that arranges the set of  standard patterns of the previous level.

There are $(\ell_i \,!)$ possibilities to choose each  permutations $\pi_ j$ 
and $(\, \ell_i\,! \, )^{\ell_{i+1}}$  possibilities to choose all permutations  $ \pi_ 1 ,\ldots,  \pi_ {\ell_{i+1}}$. 
From a trivial counting argument it follows that there exists a tuple of permutations $ \pi_ 1 ,\ldots,  \pi_ {\ell_{i+1}}$ such that
\[
  \CK( \pi_ 1 ,\ldots,  \pi_ {\ell_{i+1}}) \ge \ell_{i+1} \log (\ell_i!)
\]
Therefore, for every computable threshold $h_i$, we can algorithmically find a tuple of permutations such that
 \[
  \CK^{h_{i+1}}( \pi_ 1 ,\ldots,  \pi_ {\ell_{i+1}}) \ge \ell_{i+1} \log (\ell_i!).
 \]
Since each $\pi_j$ is uniquely determined by the corresponding pattern $Q_{i+1}^j$, we obtain
\[
  \CK^{h_{i+1}-\mathrm{gap}_{i+1}}( Q_{i+1}^{1} ,\ldots,  Q_{i+1}^{\ell_{i+1}}) \ge \ell_{i+1} \log (\ell_i!),
\]
where $\mathrm{gap}_{i+1}$ is the time required to extract the permutations $\pi_j$ from the corresponding patterns  $Q_{i+1}^j$. 

Thus, if the threshold $h_{i+1}$ is much bigger than $t_i$, we conclude that 
\[
  \CK^{t_i}( Q_{i+1}^{1} ,\ldots,  Q_{i+1}^{\ell_{i+1}}) \ge \ell_{i+1} \log (\ell_i!),
\]
 i.e., the \emph{Main Property} holds true for the level $i+1$.

\begin{lemma}\label{l:1}
For every $i>0$ and every tuple of $k$ (pairwise different) standard blocks $Q_{i}^{{j_1}} ,\ldots,  Q_{i}^{j_k}$, we have that
\[
 \CK^{t_i'}(Q_{i}^{{j_1}} ,\ldots,  Q_{i}^{j_k}) \ge \frac12 k \log (\ell_{i-1}!)
\]
(the threshold $t_i'$ is specified in what follows). 
\end{lemma}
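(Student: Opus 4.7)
The plan is to argue by contradiction using the Main Property~\eqref{eq:main-p}: if some $k$-subset $(Q_i^{j_1},\ldots,Q_i^{j_k})$ could be computed in time $t_i'$ from strictly fewer than $\tfrac{1}{2}k\log(\ell_{i-1}!)$ bits, I would combine this short program with a generic description of the remaining $\ell_i-k$ blocks and derive a total description of $(Q_i^1,\ldots,Q_i^{\ell_i})$ that is too short. The generic description exploits the fact that \emph{every} standard block $Q_i^m$ contains each of the $\ell_{i-1}$ standard blocks of level $i-1$ exactly once. Concretely, I would describe $(Q_i^1,\ldots,Q_i^{\ell_i})$ by concatenating three pieces: (a) the assumed program $p$ for the subset; (b) the indices $j_1<\cdots<j_k$, costing $O(k\log\ell_i)$ bits; (c) for each missing $m\notin\{j_1,\ldots,j_k\}$, a permutation $\sigma_m\in S_{\ell_{i-1}}$, all of them packed into a single integer in $\{0,\ldots,(\ell_{i-1}!)^{\ell_i-k}-1\}$ so that the total cost is $(\ell_i-k)\log(\ell_{i-1}!)+O(1)$ bits.

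The decoder first runs $p$ to recover the subset, then reads the sub-blocks of $Q_i^{j_1}$ in row-major order to obtain a canonical enumeration $(B_1,\ldots,B_{\ell_{i-1}})$ of all level-$(i-1)$ standard blocks (they appear exactly once in $Q_i^{j_1}$). For each missing index $m$ it applies $\sigma_m$ to $(B_1,\ldots,B_{\ell_{i-1}})$ to rebuild $Q_i^m$, and finally uses the indices to place everything in its proper slot. The total description length is bounded by
\[
\tfrac{1}{2}k\log(\ell_{i-1}!)+(\ell_i-k)\log(\ell_{i-1}!)+O(k\log\ell_i) \;=\; \ell_i\log(\ell_{i-1}!)-\tfrac{1}{2}k\log(\ell_{i-1}!)+O(k\log\ell_i).
\]
To contradict \eqref{eq:main-p} I need $\tfrac{1}{2}\log(\ell_{i-1}!)$ to dominate $O(\log\ell_i)$. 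Stirling gives $\log(\ell_{i-1}!)=\Theta(n_i^2\log n_i)$, while \eqref{eq:def-n_0} forces $n_{i+1}\le N_i^c$, hence $\log\ell_i=2\log n_{i+1}=O(i\log n_i)$; thus the inequality $n_i^2\log n_i\gg i\log n_i$ holds for all large enough $n_0$, and the counting goes through.

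For the time bound, the decoder spends $t_i'$ steps to run $p$ and then $\ell_i\cdot\poly(N_i)$ steps to unpack the permutations, look up the $B_r$'s by position, and assemble the $\ell_i-k$ reconstructed blocks. Choosing $t_i'$ so that $t_i'+\ell_i\cdot\poly(N_i)\le t_i$, the whole reconstruction fits within the $t_i$ budget of \eqref{eq:main-p}, completing the contradiction. The main delicate point is the packing of the $\ell_i-k$ permutations: if one encoded each $\sigma_m$ separately with a ceiling, one would pay an $O(\ell_i)$ overhead, which (given the very fast growth dictated by \eqref{eq:def-n_0}) is not necessarily negligible against $\tfrac{1}{2}k\log(\ell_{i-1}!)$ for small $k$; the joint-integer encoding eliminates this loss. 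A secondary subtlety is that the canonical enumeration $(B_1,\ldots,B_{\ell_{i-1}})$ extracted from $Q_i^{j_1}$ is not the same labeling used in the original construction, but since the $\sigma_m$'s are defined relative to this same canonical enumeration, no additional bits are needed to record the relabeling.
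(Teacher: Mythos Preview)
Your argument is correct and matches the paper's: a contradiction obtained by combining the assumed short program for the $k$ blocks with the indices $j_1,\ldots,j_k$ and permutation-encodings of the remaining $\ell_i-k$ blocks, yielding a description of the full list that undercuts the Main Property~\eqref{eq:main-p}. The only differences are cosmetic: you extract the level-$(i{-}1)$ blocks from $Q_i^{j_1}$ whereas the paper recomputes them from scratch (adding the assumption that this takes time $\ll t_i$), and your explicit joint-integer packing of the $\sigma_m$'s is more careful than the paper, which simply asserts the $(\ell_i-k)\log(\ell_{i-1}!)$ cost without addressing rounding.
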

\begin{remark}
The factor of $\frac12$ in this lemma could be changed to any constant less than $1$.
\end{remark}
\begin{proof}[Proof of lemma] Assume for the sake of contradiction that for some blocks $Q_{i}^{{j_1}} ,\ldots,  Q_{i}^{j_k}$ we have
\[
 \CK^{t_i'}(Q_{i}^{{j_1}} ,\ldots,  Q_{i}^{j_k}) < \frac12 k \log (\ell_{i-1}!)
\]
Then we can provide the following description of the entire list of standard blocks of level~$i$:
 \begin{itemize}
 \item a description of size  $\frac12 k \log (\ell_{i-1}!)$ for these particular $k$ blocks,
 \item $k\log \ell_{i}$ bits to specify the indices $j_1,\ldots,j_k$ of these particular $k$ block,
 \item a straightforward  description  for the other standard blocks of level $i$, which requires  $(\ell_{i}-k)\log (\ell_{i-1}!)$ bits.
 \end{itemize}
We need to add an overhead of size $O(\log k+\log i )$ to join these  parts in one description.
It remains to observe that the summarized length of these data is less than $ \ell_{i+1} \log (\ell_i!)$ 
(the right-hand side of \eqref{eq:main-p}).

Let us estimate the time required to produce the list of all standard blocks of level~$i$ using this description. First of all, we 
compute the list of all standard blocks of lower rank $Q_{i-1}^j$ (we have to assume that the time  required to compute these blocks is much less 
than~$t_i$). Then, in time $t_i'$ we  obtain the $k$ ``peculiar'' blocks with a short description.  With $\poly(N_{i})$ steps  we generate 
the remaining  $(\ell_i -k)$ standard blocks, and in $\poly(N_{i})$ more steps we merge all blocks  in one list.
Assuming that $t_i$ is large enough, we get a contradiction with the \emph{Main Property} of standard blocks.
\end{proof}

\begin{lemma}\label{l:2}
For every standard block of rank $i$,  the plain Kolmogorov complexity of $Q_i^j$ is very low:
\[
 \CK(Q_i^j) = O(\log N_i).
\]

\end{lemma}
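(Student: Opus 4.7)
The plan is to show that the entire construction of the standard blocks is deterministic and algorithmic once we fix the small parameters of the scheme (the constant $c$ from \eqref{eq:def-n_0}, the initial value $n_0$, and the thresholds $h_i$), so a short description of $Q_i^j$ consists essentially of the pair of indices $(i,j)$.

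First I would observe that at each level the construction described in the inductive step is effective: given the full list of blocks of level $i-1$, the procedure searches lexicographically through all tuples of permutations $(\pi_1,\ldots,\pi_{\ell_i})$ and outputs the first one that satisfies the required lower bound on $\CK^{h_i}$, which is a decidable property since $h_i$ is computable. Therefore, starting from the explicitly described base case $Q_0^1,\ldots,Q_0^{\ell_0}$, the whole family $\{Q_k^j : k\le i,\ 1\le j\le \ell_k\}$ is algorithmically determined by the single integer $i$ (together with the fixed constants of the construction, which contribute only an $O(1)$ overhead to the complexity).

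Next, to single out a particular block $Q_i^j$ from this family, it suffices to specify the pair of indices $(i,j)$ to the universal machine. The index $j$ ranges over $\{1,\ldots,\ell_i\}$ with $\ell_i=n_{i+1}^2=N_i^{2c}$, so it takes $\log \ell_i = 2c\log N_i + O(1) = O(\log N_i)$ bits. The index $i$ takes only $O(\log i)$ bits, and since the sequence $N_i$ grows at least geometrically (in fact super-exponentially in $i$), we have $i = O(\log N_i)$, so the contribution of $i$ is negligible. A self-delimiting encoding of the pair $(i,j)$ therefore fits into $O(\log N_i)$ bits, and applying the universal machine to this code yields $Q_i^j$. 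Hence $\CK(Q_i^j)=O(\log N_i)$.

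There is no real obstacle here: this lemma speaks about \emph{plain} Kolmogorov complexity (no time bound), so we are free to run the potentially very slow search used to define the $\pi_j$ as part of the decoding procedure. The only minor point is to verify that $\log \ell_i$ and $\log i$ are both absorbed into $O(\log N_i)$, which follows directly from the parameter choice \eqref{eq:def-n_0} with $c$ a fixed constant. The argument is used elsewhere (for instance in the bound on the plain complexity of $k\times k$ patterns in Claim~3 of the proof of Theorem~\ref{th:deep-shift-simple}), and it carries over verbatim to the present, more elaborate hierarchy of standard blocks.
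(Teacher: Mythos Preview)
Your proof is correct and follows essentially the same approach as the paper: the list of all standard blocks of level $i$ is computable from $i$ alone, so $Q_i^j$ is determined by the pair $(i,j)$, and $\log \ell_i = \log(n_{i+1}^2) = 2c\log N_i = O(\log N_i)$. The paper's version is more terse, but the content is the same.
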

\begin{proof}
There are $\ell_i$ standard blocks of level $i$, and the list of all standard blocks can be computed given $i$. Therefore,
\[
\CK(Q_i^j) = O(\log \ell_i).
\]
For the chosen parameters we have $\log \ell_i = \log (n_{i+1}^2) = O(\log N_i)$.
\end{proof}
Observe that the construction of standard blocks is explicit, and we can choose a computable function $\hat T(n)$ so that  all standard blocks are computed given $i$ in  time $\hat T(n)$. This observation implies the statement of Remark~\ref{r:improved-upper-bound}.

\smallskip

\underline{Stage 2: complexity of patterns inside standard blocks}.
Let $P$ be an array of $m\times m$ (pairwise distinct) standard  blocks of level $i$. 
Observe that the size of $P$ (measured in individual letters) is $k\times k$, where $k=mN_i$.

Due to Lemma~\ref{l:1},  we have 
\[
 \CK^{t_i'}(P) \ge \frac12 m^2 \log (\ell_{i-1}!) = \Omega(m^2 \ell_{i-1}) = \Omega\big((mn_i)^2\big).
\]
In other words, the time-bounded Kolmogorov complexity of this pattern of size $k\times k$ is
bigger than 
\[
 \Omega\big((mn_i)^2\big) =  \Omega\big(k^2 / (n_0\cdot \ldots \cdot n_{i-1})^2\big).
\]

\smallskip

\underline{Stage 3: the closure of standard blocks.}
We define a shift as the closure of the standard blocks: a finite pattern  is globally admissible if and only if
it appears in a $2\times 2$ array composed of standard blocks (of some level $i$).
Due to the hierarchical structure if standard blocks, we can conclude that an $N_i\times N_i$ pattern is \emph{not} globally admissible if
it never appears in $2\times 2$ arrays composed of standard blocks of rank $i$.

Let $P$ be a globally admissible pattern of size $k\times k$ in the shift defined above. Let $2N_i \le k < 2N_{i+1}$ for some $i>0$.
Then $P$ must be covered by a $2\times 2$ array of standard blocks of level $i+1$. Therefore, a constant fraction of  $P$
can be represented as an array of blocks of level $i$ (inside a standard block of level $(i+1)$). 
We can apply the bound from Stage 2 and conclude that
\begin{equation}
\label{eq:lower-bound}
 \CK^{T_i}(P) =  \Omega(k^2/(n_0\cdot \ldots \cdot n_{i-1})^2)
\end{equation}
(assuming that the gap between $t'_i$ and $T_i$ is large enough).

On the other hand, the plain Kolmogorov complexity of every globally admissible block is very low. Indeed, to describe a globally admissible 
block $P$ of size $k$, we need to specify four standard blocks covering $P$ and the position of $P$ with respect to the 
grid of standard blocks. Due to Lemma~\ref{l:2}, we obtain
\[
 \CK(P) =  O(\log k). 
\]

\smallskip

\underline{Stage 4: the choice of parameters}.
We chose constants $c$ in \eqref{eq:def-n_0} so that for $k=\Omega(N_i)$ equality \eqref{eq:lower-bound} rewrites to 
\[
 \CK^{T_i}(P) =  \Omega(k^{2-\epsilon}).
\]

It remains to comment the choice of time bounds. The threshold $T_i$ is given in the theorem. Given $T_i$, we choose
a suitable threshold  $t'_i$ (the gap between $t'_i$ and $T_i$ must be large enough, so that the argument on Stage~3 works).
Then, given $t_i'$ we choose a suitable value $t_i$  (the gap between $t_i$ and $t'_i$ must be large enough, so that the proof of Lemma~\ref{l:1} is valid). 
The choice of $t_i$ determines the value of $h_i$ on Stage~1.   The definition of these sequences is inductive: to define
$t_i$ we need to know $t_{i-1}$ and the  time required to produce the standard blocks of level $(i-1)$ (see the proof of Lemma~\ref{l:1}).

As $T_i$ is a computable function of $i$, we can choose computable  sequences $t''_i$, $t'_i$, $t_i$, and $h_i$. This observation concludes the proof.

\section{Proof of Lemma~\ref{l:kass-madden}}\label{appendix-kass-madden}

\begin{figure}
\begin{center}
  \begin{tikzpicture}[scale=0.27,x=1cm,baseline=2.125cm]

    \foreach \x in {-22,...,8} \foreach \y in {8,...,15}
    {
          \path[fill=black!7,draw=black] (\x,\y) rectangle ++ (1,1);
    }

    \foreach \x in {-22,...,8} \foreach \y in {16,...,23}
    {
          \path[fill=black!7,draw=black] (\x,\y) rectangle ++ (1,1);
    }

    \foreach \x in {-22,...,-1} \foreach \y in {0,...,7}
    {
          \path[fill=black!7,draw=black] (\x,\y) rectangle ++ (1,1);
    }

     
    \foreach \x in {0,...,8} \foreach \y in {0,...,7}
    {
          \path[fill=black!7,draw=black] (\x,\y) rectangle ++ (1,1);
    }

 \edef\y{0}
 \edef\xmax{3}
 \foreach \x in {0,...,\xmax}
 \path[fill=black!56,draw=black] (\x,\y) rectangle ++ (1,1);

 \edef\xmin{-19}  
 \foreach \x in {\xmin,...,-1} 
 \path[fill=black!66,draw=black] (\x,\y) rectangle ++ (1,1);
 \edef\xmaxtop{4}  
 \edef\ytop{23}  
 \foreach \x in {\xmin,...,\xmaxtop} 
 \path[fill=red!41,draw=black] (\x,\ytop) rectangle ++ (1,1);

 \edef\y{1}
 \edef\xmax{2}
 \foreach \x in {0,...,\xmax}
 \path[fill=black!56,draw=black] (\x,\y) rectangle ++ (1,1);

 \edef\xmin{-18}  
 \foreach \x in {\xmin,...,-1} 
 \path[fill=black!66,draw=black] (\x,\y) rectangle ++ (1,1);
 \edef\xmaxtop{3}  
 \edef\ytop{22}  
 \foreach \x in {\xmin,...,\xmaxtop} 
 \path[fill=red!41,draw=black] (\x,\ytop) rectangle ++ (1,1);

 \edef\y{2}
 \edef\xmax{7}
 \foreach \x in {0,...,\xmax}
 \path[fill=black!56,draw=black] (\x,\y) rectangle ++ (1,1);

\edef\xmin{-11}  
 \foreach \x in {\xmin,...,-1} 
 \path[fill=black!66,draw=black] (\x,\y) rectangle ++ (1,1);
 \edef\xmaxtop{8}  
 \edef\ytop{21}  
 \foreach \x in {\xmin,...,\xmaxtop} 
 \path[fill=red!41,draw=black] (\x,\ytop) rectangle ++ (1,1);

 \edef\y{3}
 \edef\xmax{4}
 \foreach \x in {0,...,\xmax}
 \path[fill=black!56,draw=black] (\x,\y) rectangle ++ (1,1);

\edef\xmin{-12}  
 \foreach \x in {\xmin,...,-1} 
 \path[fill=black!66,draw=black] (\x,\y) rectangle ++ (1,1);
 \edef\xmaxtop{5}  
 \edef\ytop{20}  
 \foreach \x in {\xmin,...,\xmaxtop} 
 \path[fill=red!41,draw=black] (\x,\ytop) rectangle ++ (1,1);

 \edef\y{4}
 \edef\xmax{3}
 \foreach \x in {0,...,\xmax}
 \path[fill=black!56,draw=black] (\x,\y) rectangle ++ (1,1);

\edef\xmin{-11}  
 \foreach \x in {\xmin,...,-1} 
 \path[fill=black!66,draw=black] (\x,\y) rectangle ++ (1,1);
 \edef\xmaxtop{4}  
 \edef\ytop{19}  
 \foreach \x in {\xmin,...,\xmaxtop} 
 \path[fill=red!41,draw=black] (\x,\ytop) rectangle ++ (1,1);

 \edef\y{5}
 \edef\xmax{1}
 \foreach \x in {0,...,\xmax}
 \path[fill=black!56,draw=black] (\x,\y) rectangle ++ (1,1);

\edef\xmin{-11}  
 \foreach \x in {\xmin,...,-1} 
 \path[fill=black!66,draw=black] (\x,\y) rectangle ++ (1,1);
 \edef\xmaxtop{2}  
 \edef\ytop{18}  
 \foreach \x in {\xmin,...,\xmaxtop} 
 \path[fill=red!41,draw=black] (\x,\ytop) rectangle ++ (1,1);

 \edef\y{6}
 \edef\xmax{3}
 \foreach \x in {0,...,\xmax}
 \path[fill=black!56,draw=black] (\x,\y) rectangle ++ (1,1);

\edef\xmin{-7}  
 \foreach \x in {\xmin,...,-1} 
 \path[fill=black!66,draw=black] (\x,\y) rectangle ++ (1,1);
 \edef\xmaxtop{4}  
 \edef\ytop{17}  
 \foreach \x in {\xmin,...,\xmaxtop} 
 \path[fill=red!41,draw=black] (\x,\ytop) rectangle ++ (1,1);

 \edef\y{7}
 \edef\xmax{5}
 \foreach \x in {0,...,\xmax}
 \path[fill=black!56,draw=black] (\x,\y) rectangle ++ (1,1);

\edef\xmin{-3}  
 \foreach \x in {\xmin,...,-1} 
 \path[fill=black!66,draw=black] (\x,\y) rectangle ++ (1,1);
 \edef\xmaxtop{6}  
 \edef\ytop{16}  
 \foreach \x in {\xmin,...,\xmaxtop} 
 \path[fill=red!41,draw=black] (\x,\ytop) rectangle ++ (1,1);


 \path[draw=black,line width=0.7mm] (0,0) rectangle ++ (8,8);



 \path[draw=blue,line width=0.9mm,dashed] (-12,3) rectangle ++ (18,18);
 \path[fill=black!7,draw=black,line width=0.6mm] (5,3) rectangle ++ (1,1);

 \draw[-latex,ultra thick,blue!80](12,10)node[right]{due to the bar of red letters}
       to[out=-90,in=0] (6.2,3.5);
\node[right,blue!80] at (12,8.5){on the top, this row in the };
\node[right,blue!80] at (12,7.0){$8\times 8$ frame can start with at};
\node[right,blue!80] at (12,5.5){most $5$ black letters};

 \draw[-latex,ultra thick,black!95](15,-6)node[below]{we control  $\max$ of ${\cal E}$ for this $n\times n$ frame}
       to[out=180,in=-90] (5.5,1.5);

\node[right,black!95] at (-27.3,0.3){\tiny line $1$};
\node[right,black!95] at (-27.3,1.3){\tiny line $2$};
\node[right,black!95] at (-27.3,2.3){\tiny line $3$};

\node[right,black!95] at (-27.3,5.3){\tiny  $\vdots$};

\node[right,black!95] at (-27.3,7.3){\tiny line $n$};

\node[right,black!95] at (-27.3,16.3){\tiny line $2n+1$};
\node[right,black!95] at (-27.3,17.3){\tiny line $2n+2$};

\node[right,black!95] at (-27.3,20.7){\tiny  $\vdots$};

\node[right,black!95] at (-27.3,23.3){\tiny line $3n$};

\draw [decorate,decoration={brace,amplitude=10pt},xshift=0pt,yshift=4pt,thick] (-19,24.3) -- (5,24.3) node [black,midway,yshift=17.0] {\footnotesize $3n$};

\draw [decorate,decoration={brace,amplitude=10pt},xshift=0pt,yshift=4pt,thick] (4,-0.5) -- (-19,-0.5) node [black,midway,yshift=-17.0] {\footnotesize $3n-1$};

\end{tikzpicture}
\end{center}
\caption{An $n\times n$ pattern $P$ with a neighborhood that enforces the desired maximum of~${\cal E}$.} \label{f:012-enforcing-standard}
\vspace{-5pt}
\end{figure}

In this section we sketch the proof of Lemma~\ref{l:kass-madden}. For every \emph{simple pattern} $P$ of size $n\times n$ we should construct a configuration $R$ on the complement of $B_n$, so that 
 \begin{itemize}
 \item[(i)] $P$ and $R$ are compatible,
 \item[(ii)] for every other simple pattern $P'$ compatible with $R$ we have ${\cal E}(P')\preccurlyeq {\cal E}(P)$.
 \end{itemize}
To build the required $R$, we follow the construction from \cite{kass-madden}.

By definition, each row of $P$ consists of a contiguous sequence of  black letters followed by a contiguous sequence of  white letters, as shown in Fig.~\ref{f:012-standard}. The  pattern $R$ will consist of a finite number of black and red letters (the other letters will be white).

\textbf{Black letters in $R$.}
To construct $R$, we extend each stripes of black letters in $P$ to the left, so that in the first line we get a contiguous sequence of $(3n-1)$ black letters (including those black letters that belong to $P$), in the second line  a contiguous sequence of $(3n-3)$  black letters, in the third line  a contiguous sequence of $(3n-5)$  black letters, etc. In the $n$-th line we obtain a contiguous sequence of  $(n+1)$  black letter, see Fig.~\ref{f:012-enforcing-standard}.

\textbf{Red letters in $R$.} Similarly, we put in $R$ stripes of  red letters: $3n$ contiguous red letters in line $3n$,  $(3n-2)$ contiguous red letters in line $3n-1$,
 \ldots, $(n+2)$ contiguous red letters in line $(2n+1)$. We place these stripes of red letters so that for each $i=1,\ldots,n$  the leftmost red letter in the line $(3n-i+1)$ is vertically  aligned  with the leftmost black letter in the line $i$, as shown in Fig.~\ref{f:012-enforcing-standard}. 

All other letters outside $B_n$ are made white.

\smallskip

\emph{Claim~1.}  The constructed $R$ is compatible with $P$. 

\smallskip

\emph{Proof of Claim~1:}
This fact is easy to verify: we have chosen the lengths of black and red stripes so that they cannot form a forbidden pattern (as in Fig.~\ref{f:012-forbidden}), regardless the horizontal placement of each stripes. Indeed, on the one hand, the black letters of the $i$-th line cannot interfere with the red stripes in lines $3n, 3n-1, \ldots, 3n-i$, since \emph{this black stripe} is too short to form a forbidden pattern together with any of these red stripes; on the other hand, the black letters of the $i$-th line cannot interfere 
with the red stripes in lines $3n-i-1, 3n-i-2, \ldots, 2n+1$, since \emph{those red stripes} are too short. \qed

\smallskip

\emph{Claim ~2.}  The constructed $R$ is compatible only with simple patterns $P'$ such that  ${\cal E}(P')\preccurlyeq {\cal E}(P)$.

\smallskip


%
\begin{figure}[H]
\begin{center}
  \begin{tikzpicture}[scale=0.24,x=1cm,baseline=2.125cm]

    \foreach \x in {-22,...,8} \foreach \y in {8,...,15}
    {
          \path[fill=black!7,draw=black] (\x,\y) rectangle ++ (1,1);
    }

    \foreach \x in {-22,...,8} \foreach \y in {16,...,23}
    {
          \path[fill=black!7,draw=black] (\x,\y) rectangle ++ (1,1);
    }

    \foreach \x in {-22,...,-1} \foreach \y in {0,...,7}
    {
          \path[fill=black!7,draw=black] (\x,\y) rectangle ++ (1,1);
    }

     
    \foreach \x in {0,...,8} \foreach \y in {0,...,7}
    {
          \path[fill=black!7,draw=black] (\x,\y) rectangle ++ (1,1);
    }

 \edef\y{0}
 \edef\xmax{3}
 \foreach \x in {0,...,\xmax}
 \path[fill=black!56,draw=black] (\x,\y) rectangle ++ (1,1);

 \edef\xmin{-19}  
 \foreach \x in {\xmin,...,-1} 
 \path[fill=black!66,draw=black] (\x,\y) rectangle ++ (1,1);
 \edef\xmaxtop{4}  
 \edef\ytop{23}  
 \foreach \x in {\xmin,...,\xmaxtop} 
 \path[fill=red!41,draw=black] (\x,\ytop) rectangle ++ (1,1);

 \edef\y{1}
 \edef\xmax{1}
 \foreach \x in {0,...,\xmax}
 \path[fill=black!56,draw=black] (\x,\y) rectangle ++ (1,1);
  \path[draw=black,line width=0.5mm,dashed] (0,\y) rectangle ++ (3,1);

 \edef\xmin{-18}  
 \foreach \x in {\xmin,...,-1} 
 \path[fill=black!66,draw=black] (\x,\y) rectangle ++ (1,1);
 \edef\xmaxtop{3}  
 \edef\ytop{22}  
 \foreach \x in {\xmin,...,\xmaxtop} 
 \path[fill=red!41,draw=black] (\x,\ytop) rectangle ++ (1,1);

 \edef\y{2}
 \edef\xmax{2}
 \foreach \x in {0,...,\xmax}
 \path[fill=black!56,draw=black] (\x,\y) rectangle ++ (1,1);
  \path[draw=black,line width=0.5mm,dashed] (0,\y) rectangle ++ (8,1);

\edef\xmin{-11}  
 \foreach \x in {\xmin,...,-1} 
 \path[fill=black!66,draw=black] (\x,\y) rectangle ++ (1,1);
 \edef\xmaxtop{8}  
 \edef\ytop{21}  
 \foreach \x in {\xmin,...,\xmaxtop} 
 \path[fill=red!41,draw=black] (\x,\ytop) rectangle ++ (1,1);

 \edef\y{3}
 \edef\xmax{4}
 \foreach \x in {0,...,\xmax}
 \path[fill=black!56,draw=black] (\x,\y) rectangle ++ (1,1);

\edef\xmin{-12}  
 \foreach \x in {\xmin,...,-1} 
 \path[fill=black!66,draw=black] (\x,\y) rectangle ++ (1,1);
 \edef\xmaxtop{5}  
 \edef\ytop{20}  
 \foreach \x in {\xmin,...,\xmaxtop} 
 \path[fill=red!41,draw=black] (\x,\ytop) rectangle ++ (1,1);

 \edef\y{4}
 \edef\xmax{1}
 \foreach \x in {0,...,\xmax}
 \path[fill=black!56,draw=black] (\x,\y) rectangle ++ (1,1);
  \path[draw=black,line width=0.5mm,dashed] (0,\y) rectangle ++ (4,1);

\edef\xmin{-11}  
 \foreach \x in {\xmin,...,-1} 
 \path[fill=black!66,draw=black] (\x,\y) rectangle ++ (1,1);
 \edef\xmaxtop{4}  
 \edef\ytop{19}  
 \foreach \x in {\xmin,...,\xmaxtop} 
 \path[fill=red!41,draw=black] (\x,\ytop) rectangle ++ (1,1);

 \edef\y{5}
 \edef\xmax{1}
 \foreach \x in {0,...,\xmax}
 \path[fill=black!56,draw=black] (\x,\y) rectangle ++ (1,1);

\edef\xmin{-11}  
 \foreach \x in {\xmin,...,-1} 
 \path[fill=black!66,draw=black] (\x,\y) rectangle ++ (1,1);
 \edef\xmaxtop{2}  
 \edef\ytop{18}  
 \foreach \x in {\xmin,...,\xmaxtop} 
 \path[fill=red!41,draw=black] (\x,\ytop) rectangle ++ (1,1);

 \edef\y{6}
 \edef\xmax{3}
 \foreach \x in {0,...,\xmax}
 \path[fill=black!56,draw=black] (\x,\y) rectangle ++ (1,1);

\edef\xmin{-7}  
 \foreach \x in {\xmin,...,-1} 
 \path[fill=black!66,draw=black] (\x,\y) rectangle ++ (1,1);
 \edef\xmaxtop{4}  
 \edef\ytop{17}  
 \foreach \x in {\xmin,...,\xmaxtop} 
 \path[fill=red!41,draw=black] (\x,\ytop) rectangle ++ (1,1);

 \edef\y{7}
 \edef\xmax{3}
 \foreach \x in {0,...,\xmax}
 \path[fill=black!56,draw=black] (\x,\y) rectangle ++ (1,1);
  \path[draw=black,line width=0.5mm,dashed] (0,\y) rectangle ++ (6,1);

\edef\xmin{-3}  
 \foreach \x in {\xmin,...,-1} 
 \path[fill=black!66,draw=black] (\x,\y) rectangle ++ (1,1);
 \edef\xmaxtop{6}  
 \edef\ytop{16}  
 \foreach \x in {\xmin,...,\xmaxtop} 
 \path[fill=red!41,draw=black] (\x,\ytop) rectangle ++ (1,1);


 \path[draw=black,line width=0.7mm] (0,0) rectangle ++ (8,8);



 \draw[-latex,ultra thick,black!95](15,-6)node[below]{this $n\times n$ pattern $P'$ is compatible with the neighborhood}
       to[out=180,in=-90] (5.5,1.5);

\end{tikzpicture}
\end{center}
\caption{A pattern $P'$ with ${\cal E}(P')\preccurlyeq {\cal E}(P)$ matches the neighborhood.} \label{f:012-enforcing-ex1}
\vspace{10pt}
\end{figure}

\bigskip

\begin{figure}[H]
\vspace{-5pt}
\begin{center}
  \begin{tikzpicture}[scale=0.24,x=1cm,baseline=2.125cm]

    \foreach \x in {-22,...,8} \foreach \y in {8,...,15}
    {
          \path[fill=black!7,draw=black] (\x,\y) rectangle ++ (1,1);
    }

    \foreach \x in {-22,...,8} \foreach \y in {16,...,23}
    {
          \path[fill=black!7,draw=black] (\x,\y) rectangle ++ (1,1);
    }

    \foreach \x in {-22,...,-1} \foreach \y in {0,...,7}
    {
          \path[fill=black!7,draw=black] (\x,\y) rectangle ++ (1,1);
    }

     
    \foreach \x in {0,...,8} \foreach \y in {0,...,7}
    {
          \path[fill=black!7,draw=black] (\x,\y) rectangle ++ (1,1);
    }

 \edef\y{0}
 \edef\xmax{3}
 \foreach \x in {0,...,\xmax}
 \path[fill=black!56,draw=black] (\x,\y) rectangle ++ (1,1);

 \edef\xmin{-19}  
 \foreach \x in {\xmin,...,-1} 
 \path[fill=black!66,draw=black] (\x,\y) rectangle ++ (1,1);
 \edef\xmaxtop{4}  
 \edef\ytop{23}  
 \foreach \x in {\xmin,...,\xmaxtop} 
 \path[fill=red!41,draw=black] (\x,\ytop) rectangle ++ (1,1);

 \edef\y{1}
 \edef\xmax{2}
 \foreach \x in {0,...,\xmax}
 \path[fill=black!56,draw=black] (\x,\y) rectangle ++ (1,1);

 \edef\xmin{-18}  
 \foreach \x in {\xmin,...,-1} 
 \path[fill=black!66,draw=black] (\x,\y) rectangle ++ (1,1);
 \edef\xmaxtop{3}  
 \edef\ytop{22}  
 \foreach \x in {\xmin,...,\xmaxtop} 
 \path[fill=red!41,draw=black] (\x,\ytop) rectangle ++ (1,1);

 \edef\y{2}
 \edef\xmax{7}
 \foreach \x in {0,...,\xmax}
 \path[fill=black!56,draw=black] (\x,\y) rectangle ++ (1,1);

\edef\xmin{-11}  
 \foreach \x in {\xmin,...,-1} 
 \path[fill=black!66,draw=black] (\x,\y) rectangle ++ (1,1);
 \edef\xmaxtop{8}  
 \edef\ytop{21}  
 \foreach \x in {\xmin,...,\xmaxtop} 
 \path[fill=red!41,draw=black] (\x,\ytop) rectangle ++ (1,1);

 \edef\y{3}
 \edef\xmax{4}
 \foreach \x in {0,...,\xmax}
 \path[fill=black!56,draw=black] (\x,\y) rectangle ++ (1,1);

\edef\xmin{-12}  
 \foreach \x in {\xmin,...,-1} 
 \path[fill=black!66,draw=black] (\x,\y) rectangle ++ (1,1);
 \edef\xmaxtop{5}  
 \edef\ytop{20}  
 \foreach \x in {\xmin,...,\xmaxtop} 
 \path[fill=red!41,draw=black] (\x,\ytop) rectangle ++ (1,1);

 \edef\y{4}
 \edef\xmax{3}
 \foreach \x in {0,...,\xmax}
 \path[fill=black!56,draw=black] (\x,\y) rectangle ++ (1,1);

\edef\xmin{-11}  
 \foreach \x in {\xmin,...,-1} 
 \path[fill=black!66,draw=black] (\x,\y) rectangle ++ (1,1);
 \edef\xmaxtop{4}  
 \edef\ytop{19}  
 \foreach \x in {\xmin,...,\xmaxtop} 
 \path[fill=red!41,draw=black] (\x,\ytop) rectangle ++ (1,1);

 \edef\y{5}
 \edef\xmax{1}
 \foreach \x in {0,...,\xmax}
 \path[fill=black!56,draw=black] (\x,\y) rectangle ++ (1,1);

\edef\xmin{-11}  
 \foreach \x in {\xmin,...,-1} 
 \path[fill=black!66,draw=black] (\x,\y) rectangle ++ (1,1);
 \edef\xmaxtop{2}  
 \edef\ytop{18}  
 \foreach \x in {\xmin,...,\xmaxtop} 
 \path[fill=red!41,draw=black] (\x,\ytop) rectangle ++ (1,1);

 \edef\y{6}
 \edef\xmax{3}
 \foreach \x in {0,...,\xmax}
 \path[fill=black!56,draw=black] (\x,\y) rectangle ++ (1,1);

\edef\xmin{-7}  
 \foreach \x in {\xmin,...,-1} 
 \path[fill=black!66,draw=black] (\x,\y) rectangle ++ (1,1);
 \edef\xmaxtop{4}  
 \edef\ytop{17}  
 \foreach \x in {\xmin,...,\xmaxtop} 
 \path[fill=red!41,draw=black] (\x,\ytop) rectangle ++ (1,1);

 \edef\y{7}
 \edef\xmax{5}
 \foreach \x in {0,...,\xmax}
 \path[fill=black!56,draw=black] (\x,\y) rectangle ++ (1,1);

\edef\xmin{-3}  
 \foreach \x in {\xmin,...,-1} 
 \path[fill=black!66,draw=black] (\x,\y) rectangle ++ (1,1);
 \edef\xmaxtop{6}  
 \edef\ytop{16}  
 \foreach \x in {\xmin,...,\xmaxtop} 
 \path[fill=red!41,draw=black] (\x,\ytop) rectangle ++ (1,1);


 \path[draw=black,line width=0.7mm] (0,0) rectangle ++ (8,8);



 \path[draw=blue,line width=0.9mm,dashed] (-12,3) rectangle ++ (18,18);
 \path[fill=black!66,draw=black,line width=0.4mm] (5,3) rectangle ++ (1,1);

 \draw[-latex,ultra thick,blue!80](12,8)node[right]{by adding one supplementary}
       to[out=-90,in=0] (5.5,3.5);
\node[right,blue!80] at (12,6.5){ black letter we get a forbidden};
\node[right,blue!80] at (12,5.0){pattern };

 \draw[-latex,ultra thick,black!95](15,-6)node[below]{this $n\times n$ pattern $P''$ is  incompatible with the neighborhood}
       to[out=180,in=-90] (5.5,1.5);

\end{tikzpicture}
\end{center}
\caption{A pattern $P''$ with ${\cal E}(P'')\not\preccurlyeq {\cal E}(P)$ does not match the neighborhood.} \label{f:012-enforcing-ex2}
\vspace{-5pt}
\end{figure}

\emph{Proof of Claim~2:}
If $R$ is compatible with an $n\times n$ pattern $P'$, the profile of  $P'$ is not determined uniquely. 
In fact, $R$ can be  compatible with simple patterns $P'$ 
whose profiles are \emph{strictly less} than the profile of $P$ (in each row of $P'$ the number of black letters must be not greater than the number of black letters in the corresponding row of $P$), see Fig.~\ref{f:012-enforcing-ex1}. On the other hand, if at least one row of $P'$ contains more black letters that the same row in $P$, than $P'$ and $R$ are incompatible, i.e., the joint of $P'$ and $R$ contains   a forbidden pattern, as shown in  Fig.~\ref{f:012-enforcing-ex2}.
\qed

\smallskip

The lemma follows from Claim~1 and Claim~2. For a more detailed argument we refer the reader to~\cite{kass-madden}.

\begin{remark}
In the construction discussed above, pattern $R$ does not determine uniquely the epitomes of $P'$ compatible with $R$ (these epitomes can be different, though they all must be \emph{not greater} than the epitome of the initial pattern $P$). This is why we cannot apply Proposition~\ref{p:epitomes}, and we have to use  Proposition~\ref{pbis:epitomes}.
\end{remark}

\end{document}